\documentclass{article}

\usepackage{microtype}
\usepackage{graphicx}
\usepackage{subcaption}
\usepackage{booktabs} 

\usepackage{hyperref}

\usepackage[accepted]{icml2026}

\usepackage{amsmath}
\usepackage{amssymb}
\usepackage{mathtools}
\usepackage{amsthm}

\usepackage[capitalize, noabbrev]{cleveref}

\theoremstyle{plain}
\newtheorem{theorem}{Theorem}[section]
\newtheorem{proposition}[theorem]{Proposition}

\theoremstyle{definition}
\newtheorem{definition}[theorem]{Definition}

\theoremstyle{remark}

\usepackage[textsize=tiny]{todonotes}

\usepackage{longtable}
\usepackage{appendix}
\usepackage{multirow}
\usepackage{amsmath}
\usepackage{bm}
\usepackage{booktabs}
\usepackage{amssymb}
\usepackage{graphicx}
\usepackage{subcaption}
\usepackage{adjustbox}
\usepackage{dsfont}
\usepackage{xcolor}
\usepackage[dvipsnames]{xcolor}
\definecolor{navyblue}{RGB}{0,48,96} 
\usepackage{amsthm}
\usepackage{enumitem}
\usepackage{wrapfig}
\usepackage{framed}
\usepackage{mdframed}
\usepackage{listings}
\usepackage[most]{tcolorbox}
\tcbuselibrary{breakable}
\usepackage{booktabs}
\usepackage{siunitx} 
\usepackage{xurl}

\hypersetup{colorlinks=true, allcolors={purple}, linkbordercolor={white},}

\usepackage{minitoc}

\icmltitlerunning{Gossip-Driven Indirect Reciprocity in Self-Interested LLM Agents}

\begin{document}
    \twocolumn[ \icmltitle{Talk, Judge, Cooperate: Gossip-Driven Indirect Reciprocity in Self-Interested LLM Agents}



    \icmlsetsymbol{equal}{*}

    \begin{icmlauthorlist}
        \icmlauthor{Shuhui Zhu}{uw,vector} \icmlauthor{Yue Lin}{cuhk} \icmlauthor{Shriya Kaistha}{uw}
        \icmlauthor{Wenhao Li}{tj} \icmlauthor{Baoxiang Wang}{cuhk} \icmlauthor{Hongyuan Zha}{cuhk}
        \icmlauthor{Gillian K. Hadfield}{jhu} \icmlauthor{Pascal Poupart}{uw,vector}
    \end{icmlauthorlist}

    \icmlaffiliation{uw}{University of Waterloo} \icmlaffiliation{vector}{Vector Institute}
    \icmlaffiliation{cuhk}{The Chinese University of Hong Kong, Shenzhen}
    \icmlaffiliation{jhu}{Johns Hopkins University} \icmlaffiliation{tj}{Tongji University}
    \icmlcorrespondingauthor{Shuhui Zhu}{shuhui.zhu@uwaterloo.ca}



    \vskip 0.3in ]



    \printAffiliationsAndNotice{} 

    \begin{abstract}
Indirect reciprocity, which means helping those who have helped others, is difficult to sustain among decentralized, self-interested LLM agents without reliable reputation systems. We address this challenge with the Agentic Linguistic Gossip Network (ALIGN), an automated framework that enables decentralized agents to form reputations, evaluate trustworthiness, and coordinate social norms by strategically sharing open-ended gossip with hierarchical tones. We demonstrate that ALIGN consistently improves indirect reciprocity and resists malicious entrants by identifying and ostracizing defectors. Notably, we find that stronger
reasoning capabilities in LLMs lead to more incentive-aligned cooperation,
whereas chat models often over-cooperate even when strategically
suboptimal. These results suggest that leveraging LLM reasoning through decentralized
gossip is a promising path for maintaining social welfare in agentic ecosystems.
Our code is available at \url{https://github.com/shuhui-zhu/ALIGN}.

    \end{abstract}

    \section{Introduction}
    As LLM agents become increasingly deployed, they will inevitably interact
    with one another across diverse domains in decentralized systems. However,
    even if each agent is perfectly aligned with own interests, interactions among
    mixed-motive agents can still lead to unexpected cooperation failure and potentially
    harmful outcomes~\citep{hammond2025multi}. One common reason is the conflict
    between self-interest and collective welfare. For instance, in public-goods games~\citep{samuelson1954pure},
    agents benefit when everyone contributes to the public pool, however, driven
    by self-interest, agents become free-riders for immediate gains, reducing the
    collective welfare~\citep{piedrahita2025corrupted}. This problem becomes severer
    in large-scale decentralized systems where repeated interactions with the same
    partners are not guaranteed. Because agents cannot directly monitor others' behaviors
    and repay past help from others, cooperation cannot be established via \emph{direct
    reciprocity}~\citep{trivers1971evolution} (i.e., I help you because you helped
    me).

    To mitigate such cooperation failure among LLM agents, a mechanism designed for
    \emph{indirect reciprocity}~\citep{nowak1998dynamics,nowak1998evolution} is needed
    (i.e., I help you because you helped others). This often requires a
    reputation system that allows agents to evaluate others' trustworthiness without
    direct encounters. Classic reputation models such as image score~\citep{nowak1998evolution}
    and the leading eight social norms~\citep{ohtsuki2006leading} have been
    proved to enable indirect reciprocity in repeated donation games~\citep{nowak1998dynamics}.
    However, these approaches rely on static, game-tied norms and assume
    centralized monitoring, which limits their applicability to decentralized systems.
    Although recent works incorporate reputation mechanisms into LLM agents, they
    either seed altruistic agents~\citep{ren2025beyond} or grant centralized
    access to others' interactions~\citep{vallinder2024cultural}. Therefore, it
    is still an open question \emph{whether and how self-interested LLM agents
    can sustain indirect reciprocity in mixed-motive tasks without compromising
    the decentralized self-interested setting}.

    We examine whether \emph{public gossip} can be a cure to this indirect
    reciprocity failure. Unlike centralized reputation systems that require a trusted
    authority, gossip enables decentralized agents to strategically transmit
    reputation, allowing others to update their beliefs and strategies accordingly.
    For example, online reviews or word-of-mouth recommendations often guide individuals'
    decisions in the absence of direct experience. Evidence from social science has
    also shown that gossip can facilitate cooperation and norm enforcement in human
    societies~\citep{giardini2019oxford,wu2016gossip,jolly2021gossip,eriksson2021perceptions,wiessner2005norm}.
    However, a research gap still remains: it is unclear whether and how gossip can
    promote indirect reciprocity among decentralized, self-interested LLM agents.

    To address this gap, we introduce Agentic LInguistic Gossip Network (ALIGN),
    an automated framework where LLM agents leverage a public gossip protocol
    for evaluating others' trustworthiness, coordinating norms and guiding
    actions. To achieve indirect reciprocity, ALIGN equips each agent with two
    LLM modules: (1) a gossip module, through which witnesses broadcast evaluative
    messages about observed actions, shaping reputation and norms within the community;
    and (2) an action module, which conditions actions on public gossip logs with
    the agent's own experience memory. To enable norm coordination, ALIGN's gossip
    protocol provides strategic verbal evaluation in five hierarchical tones (Figure~\ref{fig:gossip_tones}).
    Through this protocol, gossip carries not only the action information but also
    the personal judgement of the witness. This helps the community share and
    align on expectations about acceptable behavior. Additionally, hierarchical
    gossip can serve as a cost-free verbal punishment~\citep{wiessner2005norm}:
    more negative tones indicate potential disapproval and future ostracism, increasing
    the perceived cost of defection and thereby strengthening incentives to cooperate.

    The advantages of ALIGN are three-fold. (1) ALIGN can adapt to various mixed-motive
    tasks, whereas classic reputation models rely on static, game-specific norms~\citep{smit2024learning,ohtsuki2006leading,nowak1998dynamics,nowak1998evolution};
    (2) ALIGN adopts a more realistic setting of fully decentralized, self-interested
    agents, unlike recent approaches that introduce altruism~\citep{ren2025beyond,fang2025evolution,hughes2025modeling}
    or assume centralized control~\citep{vallinder2024cultural}. (3) ALIGN uses an
    open-ended, hierarchical gossip protocol that conveys evaluative and
    normative signals and can work with noisy or strategically framed language; by
    contrast, prior work often relies on task-specific channels or hand-crafted
    reputation scores, which limits generality~\citep{hughes2025modeling,al2023reputation}.

    Our experiments evaluate ALIGN across eight LLMs and three types of testbeds: 
(1) matrix games with direct reciprocity disabled, where cooperation can only arise 
through indirect reciprocity~\citep{nowak1998dynamics,ohtsuki2004should}; 
(2) a sequential social dilemma with dynamic states and continuous actions; and 
(3) a transaction market that maps to e-commerce applications.
Benchmark results show that ALIGN consistently improves indirect reciprocity and 
collective welfare over non-gossiping baselines. Robustness tests further show that 
ALIGN resists both exploitative agents and collusive gossipers: persistent defectors 
are identified and gradually ostracized through negative gossip, preserving 
community-level cooperation. ALIGN also remains robust to uninformative or untruthful 
gossip, as agents can cross-validate public gossip against their own experience to 
filter unreliable messages. Additionally, ablations show that binary reputation 
signals cannot fully replace open-ended hierarchical gossip, which provides richer 
normative context for sustaining cooperation.
These results position ALIGN as an adaptive mechanism for the emergence of 
cooperation in decentralized LLM societies, bridging theoretical analyses of indirect 
reciprocity with practical agentic implementations.

    Contrary to prior work~\citep{piedrahita2025corrupted}, which suggests that
    stronger reasoning leads to less cooperative behavior in social dilemmas, our empirical analysis reveals a more nuanced pattern. Reasoning-focused LLMs are not inherently less cooperative, but more
    sensitive to the incentive structure: they withhold cooperation when future reciprocity is
    unsupported and cooperate when long-term reputation makes cooperation beneficial.
    By contrast, chat LLMs may irrationally
    over-cooperate even when it is strategically suboptimal. These insights offer
    guidance for the future design of cooperative mechanisms. As reasoning-focused
    LLMs grow more powerful and widely deployed, ensuring their interactions remain
    beneficial in decentralized societies is important for AI safety.

    \section{Related Work}
    Below, we review literature on indirect reciprocity in game theory and in
    LLM-agent settings. Additional related work are discussed in Appendix~\ref{app:related_works}.
    \subsection{Reciprocal Altruism}
    Reciprocal altruism~\citep{trivers1971evolution}, where an agent incurs a
    cost to help another with the expectation of future return, is a powerful mechanism
    for sustaining cooperation in mixed-motive interactions. \emph{Direct
    reciprocity}~\citep{trivers1971evolution} arises when the same pair of
    agents interacts repeatedly. For example, in the infinite-horizon prisoner's
    dilemma~\citep{rapoport1965prisoner}, conditional commitment strategies such
    as Tit-for-Tat~\citep{axelrod1980effective} (cooperating initially and then
    mirroring the partner's previous action) can stabilize mutual cooperation~\citep{zhu2025learning}.
    Rather than being restricted to repeated encounters, \emph{indirect
    reciprocity} ~\citep{ohtsuki2006leading,ohtsuki2004should,nowak1998evolution,nowak1998dynamics}
    generalizes cooperation to large, dynamic populations, where agents help those
    known to have helped others. Therefore, to achieve indirect reciprocity, the
    reputation of everyone needs to be continually assessed and shared in the
    population. Classic models of indirect reciprocity include first-order \emph{image
    scoring}~\citep{nowak1998dynamics}, where an agent's reputation is
    determined solely by its own past actions (e.g., cooperation is good and defection
    is bad), and second-order norms~\citep{ohtsuki2006leading}, where reputation
    also depends on the coplayer's reputation (e.g., cooperation with a defector
    is bad). These models, however, focus on static norms and behavioral rules, and
    often assume centralized monitoring, which limits their applicability to
    decentralized systems.
    \begin{figure*}[t]
        \centering
        \begin{subfigure}
            [b]{0.25\textwidth}
            \centering
            \includegraphics[width=\textwidth]{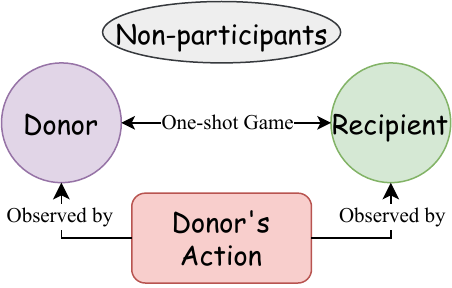}
            \caption{Private monitoring}
            \label{fig:private_monitoring}
        \end{subfigure}
        \hfill
        \begin{subfigure}
            [b]{0.27\textwidth}
            \centering
            \includegraphics[width=\textwidth]{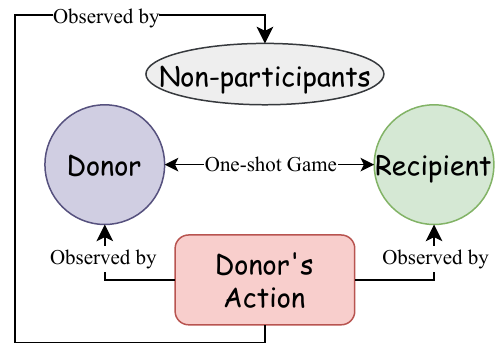}
            \caption{Perfect public monitoring}
            \label{fig:public_perfect}
        \end{subfigure}
        \hfill
        \begin{subfigure}
            [b]{0.46\textwidth}
            \centering
            \includegraphics[width=\textwidth]{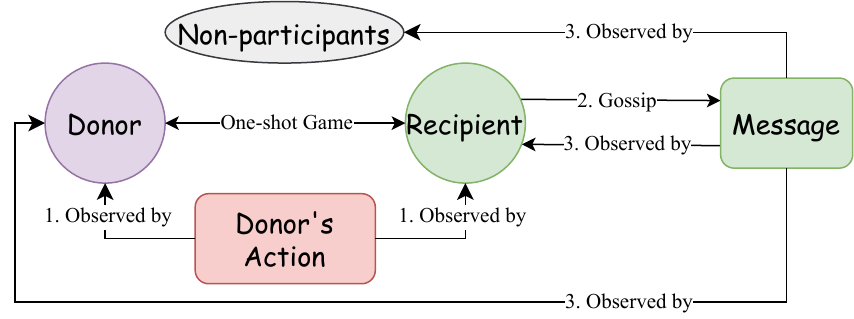}
            \caption{Imperfect public monitoring via gossip}
            \label{fig:public_imperfect}
        \end{subfigure}
        \caption{Illustration of three monitoring structures: (a) \textbf{Private
        monitoring}, only the donor and recipient observe the donor's action; (b)
        \textbf{Perfect public monitoring}, all agents observe the donor's
        action; (c) \textbf{Imperfect public monitoring via gossip}, only the donor
        and recipient observe the action, and all agents observe the public
        signal broadcast by the recipient.}
        \label{fig:monitoring_structures}
    \end{figure*}
    \vspace{-1mm}
    \subsection{LLM Agents for Indirect Reciprocity}
    LLM agents are increasingly deployed to study cooperation in mixed-motive multi-agent
    settings~\citep{ren2025beyond,kempinski2025game,piedrahita2025corrupted,willis2025will,piatti2024cooperate,vallinder2024cultural,park2023generative,leng2023llm}.
    While these studies show that LLMs may cooperate in social interactions,
    most literature focuses on emergent cooperation and rarely examines indirect
    reciprocity as a cooperative mechanism. The most relevant works to our study
    investigate reputation systems among LLM agents~\citep{ren2025beyond,vallinder2024cultural}.
    However, they either seed altruistic agents~\citep{ren2025beyond}, or assume
    centralized observability while testing on finite-horizon social dilemmas
    where cooperation is not an equilibrium~\citep{vallinder2024cultural}. This
    leaves a research gap in mechanism design for indirect reciprocity among
    decentralized, fully self-interested agents.
    \section{Game-Theoretic Setup and Propositions}
    \label{sec:game_theoretic_setup} We provide a game-theoretic analysis of
    when and how public gossip supports the emergence of indirect reciprocity. In
    particular, we focus on the repeated donation game (Definition~\ref{def:donation_game})~\citep{nowak1998dynamics,nowak1998evolution},
    a well-known testbed for studying indirect reciprocity.
    \begin{definition}[Repeated Donation Game]
        \label{def:donation_game} A repeated donation game is a tuple $\mathcal{G}
        =(\mathcal{N}, T, \mathcal{A}, (\mathcal{O}_{i})_{i\in \mathcal{N}},e,c,b
        ,\gamma)$, where $\mathcal{N}$ is the set of agents, $T\in\mathbb{N}\cup
        \{ \infty\}$ is the game horizon,
        $\mathcal{A}=\{\mathrm{cooperate},\,\mathrm{defect}\}$ is the action
        space, $\mathcal{O}_{i}$ is the observation space of agent $i \in \mathcal{N}$,
        $e\in \mathbb{R}^{+}$ is the common initial endowment for every agent
        $i\in\mathcal{N}$, $c>0$ is the cost of cooperation to the donor, $b >c$
        is the benefit of cooperation to the recipient, $\gamma \in(0,1]$ is the
        discount factor. At each timestep $t=1,\dots,T$, two agents are \textbf{randomly
        paired without replacement}, one is assigned as the donor $i\in \mathcal{N}$
        and the other as the recipient $j\in\mathcal{N}\setminus \{i\}$. After
        observation of remaining resources, the donor chooses $a_{i}^{t}\in \mathcal{A}$.
        If $a_{i}^{t}=\mathrm{cooperate}$, then immediate rewards are $r_{i}^{t}=
        -c$ and $r_{j}^{t}= b$; otherwise, $r_{i}^{t}= r_{j}^{t}= 0$. After each
        timestep, the donor $i$ and recipient $j$ are required to switch roles in
        the subsequent round, with $i$ acting as the recipient and $j$ as the
        donor. \textbf{Subject to this role-switching constraint, agents are then
        randomly re-matched with new partners.}
    \end{definition}
    The strategy of each agent $i \in \mathcal{N}$ is represented by its action policy
    $\pi_{i}: \mathcal{O}_{i}\mapsto \mathcal{A}$, which maps the agent's
    observation to an action. Each agent's objective is to maximize its expected
    discounted utility over the horizon, defined as
    $G_{i}=\sum_{t=1}^{T}\gamma^{\,t-1}r_{i}^{t}, \forall i \in \mathcal{N}$. This
    repeated donation game creates a social dilemma: donating increases
    collective welfare as $b > c$, but incurs an immediate personal cost to the donor.
    Note that in each round, two agents are randomly matched to play a one-shot donation
    game, and no pair can meet more than once. Because the recipient cannot repay
    the donor in future encounters, direct reciprocity is disabled. Therefore,
    self-interested agents will cooperate only if they build indirect
    reciprocity within the community (i.e., a donor cooperates with a recipient if
    the recipient is likely to cooperates with others).
    \subsection{Equilibrium Analysis}
    \label{sec:equilibrium_analysis} We now analyze the existence of equilibria in
    the repeated donation game under different assumptions about the game horizon
    and monitoring structure. We consider two horizon settings: (i) \emph{finite
    horizon}, where the game lasts a known number of rounds ($T < \infty$); and
    (ii) \emph{infinite horizon}, where the game continues indefinitely ($T = \infty$).
    We also examine three monitoring structures (Figure~\ref{fig:monitoring_structures}):
    (i) \emph{private monitoring}, where only the paired donor and recipient
    observe the donor's action; (ii) \emph{perfect public monitoring}, where the
    full history of actions is publicly observed; and (iii) \emph{imperfect
    public monitoring via gossip}, where only participants observe the action
    directly, but the recipient broadcasts a public signal about the donor's behavior.
    Assuming all agents are self-interested, we summarize our main propositions below,
    detailed proofs are provided in Appendix~\ref{sec:proofs_propositions}.

    \paragraph{Finite Horizon.}
    We first consider the finite-horizon case. As stated in Proposition~\ref{prop_fh},
    mutual defection is the unique subgame-perfect equilibrium (SPE), even with
    perfect public monitoring. This aligns with the classical backward induction
    result in finitely repeated games~\citep{benoit1984finitely}, where the last
    round's dominant strategy of defection unravels cooperation in all preceding
    rounds. Therefore, cooperation cannot be sustained among self-interested agents
    in finite-horizon repeated donation games.
    \begin{proposition}
        \label{prop_fh} In a finite-horizon repeated donation game, the unique
        SPE for all agents is to defect in every timestep.
    \end{proposition}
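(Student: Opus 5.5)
We argue by backward induction on the timestep, as in the classical finitely repeated games argument of \citet{benoit1984finitely}. The feature of Definition~\ref{def:donation_game} we exploit is that the matching and role assignment are exogenous and random: they do not depend on any agent's actions. A donor's choice at $t$ therefore changes only the immediate payoffs ($-c$ to the donor, $b$ to the recipient) and the information available to others. It never changes who meets whom or the set of feasible actions later. We will show by downward induction on $t=T,T-1,\dots,1$ that in every SPE, at every history reaching timestep $t$, the donor defects at $t$ and at all later timesteps. The claim holds for all monitoring structures (private, perfect public, or gossip), since only the exogeneity of matching is used.

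\emph{Base case ($t=T$).} At any history of the last round, the donor's continuation payoff is just the immediate reward. Cooperating gives $-c<0$ and defecting gives $0$, so defection is strictly dominant. The recipient has no action. Hence every SPE prescribes defection at every final-round history, and every non-donor's continuation payoff from round $T$ onward is $0$, whatever happened earlier.

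\emph{Inductive step.} Suppose that in every SPE, at every history of length $\ge t$, all future donors defect. Consider the donor $i$ at any history reaching round $t-1$. By the hypothesis, the continuation play from round $t$ onward gives $i$ a payoff of exactly $0$, and this does not depend on $i$'s action at $t-1$. It also does not depend on any signal, gossip, or reputation that action generates. So $i$'s continuation value at $t-1$ is $-c$ under cooperation and $0$ under defection, and defection is strictly better. Sequential rationality (the one-shot deviation principle, which applies because the horizon is finite) then forces defection at $t-1$. This completes the induction.

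\emph{Existence.} We still need that the all-defect profile really is an SPE, and we check this with the one-shot deviation principle. At any subgame, a unilateral switch to cooperate changes the deviator's payoff by $-\gamma^{t-1}c<0$ and leaves all future play unchanged, so no deviation is profitable. Existence together with the uniqueness from the induction gives the proposition.

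The main obstacle is to make rigorous that continuation payoffs cannot depend on the current action. In principle strategies may condition on histories, including observed actions or gossip, so we must rule out that cooperating at $t-1$ buys a reward later. The induction hypothesis handles this because it is stated for \emph{every} history of length $\ge t$, including off-path ones. Exogenous random matching ensures that the expectation over future pairings does not undermine this. In the proof we would spell out both points carefully, together with the fact that in a mixed strategy the cooperation probability must be $0$ because the inequality between the two continuation values is strict.
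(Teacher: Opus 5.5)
Your proposal is correct and follows essentially the same argument as the paper. Both use backward induction from $t=T$, where defection is strictly dominant, and the key step is that once every later donor defects, the donor's continuation value does not depend on the current action, so the comparison reduces to $0$ versus $-c$. Your explicit existence check, your handling of off-path histories and mixed strategies, and your remark that only the exogeneity of matching is used are welcome additions that the paper leaves implicit. One small caveat: under gossip, ``unique'' should be read as applying only to the action component, since payoff-irrelevant messages can be arbitrary in equilibrium.
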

    \paragraph{Infinite Horizon with Private Monitoring.}
    We next consider the infinite-horizon game under private monitoring. As
    stated in Proposition~\ref{prop_ifh_nomonitor}, cooperation still fails:
    without a public monitoring mechanism, each agent optimizes utility in isolation.
    Because donation is personally costly and lacks guaranteed return, defection
    remains the dominant strategy.
    \begin{proposition}
        \label{prop_ifh_nomonitor} In an infinite-horizon repeated donation game
        with private monitoring, the unique SPE is for all agents to defect in
        every timestep.
    \end{proposition}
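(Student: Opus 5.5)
The argument rests on one observation: under private monitoring, a donor's action never influences the future play it faces. I will show that, given any profile of opponents' strategies, the donor's choice at a history cannot change the continuation payoff. Every SPE must then prescribe the myopic best response, which is to defect.

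\textbf{Step 1: informational structure.} Under private monitoring, only donor $i$ and recipient $j$ observe $a_i^t$. By Definition~\ref{def:donation_game}, random re-matching without repeated pairs means $j$ never again meets $i$. So $j$ never again acts as a donor toward $i$. I will make this precise as follows. For every future timestep $s>t$ in which $i$ is a recipient, the donor $k$ is an agent with $k\neq j$ whose information set does not contain $a_i^t$. Hence $k$'s behavior toward $i$ is measurable with respect to information independent of $a_i^t$.

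\textbf{Step 2: where the real difficulty lies.} The naive step fails: $j$ knows $a_i^t$ and may condition its later donations to third parties $k$ on it. Those actions are observed by $k$, and $k$ may later meet $i$, so information can propagate indirectly. I handle this by noting that $k$ observes only $j$'s action toward $k$. A strategy of $k$ that conditioned on such actions would itself be a rule with no payoff consequence for $k$'s own later gains. So I will argue by backward reasoning on the structure of incentives rather than on information alone.

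\textbf{Step 3: one-shot deviation argument.} Fix any candidate SPE $\sigma$ and any history $h$ at which some donor $i$ cooperates with positive probability. Consider the one-shot deviation to defect at $h$. Its immediate gain is $c>0$. Its effect on $i$'s continuation value runs only through how other agents' future donations to $i$ respond to signals correlated with $a_i^t$. I will show by strong induction that in any SPE no agent's donation decision depends on such signals. Agent $k$'s own payoff from donating to $i$ is $-c$ plus a continuation value. By the same argument, that continuation value is unaffected by $k$'s action, since $k$'s action is again only privately observed by $k$ and $i$. So $k$ strictly prefers to defect regardless of any signal. The induction is over the "depth" of dependence, and I expect it to be the main obstacle, because the horizon is infinite and there is no last period to anchor it. The plan is to replace backward induction with a fixed-point statement. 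Let $\Delta$ be the supremum, over agents and histories, of the difference in an agent's continuation value induced by changing its own current action. Show that $\Delta\le\gamma\Delta$ using the facts above, which together with $\gamma<1$ gives $\Delta=0$. If $\gamma=1$, I would instead use a limit or bounded-payoff argument, and may need to restrict to $\gamma<1$.

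\textbf{Step 4: conclusion.} With $\Delta=0$, the one-shot deviation principle applies, and it is valid for discounted games with bounded stage payoffs. It implies that cooperating at any history gives strictly less than defecting, since the difference is $-c<0$. So every SPE prescribes defection at every history. Conversely, all-defect is an SPE, because no one-shot deviation to cooperate helps. This establishes both existence and uniqueness, as stated in Proposition~\ref{prop_ifh_nomonitor}.
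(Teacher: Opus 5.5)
You correctly identify in Step~2 the real difficulty: $j$ can condition its later donations to third parties on $a_i^t$, and those third parties may later donate to $i$. Step~3 does not resolve it. The sentence ``by the same argument, that continuation value is unaffected by $k$'s action, since $k$'s action is again only privately observed by $k$ and $i$'' assumes what is to be proved. Agent $i$ can propagate $k$'s action exactly as $j$ propagated $a_i^t$, and Step~2 itself showed that private observability alone does not make continuation values independent of one's action.

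The inequality $\Delta\le\gamma\Delta$ is then asserted without derivation, and nothing produces it. A single changed action can switch the behavior of a whole chain of later donors toward $i$, and each switch costs $i$ a full $b$. So the continuation effect is bounded only by the value of all of $i$'s future benefits, of order $\gamma b/(1-\gamma^2)$, not by $\gamma\Delta$. What your incentive reasoning actually yields is a conditional: if $\Delta<c$, then every donor strictly defects at every history, hence $\Delta=0$. Nothing gives you $\Delta<c$ a priori.

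This cannot be patched within your approach, because if strategies may depend on arbitrary private histories the conclusion can fail. Contagion strategies (cooperate until you are defected upon, then defect forever) make $V_i^{t+1}$ depend on $a_i^t$ through exactly the chains $i\to j\to k\to\cdots\to i$ you describe. In closely related random-matching models with private monitoring, such community-enforcement profiles sustain cooperation in equilibrium for patient agents (Kandori 1992; Ellison 1994).

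The paper sidesteps this by assumption. Its proof takes other agents' strategies to depend only on publicly available information, which under private monitoring carries nothing about $a_i^t$. Combined with the no-rematch rule, this gives directly that $V_i^{t+1}$ is independent of $a_i^t$; strict dominance of defection and the one-shot deviation principle then finish. To make your proof valid you need a restriction of this kind, imposed up front rather than derived from incentives. An alternative is a matching structure guaranteeing that information about $a_i^t$ almost surely never returns to $i$, as in a continuum population. You should also restrict to $\gamma<1$, since for $\gamma=1$ the infinite-horizon payoffs are not well defined.
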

    \paragraph{Infinite Horizon with Perfect Public Monitoring.}

    In contrast, under perfect public monitoring, cooperation can be sustained
    in the infinite-horizon setting. Proposition~\ref{prop_ifh_monitor} shows that
    if the discount factor satisfies $\gamma \ge \frac{c}{b}$, then there exists
    an SPE that sustains cooperation. This condition ensures that agents value future
    payoffs sufficiently to make cooperation worthwhile. For example, consider all
    donors cooperate only with recipients who have never defected; otherwise, they
    defect. This strategy creates a credible threat for non-cooperative behavior,
    because defection marks the donor as a defector and triggers subsequent punishment
    by future donors. In this case, no one has an incentive to deviate from cooperation,
    thus indirect reciprocity can be sustained indefinitely through conditional
    strategies based on public histories.

    \begin{proposition}
        \label{prop_ifh_monitor} In an infinite-horizon repeated donation game with
        perfect public monitoring. If the common discount factor satisfies
        $\gamma \ge \frac{c}{b}$, then there exists an SPE where cooperation is sustained
        through conditional strategies based on observed histories.
    \end{proposition}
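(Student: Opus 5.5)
We will prove the proposition by constructing an explicit grim-trigger-style reputation strategy and checking it with the one-shot deviation principle. Since payoffs are discounted, $\gamma<1$ implied by $\gamma \ge c/b$ with $b>c$ is not needed, but we treat $\gamma\in[c/b,1)$ so that utilities are bounded. The case $\gamma=1$ can be handled via limit-of-means or overtaking criteria, which we note but do not dwell on.

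\textbf{The strategy $\sigma^*$.} Under perfect public monitoring every agent observes the full action history, so each agent $j$ can be assigned a publicly computable label, good or bad. Every agent starts good. A donor who defects against a \emph{good} recipient becomes bad, permanently. A donor who defects against a \emph{bad} recipient, which is justified punishment, keeps their label, as does a donor who cooperates. The strategy $\sigma^*$ is: as donor, cooperate if the recipient is good and defect if the recipient is bad. On path everyone stays good and cooperates at every step, so each agent alternates between donor and recipient by the role-switching rule.

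\textbf{Verification.} Because the game is continuous at infinity (discounted, bounded stage payoffs), the one-shot deviation principle applies. It therefore suffices to check, at every public history and for every donor $i$, that a single-period deviation followed by a return to $\sigma^*$ is unprofitable. There are two cases.

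\emph{Case 1: the recipient is bad.} Prescribed defection costs $0$. Deviating to cooperate costs $c$ and does not change $i$'s label, so future play is unchanged. The deviation is unprofitable.

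\emph{Case 2: the recipient is good.} Cooperating costs $c$ today. Defecting saves $c$ but turns $i$ bad, after which no donor ever helps $i$ again. The key timing fact is the role-switching constraint: $i$ is the recipient in the very next round, at discount $\gamma$. If $i$ is good, $i$ receives $b$ then. Against this single-period comparison the condition is $-c+\gamma b\ge 0$, i.e. $\gamma\ge c/b$. The additional future recipient rounds only strengthen the incentive. We will state the exact continuation values. If $i$ is good, its continuation value from a donor node is $V_D=-c+\gamma V_R$ with $V_R=b+\gamma V_D$, which gives $V_D=(\gamma b-c)/(1-\gamma^2)$. If $i$ is bad, its continuation value is $0$ or less. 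Hence $V_D\ge 0$ iff $\gamma\ge c/b$.

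We must also check that being bad is absorbing in a way consistent with $\sigma^*$ off path. A bad agent, as donor, still follows $\sigma^*$: it cooperates with good recipients and pays $c$ without regaining good standing. This is worse for it than defecting, and that would break sequential rationality. We fix this by defining the label so that a bad agent's choice never matters to others. Let the strategy prescribe that bad agents defect, and treat a bad agent's actions as label-irrelevant; equivalently, future donors condition only on whether the recipient has ever defected against a good recipient while good. Then a bad agent's best response is always to defect, which $\sigma^*$ prescribes, and Case 2 applies only to good agents.

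\textbf{Main obstacle.} The delicate step is this off-path consistency. The punishers, who defect on bad recipients, must not themselves be labelled bad, since the naive rule ``cooperate only with those who never defected'' would otherwise punish the punishers. Bad agents must also have no incentive to cooperate. The second-order labelling above, under which defection against a bad recipient is justified, resolves the first issue, and making bad permanent and payoff-irrelevant resolves the second. After that, the inequality $\gamma b\ge c$ closes the argument.
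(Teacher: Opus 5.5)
Your on-path calculation matches the paper's. With alternating roles, cooperation yields $(\gamma b-c)/(1-\gamma^2)$ against $0$ after a defection, so $\gamma\ge c/b$ is the right threshold. The gap is in Case 2 at off-path histories. Subgame perfection requires the one-shot check at \emph{every} public history where a good donor faces a good recipient, including histories where many other agents are already bad. Your value $V_D=(\gamma b-c)/(1-\gamma^2)$ assumes every future partner of $i$ is good and cooperates with $i$. Under your individual-reputation strategy this fails once bad agents exist, because bad donors always defect.

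Concretely, take a history where every agent other than $i$ and $j$ is bad, and $i$ (good) is donor to $j$ (good). Since $i$ never meets $j$ again, every future donor of $i$ is bad and defects on $i$. Every future recipient of $i$ is also bad, so $i$ defects and earns $0$ in every later round whether or not it stays good. Cooperating therefore yields $-c$ and defecting yields $0$, a strictly profitable one-shot deviation for every $\gamma$. So $\sigma^*$ is not an SPE. More generally, the value of staying good shrinks with the fraction of good partners and becomes negative when that fraction is small.

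The repair is what the paper does: make the punishment global. Under grim trigger on the public history, the first observed defection switches \emph{all} agents to permanent defection. Off the path, the continuation is then the stage-game equilibrium repeated forever. No action there affects future play, so any one-shot deviation simply costs $c$. On the path everyone is good, every future partner cooperates, and your computation applies as written.

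This also removes the obstacles you flagged. You no longer need to separate justified from unjustified defections or protect punishers, since everyone defects after the first defection. If you want to keep individual labels, you need an extra rule, for instance universal defection once any agent is bad, which is just grim trigger again. You would also have to verify Case 2 using the actual probability that $i$'s future partners are good, not the on-path value of one.
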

    \paragraph{Infinite Horizon with Imperfect Public Monitoring via Gossip.}
    However, perfect public monitoring is often impractical in decentralized
    systems, where agents may not have access to complete behavioral histories of
    the entire population. This raises the question of whether cooperation can
    still emerge under more relaxed monitoring assumptions. To explore this, we
    introduce the repeated donation game with public gossip (Definition~\ref{def:donation_gossip_game}
    in Appendix~\ref{sec:donation_gossip_game}), which augments the base game with
    a message channel: after observing the donor's action, the recipient broadcasts
    a public message to the entire community. These messages provide imperfect
    information about donor's behavior. As stated in Proposition~\ref{prop_ifh_signal},
    even under this imperfect public monitoring structure without requiring full
    transparency, cooperation can still be sustained if agents condition their strategies
    on the public signals. For example, if recipients truthfully report donors'
    actions, then donors can apply the same conditional strategy as under perfect
    public monitoring, using reported defections as the basis for punishment, which
    makes deviation from cooperation unprofitable.
    \begin{proposition}
        \label{prop_ifh_signal} In infinite-horizon repeated donation games with
        public gossip, if $\gamma \ge \frac{c}{b}$, then there exists an SPE where
        cooperation is sustained through conditional strategies based on public signals.
    \end{proposition}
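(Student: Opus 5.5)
We prove Proposition~\ref{prop_ifh_signal} by reducing it to Proposition~\ref{prop_ifh_monitor}: we construct a strategy profile in which truthful gossip makes the public record equal to the perfect-monitoring record. Each agent's strategy has a \emph{message} component and an \emph{action} component. As recipient, the agent broadcasts a truthful report of the donor's action. As donor, the agent applies the conditional rule used under perfect public monitoring, but to the \emph{gossip-induced} public history. Concretely, every agent carries a publicly computable label, \emph{good} or \emph{bad}. A donor cooperates if the recipient is labelled good and defects otherwise. The donor's label is updated only from the broadcast signal: a reported cooperation, or a reported defection against a bad recipient (a justified punishment), keeps the label good. A reported defection against a good recipient makes it bad. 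Labels depend only on public signals, so the continuation play after any public history is common knowledge. We then verify subgame perfection with the one-shot deviation principle, which applies because $\gamma<1$ makes payoffs continuous at infinity. (The case $\gamma=1$ needs a limit-of-means or overtaking criterion; we note this as a caveat.)

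\textbf{Donor incentives.} Fix any public history and a donor $i$ facing recipient $j$. If $j$ is good and $i$ follows the strategy, $i$ pays $c$ now and stays good. Because roles alternate, $i$ is the recipient next period, where a good label yields $b$ from the new partner. If $i$ deviates, it saves $c$, is reported and labelled bad, and receives $0$ next period. The gain from deviating is therefore at most
\[
c-\gamma b ,
\]
which is $\le 0$ exactly when $\gamma\ge c/b$. Later periods only reinforce this if the bad label persists, so next-period reciprocity suffices. If $j$ is bad, defecting costs nothing and keeps $i$ good, while cooperating costs $c$ with no reputational gain, so defection is optimal. This check matches the argument for Proposition~\ref{prop_ifh_monitor}. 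We redo it here only to confirm that the timing of broadcasts (the signal arrives before $i$'s next role) delivers the punishment one period later.

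\textbf{Message incentives (main obstacle).} Under perfect monitoring this step does not exist, and it is the delicate point here. A recipient's message changes only the donor's label, never the recipient's own payoff stream. In the constructed profile, future partners condition only on their own recipients' labels, and the recipient's label is unaffected by what it says. The recipient is therefore indifferent across messages, so truthful reporting is a (weak) best response at every history, including off-path ones. We can also make this indifference explicit. One option is to let the message set be binary and note that no message changes the sender's continuation value. Another, if a stricter notion is wanted, is to attach a label penalty for reports that contradict the donor's broadcast record; but then off-path consistency must be checked. We will go with the indifference argument, since the proposition only requires the existence of an SPE.

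Finally, we combine the two checks. No one-shot deviation in actions or in messages is profitable at any history, so the profile is an SPE. On path every agent is good and every donor cooperates, so cooperation is sustained whenever $\gamma\ge c/b$.
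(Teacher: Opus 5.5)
Your argument is correct, but it uses a different equilibrium from the paper's. The paper adapts the grim-trigger argument of Proposition~\ref{prop_ifh_monitor}:
\begin{itemize}
\item recipients report the raw action ($m=a$);
\item a donor cooperates unless its recipient has ever been reported as ``defect,'' in which case it defects against that recipient forever;
\item the donor check compares the alternating stream, worth $\frac{\gamma b-c}{1-\gamma^{2}}$, with $0$;
\item honest reporting follows from the recipient's indifference, as in your message step.
\end{itemize}
You instead use a second-order, ``standing''-type assessment. Defecting against a bad recipient is justified, and a bad label is cleared by the donor's next compliant move, so every public history reduces to the single one-period inequality $c-\gamma b\le 0$.

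This buys off-path coverage that the paper's version lacks. Under raw-action reports, a donor who carries out the prescribed punishment is truthfully reported as ``defect'' and so is flagged itself. For $\gamma b>c$, cooperating with the flagged recipient is then a profitable one-shot deviation. Also, an already-flagged donor is still told to cooperate with unflagged recipients at cost $c$ and gets nothing back, although the paper asserts its payoff there does not depend on its current action. The paper's route is shorter and mirrors the perfect-monitoring proof, but it really verifies only on-path incentives. Yours handles every history at the cost of a slightly richer, recipient-dependent label.

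Two things to tighten.
\begin{itemize}
\item \emph{State the update rule as memoryless.} The donor's new label should depend only on its reported action and the recipient's current label, so a bad donor that cooperates becomes good (``stays good'' should read ``is good''). Your aside ``if the bad label persists'' points to an absorbing variant. There, cooperating with good recipients is not a best response for a bad donor, which would reintroduce the same off-path failure.
\item \emph{Cover the sender's future role as donor in the message step.} Its costs depend on its future recipients' labels, and its report cannot change those, for three reasons:
  \begin{itemize}
  \item the report changes only the reported donor's label, and only until that donor's next move;
  \item the third party who meets that donor in the meantime stays good whichever compliant action it takes;
  \item the sender never meets that donor again.
  \end{itemize}
\end{itemize}
Your $\gamma=1$ caveat is apt; the paper's closed form also requires $\gamma<1$.
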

    Note that an always-defect outcome is still an SPE in the infinite-horizon
    game with public gossip: if all donors defect regardless of gossip history or
    all recipients send untruthful messages, indirect reciprocity collapses. Therefore,
    it remains an open question whether and how LLM agents can leverage public
    gossip to sustain cooperation in practice, rather than converging to universal
    defection.
    \section{ALIGN: Agentic LInguistic Gossip Network}
    \begin{figure*}[t]
        \centering
        \includegraphics[width=1.0\textwidth]{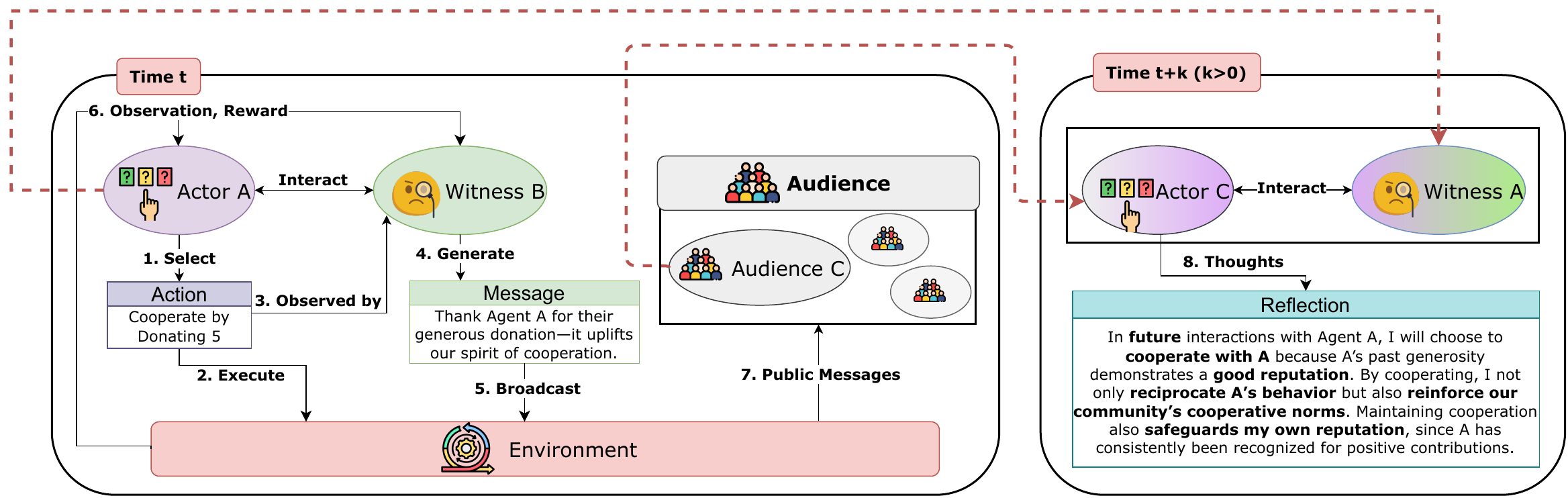}
        \caption{Decision Process in ALIGN}
        \label{fig:decision_process}
    \end{figure*}
    \begin{figure*}[t]
        \centering
        \includegraphics[width=1.0\textwidth]{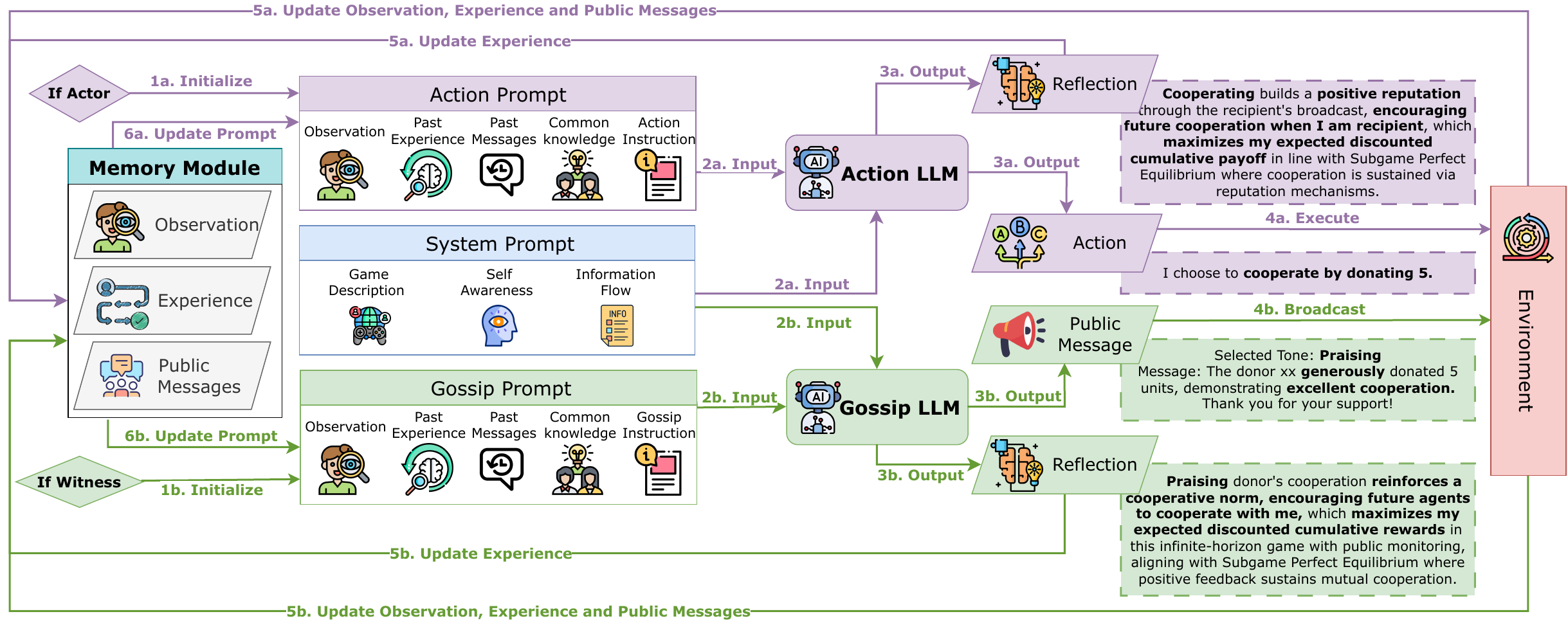}
        \caption{Generative Agent Architecture of ALIGN}
        \label{fig:agent_llm}
    \end{figure*}
    To investigate how decentralized LLM agents can build indirect reciprocity
    through public gossip, we introduce Agentic LInguistic Gossip Network (ALIGN),
    an automated framework where self-interested LLM agents strategically share
    open-ended gossip with hierarchical tones to judge others' trustworthiness and
    guide coordination.
    \paragraph{Decision Process.}
    Figure~\ref{fig:decision_process} shows the decision process in ALIGN in an imperfect
    information multi-agent environment where agents cannot observe others'
    actions unless they are directly involved in the interaction. During each interaction,
    agents can be classified into three roles: actor, witness, and audience. The
    actor takes an action, the witness observes that action, and the audience
    include all other agents who do not observe the action but receive imperfect
    information about it via public gossip broadcast by the witness. Audience agents
    can use the gossip to update beliefs about the actor and adjust subsequent strategies.
    When an audience agent interacts with the actor in future rounds, it can condition
    its decision on the accumulated gossip history.


    \paragraph{Generative Agent Architecture.}
    Each ALIGN agent has two LLM-based modules: a \emph{gossip module} that generates
    evaluative messages about observed actions, and an \emph{action module} that
    selects actions based on the agent's private experience and the public gossip
    history (Figure~\ref{fig:agent_llm}). Both modules are initialized with a
    system prompt that describes the environment and the agent's self-interested
    objective (e.g. maximizing long-term individual utility). At each
    interaction, modules receive the agent's private observation, private
    experience memory, the public gossip log, optional common knowledge (e.g., equilibrium
    knowledge; ablated in Section~\ref{sec:ablation_study}), and explicit
    instructions defining the action and gossip spaces. These prompts are updated
    with new observations, experiences and messages after each round. Additionally,
    each module includes a reflection step, where the agent reviews all available
    information to reflect how to optimize its objective. These reflections are
    updated in their memory, allowing agents to adapt their strategies over time
    based on accumulated experience. Algorithm~\ref{alg:align} in Appendix~\ref{sec:align_details}
    provides pseudocode for ALIGN.
    \paragraph{Gossip Protocol.}
    We show in Section~\ref{sec:game_theoretic_setup} that public gossip of
    action labels can support indirect-reciprocity equilibria in repeated
    donation games. However, label-only gossip often fails when indirect
    reciprocity needs context of norms and robustness to untruthful and noisy
    messages~\citep{ohtsuki2006leading}. To address this, ALIGN's gossip protocol
    leverages generative capabilities of LLMs and provides strategic verbal evaluation
    in five hierarchical tones (Figure~\ref{fig:gossip_tones}). Through this
    protocol, gossip carries not only the action information but also the personal
    judgement of the witness. This helps the community share and align on
    expectations about acceptable behavior. Additionally, hierarchical verbal
    critique has been found to enforce social norms in human groups~\citep{wiessner2005norm}
    as negative tones indicate potential disapproval and future ostracism. This
    increases the perceived future cost of defection and thus strengthens
    incentives to cooperate.
    \begin{figure*}[t]
        \centering
        \includegraphics[width=.95\textwidth]{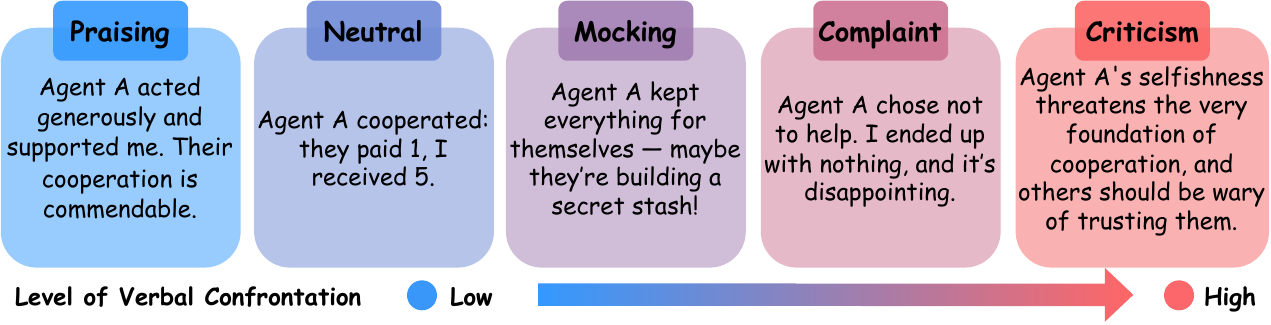}
        \caption{Tones of Gossip Protocol}
        \label{fig:gossip_tones}
    \end{figure*}


    \section{Experiments}
    \label{sec:experiments}
    \begin{table*}
        [t]
        \centering
        \small 
        \caption{Results for \textbf{non-gossiping agents} in the \textbf{infinite-horizon
        donation game}. Metrics marked with $\downarrow$ indicate that lower values
        are more aligned with the game-theoretic SPE of defection.}
        \label{tab:donor_ih_no_gossip_eq}
        \resizebox{\textwidth}{!}{%
        \begin{tabular}{lccccc}
            \toprule \textbf{Agent Type}                          & \textbf{Cooperation Ratio ($\boldsymbol{\downarrow}$)} & \textbf{Image Score ($\boldsymbol{\downarrow}$)} & \textbf{Reward Per Round ($\boldsymbol{\downarrow}$)} & \textbf{Discounted Return ($\boldsymbol{\downarrow}$)} & \textbf{Gini Coefficient} \\
            \midrule \multicolumn{6}{l}{\textbf{Chat Models}}      \\
            \textbf{DeepSeek-V3.1 Chat}                           & $\mathbf{0.00 \pm 0.00}$                               & $\mathbf{-4.00 \pm 0.00}$                        & $\mathbf{0.00 \pm 0.00}$                              & $\mathbf{0.00 \pm 0.00}$                               & $0.00 \pm 0.00$           \\
            GPT-4o Mini                                           & $0.36 \pm 0.08$                                        & $-1.14 \pm 0.65$                                 & $0.72 \pm 0.16$                                       & $5.55 \pm 1.28$                                        & $0.63 \pm 0.13$           \\
            Gemini 2.5 Flash-Lite                                 & $0.08 \pm 0.03$                                        & $-3.33 \pm 0.27$                                 & $0.17 \pm 0.07$                                       & $1.32 \pm 0.53$                                        & $0.73 \pm 0.25$           \\
            \textbf{LLaMA 4 Maverick}                             & $\mathbf{0.00 \pm 0.00}$                               & $\mathbf{-4.00 \pm 0.00}$                        & $\mathbf{0.00 \pm 0.00}$                              & $\mathbf{0.00 \pm 0.00}$                               & $0.00 \pm 0.00$           \\
            \midrule \multicolumn{6}{l}{\textbf{Reasoning Models}} \\
            \textbf{Kimi-K2-Instruct}                             & $\mathbf{0.00 \pm 0.00}$                               & $\mathbf{-4.00 \pm 0.00}$                        & $\mathbf{0.00 \pm 0.00}$                              & $\mathbf{0.00 \pm 0.00}$                               & $0.00 \pm 0.00$           \\
            \textbf{DeepSeek-V3.1 Reasoner}                       & $\mathbf{0.00 \pm 0.00}$                               & $\mathbf{-4.00 \pm 0.00}$                        & $\mathbf{0.00 \pm 0.00}$                              & $\mathbf{0.00 \pm 0.00}$                               & $0.00 \pm 0.00$           \\
            \textbf{Qwen3-235B-Instruct}                          & $\mathbf{0.00 \pm 0.00}$                               & $\mathbf{-4.00 \pm 0.00}$                        & $\mathbf{0.00 \pm 0.00}$                              & $\mathbf{0.00 \pm 0.00}$                               & $0.00 \pm 0.00$           \\
            \textbf{o4-mini}                                      & $\mathbf{0.00 \pm 0.00}$                               & $\mathbf{-4.00 \pm 0.00}$                        & $\mathbf{0.00 \pm 0.00}$                              & $\mathbf{0.00 \pm 0.00}$                               & $0.00 \pm 0.00$           \\
            \bottomrule
        \end{tabular}}
    \end{table*}
    \begin{table*}
        [t]
        \centering
        \small 
        \caption{Benchmark results for \textbf{ALIGN agents} across LLMs in the \textbf{infinite-horizon
        donation game}. Metrics marked with $\uparrow$ indicating that higher values
        are more desirable; although both cooperation and defection are SPE, higher
        cooperation yields greater average payoffs.}

        \label{tab:donor_ih_gossip_eq} \resizebox{\textwidth}{!}{%
        \begin{tabular}{lccccc}
            \toprule \textbf{Agent Type}                          & \textbf{Cooperation Ratio ($\boldsymbol{\uparrow}$)} & \textbf{Image Score ($\boldsymbol{\uparrow}$)} & \textbf{Reward Per Round ($\boldsymbol{\uparrow}$)} & \textbf{Discounted Return ($\boldsymbol{\uparrow}$)} & \textbf{Gini Coefficient} \\
            \midrule \multicolumn{6}{l}{\textbf{Chat Models}}      \\
            DeepSeek-V3.1 Chat                                    & $0.94 \pm 0.02$                                      & $3.48 \pm 0.20$                                & $1.87 \pm 0.05$                                     & $14.40 \pm 0.40$                                     & $0.08 \pm 0.02$           \\
            GPT-4o Mini                                           & $0.99 \pm 0.01$                                      & $3.89 \pm 0.11$                                & $1.97 \pm 0.03$                                     & $15.23 \pm 0.20$                                     & $0.02 \pm 0.02$           \\
            Gemini 2.5 Flash-Lite                                 & $0.60 \pm 0.22$                                      & $0.83 \pm 1.75$                                & $1.21 \pm 0.44$                                     & $9.32 \pm 3.37$                                      & $0.34 \pm 0.21$           \\
            LLaMA 4 Maverick                                      & $0.94 \pm 0.03$                                      & $3.50 \pm 0.23$                                & $1.88 \pm 0.06$                                     & $14.45 \pm 0.44$                                     & $0.06 \pm 0.02$           \\
            \midrule \multicolumn{6}{l}{\textbf{Reasoning Models}} \\
            Kimi-K2-Instruct                                      & $0.73 \pm 0.16$                                      & $1.81 \pm 1.30$                                & $1.45 \pm 0.32$                                     & $11.21 \pm 2.50$                                     & $0.08 \pm 0.05$           \\
            \textbf{DeepSeek-V3.1 Reasoner}                       & $\mathbf{1.00 \pm 0.00}$                             & $\mathbf{4.00 \pm 0.00}$                       & $\mathbf{2.00 \pm 0.00}$                            & $\mathbf{15.44 \pm 0.00}$                            & $0.00 \pm 0.00$           \\
            Qwen3-235B-Instruct                                   & $0.69 \pm 0.24$                                      & $1.56 \pm 1.88$                                & $1.39 \pm 0.47$                                     & $10.71 \pm 3.63$                                     & $0.05 \pm 0.03$           \\
            o4-mini                                               & $0.98 \pm 0.02$                                      & $3.83 \pm 0.17$                                & $1.96 \pm 0.04$                                     & $15.11 \pm 0.33$                                     & $0.02 \pm 0.02$           \\
            \bottomrule
        \end{tabular}}
    \end{table*}
    Our experiment design has two goals. (1) We evaluate \emph{whether ALIGN
    induces incentive-aligned behavior} when cooperation is known to be theoretically
    suboptimal (finite horizon; infinite horizon without gossip) versus
    beneficial (infinite horizon with public gossip; Section~\ref{sec:equilibrium_analysis}).
    (2) We assess \emph{whether ALIGN enhances indirect reciprocity}. To isolate
    the effect of public gossip, we benchmark ALIGN agents against non-gossiping
    agents with identical action modules across multiple domains. We further
    test robustness of ALIGN to malicious entrants and to noisy or untruthful
    gossip. We also run ablations to identify which components are most critical
    for the emergence of cooperation.

    \paragraph{Environments.}
    We evaluate four testbeds. (1) Two matrix games with direct reciprocity disabled:
    the repeated donation game (Section~\ref{sec:game_theoretic_setup}) and the
    indirect reciprocity (IR) game (Appendix~\ref{app:indirect_reciprocity_game}).
    (2) A sequential investment scenario (Appendix
    \ref{appendix:investment_game_results}) with dynamic state and continuous actions.
    (3) A transaction market that maps to e-commerce applications (Appendix~\ref{appendix:transaction_market_results}).
    All testbeds with default $\gamma=0.99$ represent social dilemmas where cooperation
    induces an immediate cost but can increase long-term individual utility when
    others reciprocate. Prompts of all scenarios are included in Section~\ref{sec:prompts}.
    More implementation details and supplementary results are included in Appendix~\ref{app:supplementary_experiments}.

    \paragraph{Benchmark Models.}
    We evaluate ALIGN agents with two categories of LLMs: (a) \textbf{Chat
    models}, including GPT-4o Mini~\citep{openai_gpt4o_mini}, DeepSeek-V3.1 (non-thinking
    mode)~\citep{deepseek_v31}, Gemini 2.5 Flash-Lite~\citep{comanici2025gemini},
    and LLaMA 4 Maverick~\citep{meta2025llama}; and (b) \textbf{Reasoning models},
    including o4-mini~\citep{openai2025_o4mini}, DeepSeek-V3.1 (thinking mode)~\citep{deepseek_v31},
    Qwen3-235B-Instruct~\citep{yang2025qwen3}, and Kimi-K2-Instruct~\citep{team2025kimi}.
    All LLMs are evaluated with temperature $0$ to ensure reproducibility. Each
    scenario is repeated with $5$ random seeds, and we report results as averages
    with standard errors across seeds.

    \paragraph{Evaluation Metrics.}
    To quantify performance, we report average reward per round and discounted return
    $G_{i}=\sum_{t=1}^{T}0.99^{\,t-1}r_{i}^{t}$ for all $i\in\mathcal{N}$ as measures
    of utility, and the Gini coefficient~\citep{gini1936measure} (Eq.~\ref{eq:gini_coefficient})
    of discounted return as a measure of inequality across agents. For matrix games,
    we additionally report the cooperation ratio (fraction of rounds with cooperation)
    and image score~\citep{nowak1998evolution} (Eq.~\ref{eq:image_score}) as a
    reputation measure. For the investment scenario, we report the investment
    ratio and return ratio. For the transaction market, we report the proportion
    of high-quality products and the purchase rate of customized products. All metrics
    are averaged across agents and $5$ random seeds.
    \begin{equation}
        \label{eq:gini_coefficient}\text{Gini Coefficient}= \frac{\sum_{i=1}^{n}\sum_{j=1}^{n}\lvert
        G_{i}-G_{j}\rvert}{2n\sum_{i=1}^{n}G_{i}}, \qquad n=|\mathcal{N}|.
    \end{equation}
    \begin{equation}
        \label{eq:image_score}\text{Image Score}=N_{\text{Cooperation}}-N_{\text{Defection}}
        .
    \end{equation}

    \subsection{Benchmarking ALIGN}
    \label{sec:benchmarking}
    \begin{figure*}[t]
        \centering
        \includegraphics[width=.8\textwidth]{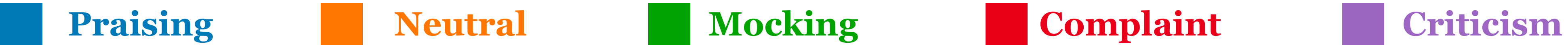}
        \begin{subfigure}
            [b]{.22\textwidth}
            \centering
            \includegraphics[width=1\textwidth]{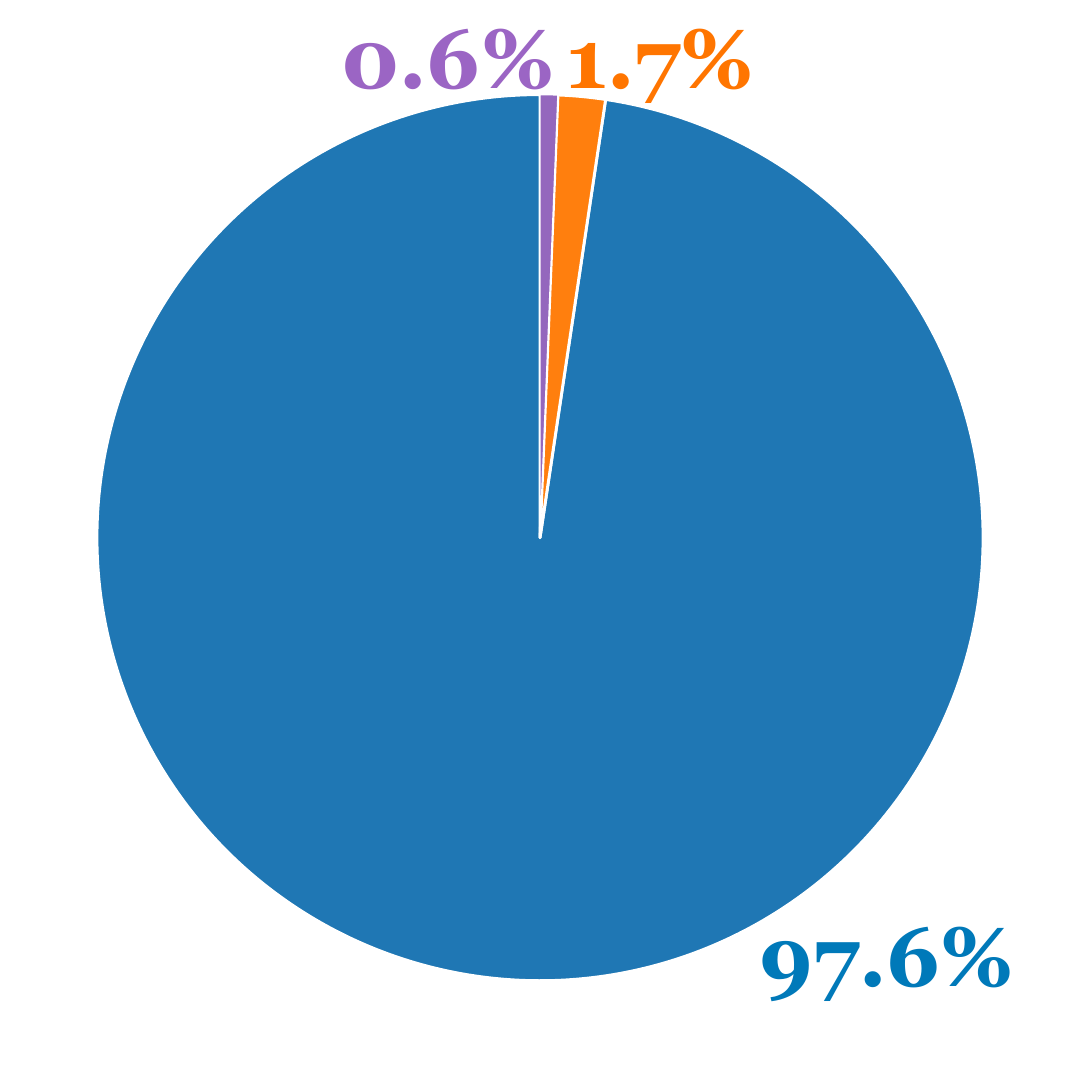}
            \vspace*{-8mm}
            \caption{}
            \label{fig:tone_chat_c}
        \end{subfigure}
        \begin{subfigure}
            [b]{0.22\textwidth}
            \centering
            \includegraphics[width=1\textwidth]{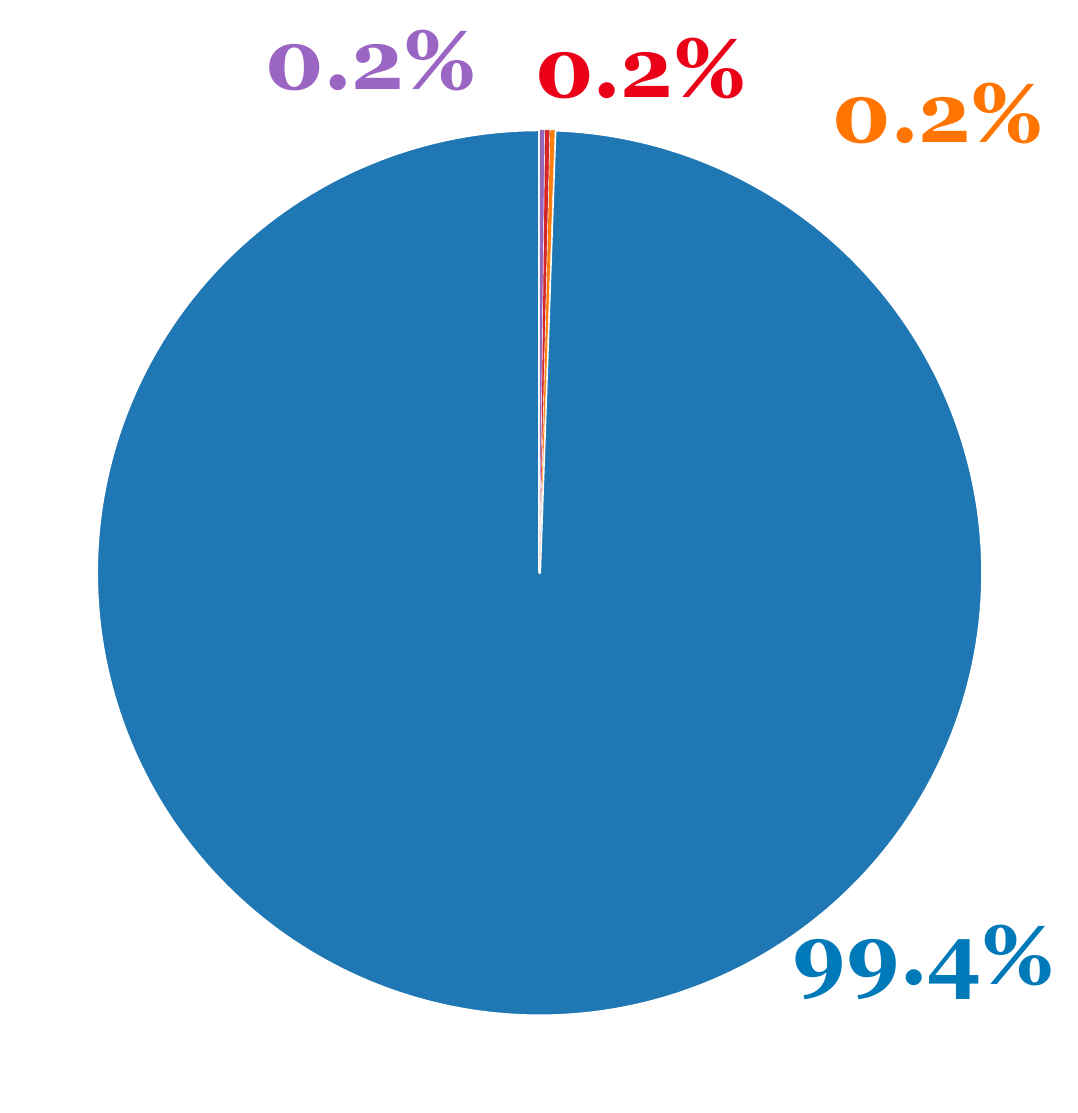}
            \vspace*{-8mm}
            \caption{}
            \label{fig:tone_reason_c}
        \end{subfigure}
        \begin{subfigure}
            [b]{0.24\textwidth}
            \centering
            \includegraphics[width=1\textwidth]{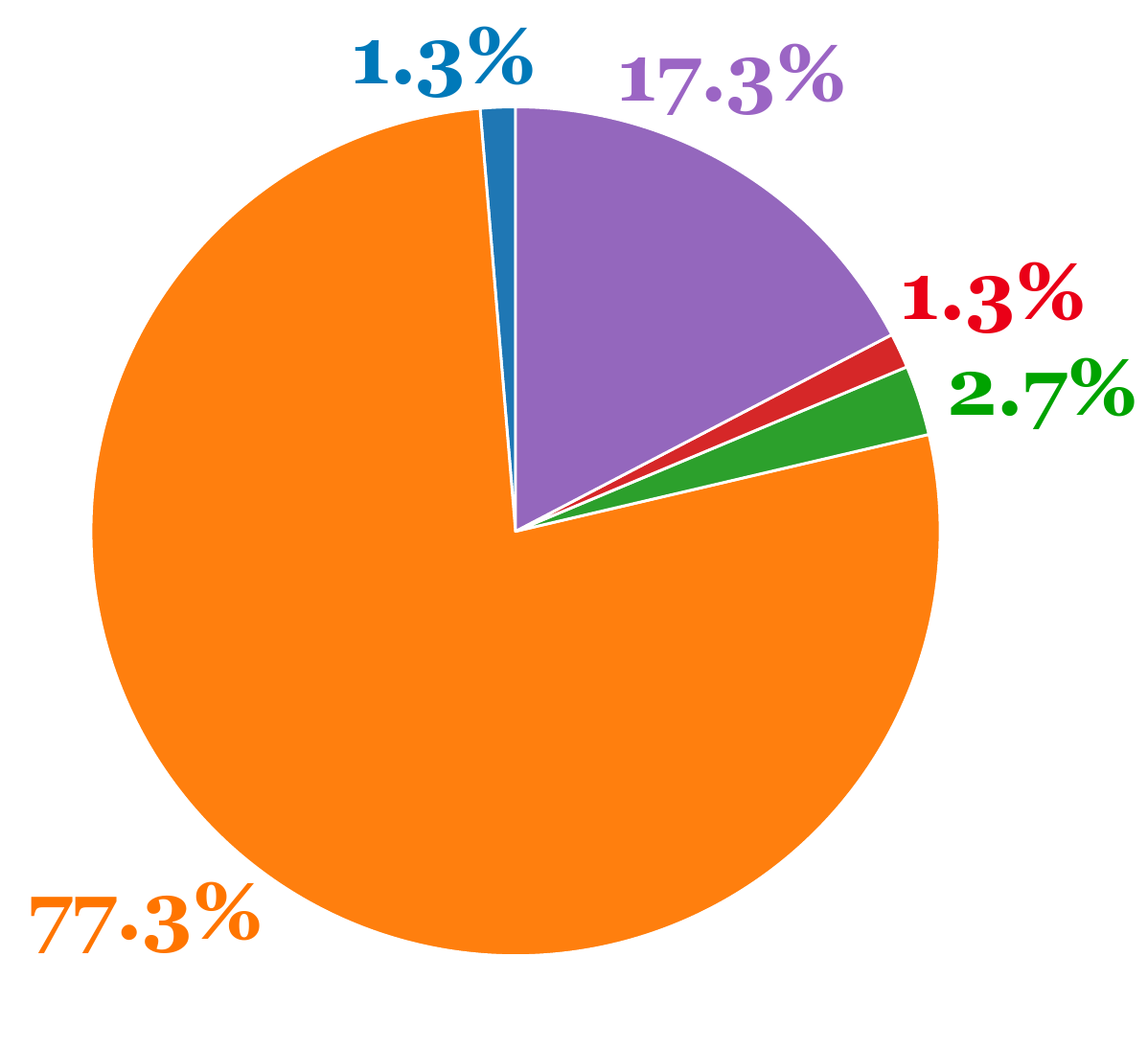}
            \vspace*{-8mm}
            \caption{}
            \label{fig:tone_chat_d}
        \end{subfigure}
        \begin{subfigure}
            [b]{.25\textwidth}
            \centering
            \includegraphics[width=1\textwidth]{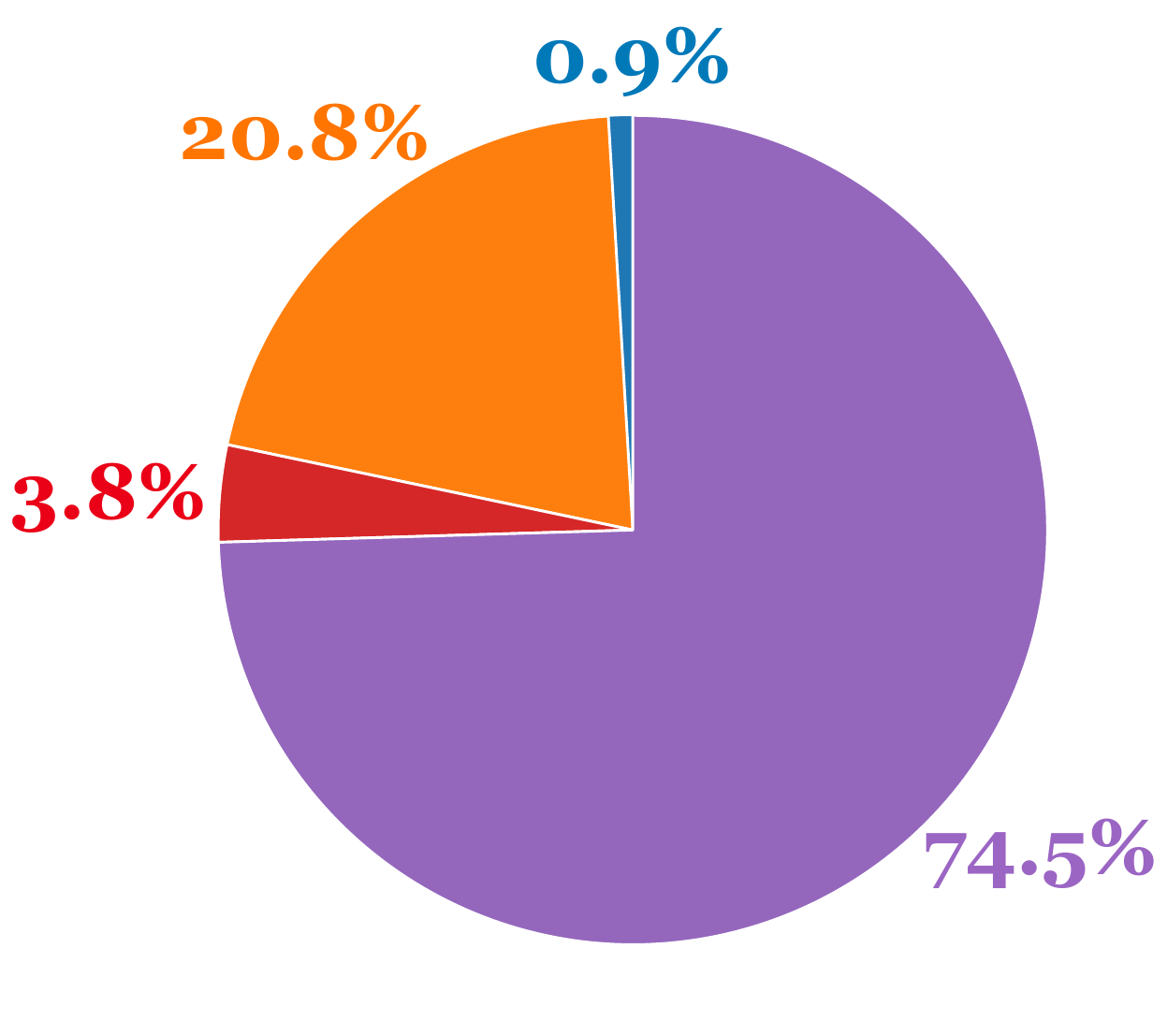}
            \vspace*{-8mm}
            \caption{}
            \label{fig:tone_reason_d}
        \end{subfigure}
        \caption{\textbf{Tone Proportions in Infinite-horizon Repeated Donation
        Game:} (a) cooperating chat models, (b) cooperating reasoning models, (c)
        defecting chat models and (d) defecting reasoning models. ALIGN agents
        typically praise cooperation and criticize defection.}
        \label{fig:tones_proportion_ALIGN}
    \end{figure*}%


    \begin{figure*}[t]
        \centering
        \begin{subfigure}
            [t] {.49\textwidth}
            \centering
            \includegraphics[width=\textwidth]{
                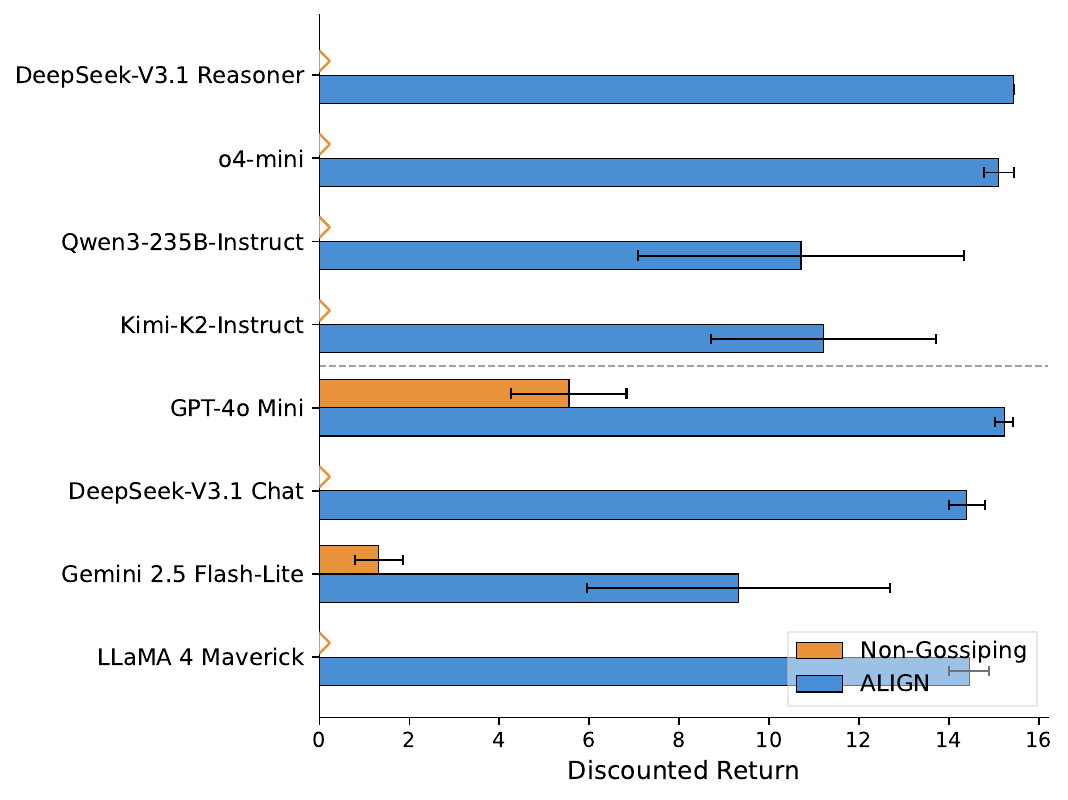
            }
            \caption{Repeated Donation Game}
            \label{fig:bar_donor_infinite}
        \end{subfigure}
        \begin{subfigure}
            [t] {.49\textwidth}
            \centering
            \includegraphics[width=\textwidth]{
                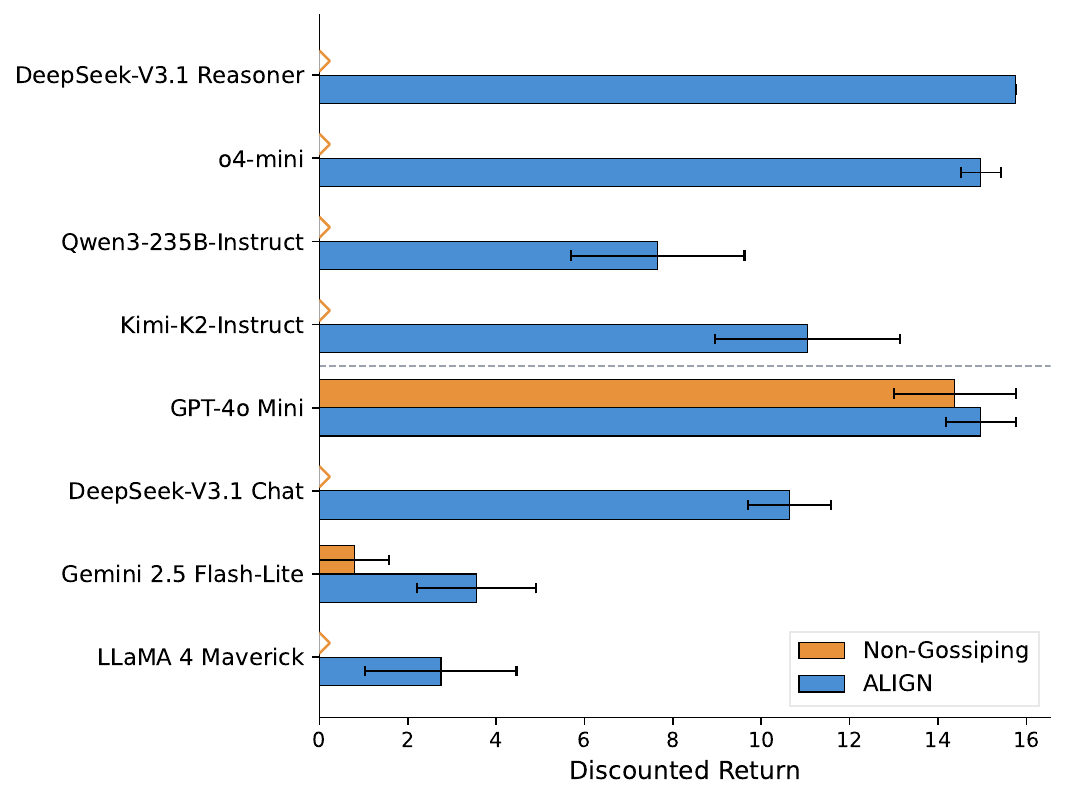
            }
            \caption{Indirect Reciprocity Game}
            \label{fig:bar_pd_infinite}
        \end{subfigure}
        \caption{\textbf{Discounted Returns of ALIGN and Non-Gossiping Agents in
        Infinite-Horizon Indirect Reciprocity Testbeds:} ALIGN agents generally achieve
        higher discounted returns than non-gossiping agents, especially among
        reasoning-focused models. Without gossip, reasoning LLMs consistently
        obtain zero discounted returns, aligning with the game-theoretic prediction,
        whereas some chat LLMs deviate from this prediction by cooperating even
        when cooperation is strictly suboptimal. Diamonds denote zero discounted
        returns.}
        \label{fig:bar_gossip_effect}
    \end{figure*}%



    \begin{figure}[t]
        \includegraphics[width=1\linewidth]{
            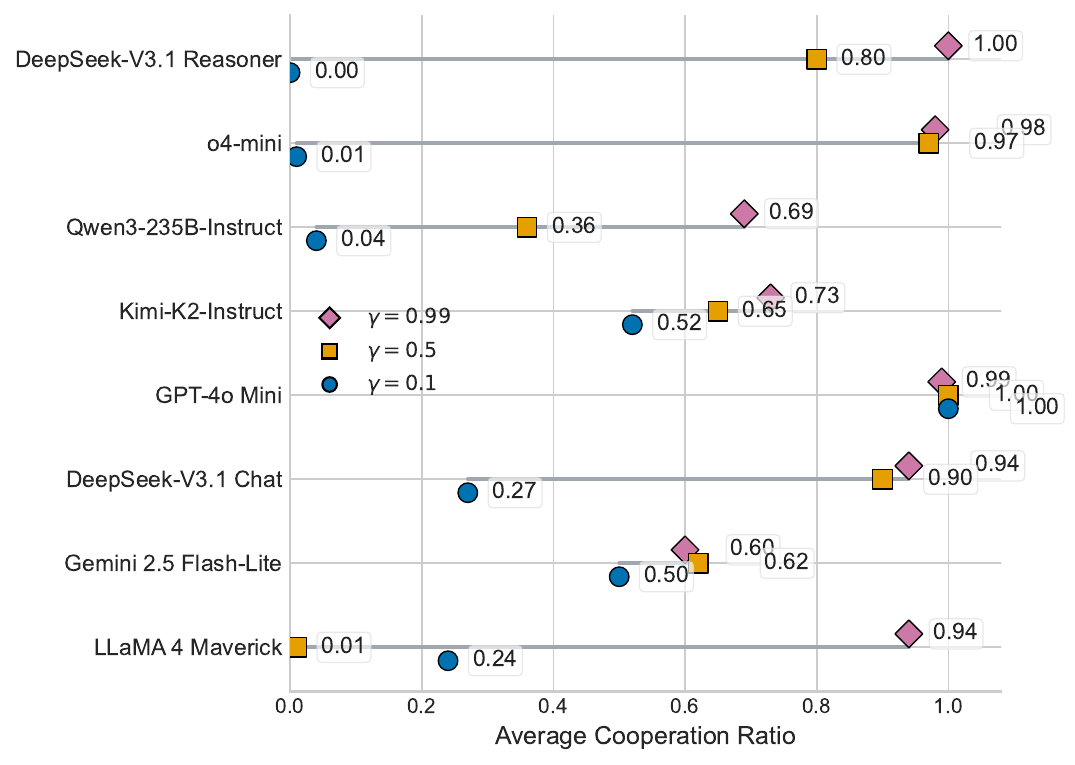
        }
        \caption{\textbf{Cooperation Ratios of ALIGN Agents under Different
        Discount Factors in Infinite-Horizon Repeated Donation Games:} Higher discount
        factors generally lead to increased cooperation, especially for
        reasoning-focused LLMs.}
        \label{fig:dumbbell_discount_factors}
    \end{figure}
    \begin{figure}[t]
        \centering
        \includegraphics[width=1\linewidth]{
            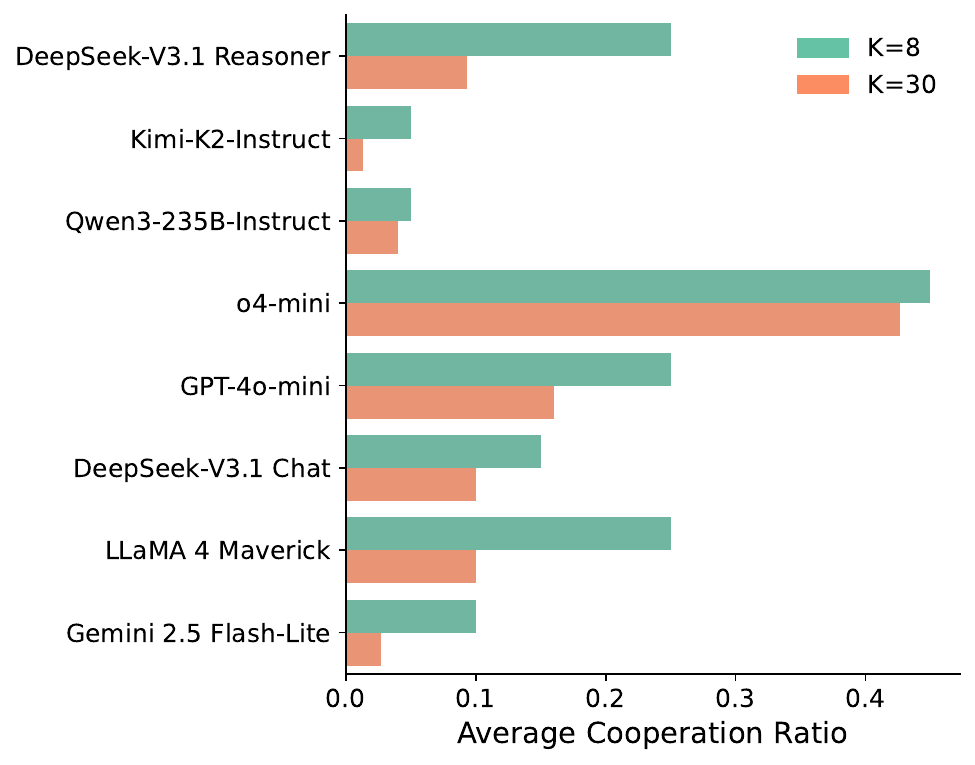
        }
        \caption{\textbf{Cooperation Ratio towards a Greedy Agent}: ALIGN agents
        reduce cooperation as the number of interactions ($K$) with the greedy
        agent increases.}
        \label{fig:CR_towards_greedy}
    \end{figure}
    \subsubsection{Finite-Horizon Scenarios}
    In both finite-horizon matrix games, cooperation is not an SPE (Proposition~\ref{prop_fh}).
    Our benchmark results show that cooperation is almost entirely absent
    without gossip (Table~\ref{tab:donor_fh_no_gossip_eq},Table~\ref{tab:pd_fh_no_gossip_eq}).
    Even with public gossip (Table~\ref{tab:donor_fh_gossip_eq},Table~\ref{tab:pd_fh_gossip_eq}),
    reasoning-focused LLMs remain mostly non-cooperative. In contrast, some chat
    LLMs achieve high cooperation ratios with low Gini coefficients, indicating that
    agents driven by these chat LLMs often cooperate even when it is suboptimal in
    the finite-horizon setting.

    \subsubsection{Infinite-Horizon Scenarios}
    \paragraph{Non-Gossiping Agents.}
    In infinite-horizon scenarios without gossip, cooperation is not an SPE either
    (Proposition~\ref{prop_ifh_nomonitor}). As shown in Table~\ref{tab:donor_ih_no_gossip_eq}
    and Table~\ref{tab:pd_ih_no_gossip_eq} (Appendix~\ref{app:donation_game}), reasoning-focused
    LLMs consistently defect, whereas some chat LLMs such as GPT-4o Mini and
    Gemini-2.5 Flash-Lite achieve positive cooperation ratios. Combining results
    from finite-horizon scenarios, \textbf{reasoning-focused LLMs act more
    strategically and converge to game-theoretic equilibria (by defecting all the
    time), whereas some chat LLMs sustain non-equilibrium cooperative behaviors}.

    \paragraph{ALIGN Agents.}
    Compared with non-gossiping agents, ALIGN agents can leverage public gossip to
    sustain indirect reciprocity in the infinite-horizon setting (Proposition~\ref{prop_ifh_signal}).
    Figure~\ref{fig:bar_gossip_effect} and Figure~\ref{fig:bar_gossip_effect_appendix}
    show that ALIGN consistently improves discounted returns over non-gossiping
    baselines in both matrix games where direct reciprocity is disabled. Similar
    patterns are observed in the sequential investment scenario (Tables~\ref{tab:rebuttal_trust_no_gossip}
    and \ref{tab:rebuttal_trust_disc099_align} in Appendix~\ref{appendix:investment_game_results})
    and in the transaction market (Tables~\ref{tab:market_infinite_no_gossip}
    and \ref{tab:market_infinite_gossip} in Appendix~\ref{appendix:transaction_market_results}),
    where ALIGN also achieves higher welfare than non-gossiping baselines (Figure~\ref{fig:bar_gossip_effect_appendix}).
    According to Table~\ref{tab:donor_ih_gossip_eq} and Table~\ref{tab:pd_ih_gossip_eq},
    reasoning-focused models such as DeepSeek-V3.1 Reasoner can achieve universal
    cooperation with $100\%$ cooperation ratio in both matrix games, while chat
    models like Gemini 2.5 Flash-Lite achieve lower cooperation ratios of
    $6 0\%$ and $23\%$ respectively. These results suggest that \textbf{reasoning-focused
    LLMs are not inherently ``less cooperative'', but cooperate strategically when
    cooperation is incentive-compatible given the information and payoff structure}.
    Figure~\ref{fig:tones_proportion_ALIGN} provides additional insight into how
    agents use gossip: most LLMs praise cooperation, but when observing defection,
    reasoning-focused LLMs predominantly issue negative criticisms, whereas chat
    LLMs mainly produce neutral comments. This pattern suggests that \textbf{reasoning-focused
    models use gossip more actively to deter defection and reinforce cooperative
    norms}.

    \paragraph{How Does LLM Reasoning Shape Cooperation in ALIGN Agents?}
    To understand how ALIGN agents reason about cooperation, we analyze their self-reflections
    when making decisions. Figure~\ref{fig:reasoning_analysis} (Appendix~\ref{app:donation_game})
    compares reflections from DeepSeek-V3.1 Reasoner and Gemini 2.5 Flash-Lite,
    showing that \textbf{cooperative agents emphasize reputation, trust, and
    long-term payoffs}. In particular, they explicitly reason that cooperation can
    build a positive reputation, which in turn encourages reciprocal cooperation
    from other agents through public gossip. By contrast, non-cooperative agents
    exhibit more myopic reasoning: they focus on immediate payoffs, emphasize
    the absence of direct reciprocity, and overlook the possibility of indirect reciprocity
    enabled by public gossip. Note that this trade-off between the short-term
    gains from defection and the long-term benefits of cooperation depends
    critically on the discount factor. To examine whether LLM agents adapt their
    strategies to different discount factors, we provide additional benchmarks
    and annotated donor reflections in Appendix~\ref{app:discount_factor_exps}. Consistent
    with our theoretical analysis, \textbf{cooperation generally increases as
    the discount factor grows, especially for reasoning-focused models} (Figure~\ref{fig:dumbbell_discount_factors}).
    The reflection examples in Appendix~\ref{app:reflection_multi_discount}
    further show that reasoning-focused LLMs explicitly incorporate the discount
    factor when evaluating whether cooperation improves long-term returns. These
    findings suggest that \textbf{LLM reasoning capabilities help agents
    strategically evaluate the long-term benefits of cooperation versus
    defection, allowing them to adapt their behavior to the incentive structure of
    the environment}.
    \subsection{Robustness Analysis}
    \begin{figure}[t]
        \centering
        \includegraphics[width=1\linewidth]{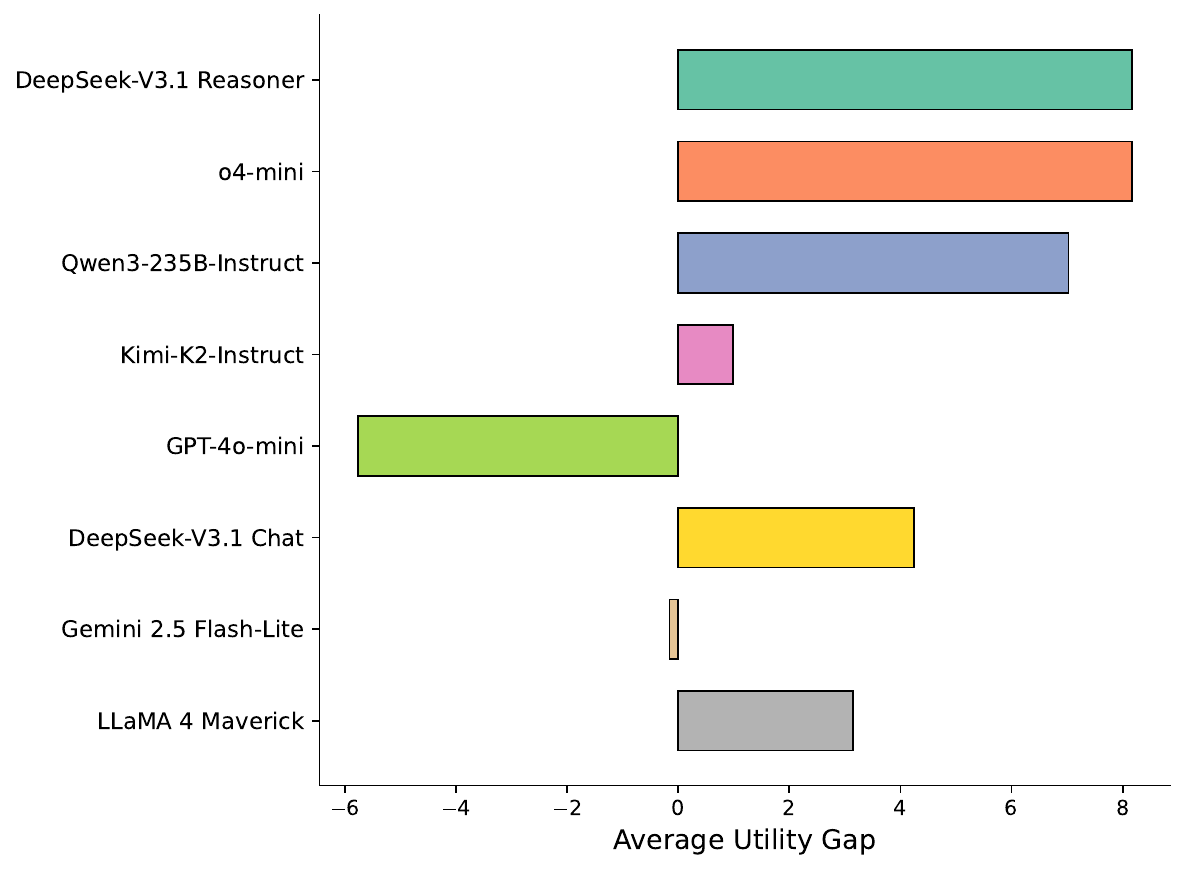}
        \caption{\textbf{Average Utility Gap between ALIGN Agents and Collusive
        Agents:} Positive gaps indicate ALIGN agents achieve higher average utility
        compared to collusive agents.}
        \label{fig:collusion_gap}
    \end{figure}
    \paragraph{Resilience against Malicious Agents.}
    We evaluate ALIGN under two adversarial settings. First, we introduce a greedy
    agent that always defects and remains silent. ALIGN agents increasingly reduce
    cooperation toward the greedy agent as its repeated defections are observed and
    negative gossip propagates through the population
    (Figure~\ref{fig:CR_towards_greedy}), indicating that agents can collectively
    limit exploitation by withholding cooperation from persistent defectors. Second,
    we introduce two collusive attackers who always defect, falsely praise each
    other, and spread false criticism against regular ALIGN agents. As shown in
    Figure~\ref{fig:collusion_gap}, most LLMs achieve positive utility gaps over
    collusive attackers: capable agents cross-validate public reports, discount
    fabricated criticism, and reduce cooperation with repeatedly defective
    attackers. These results suggest that ALIGN can resist both simple exploitation
    and coordinated malicious gossip in most settings.

    \paragraph{Robustness to Untruthful Communication.}
    ALIGN does not assume a guaranteed source of truth: agents may generate noisy,
    biased, or even untruthful messages, as is common in real-world gossip.
    Figures~\ref{fig:tones_proportion_ALIGN} and
    \ref{fig:tones_proportion_ALIGN_PD} show that agents may occasionally praise
    defectors or criticize cooperators, introducing imperfect language-based
    monitoring. Despite this, ALIGN still achieves high cooperation and social
    welfare in our main benchmarks
    (Tables~\ref{tab:donor_ih_gossip_eq} and~\ref{tab:pd_ih_gossip_eq}). We further
    consider an extension where donors can issue deceptive self-reports to manipulate
    their own reputations (Appendix~\ref{app:self_report_exps}). Results show that
    ALIGN continues to promote cooperation for most models: strong reasoning models
    maintain both high cooperation and high honesty, whereas weaker models may
    defect and misreport cooperation, which ultimately lowers their own long-term
    payoffs. Examples and detailed self-report results are provided in
    Appendix~\ref{app:messages_examples} and Appendix~\ref{app:self_report_exps}.

    \subsection{Ablation Study}
    \label{sec:ablation_study}
    
    \paragraph{Ablation of Gossip Protocol.}
    To isolate the role of ALIGN's gossip protocol, we replace open-ended,
    hierarchical gossip with binary reputation signals, a common abstraction in
    classical models of indirect reciprocity~\citep{ohtsuki2006leading,ohtsuki2004should,nowak1998evolution,nowak1998dynamics}.
    We test binary signals with and without a shared convention where ``1'' denotes
    a positive signal and ``0'' denotes a negative signal
    (Appendix~\ref{app:binary_signaling}). Figure~\ref{fig:dumbbell_binary_signal}
    shows that cooperation drops sharply without a shared convention; even with a
    convention, several models still underperform ALIGN. 
    This suggests that binary signals cannot substitute for ALIGN's gossip protocol as it provides normative context beyond action labels, improving the reliability of indirect reciprocity without requiring hand-designed signal conventions. 
    
    Additional ablations show that reflection
    memory (Appendix~\ref{app:reflection_ablation})) and explicit equilibrium knowledge (Appendix~\ref{app:equilibrium_ablation}) help some weaker models but are not
    the primary drivers of cooperation.

    \section{Conclusion}

    We present ALIGN, an adaptive framework for sustaining indirect reciprocity
    among decentralized, self-interested LLM agents through hierarchical public gossip.
    Our game-theoretic analysis establishes conditions under which public gossip
    enables indirect reciprocity equilibria. Our experimental analysis confirms these
    predictions with reasoning models: cooperation emerges in infinite-horizon
    settings with high discount factors, but collapses in finite horizons or in infinite
    horizon when future profit is heavily discounted. Empirical results further
    show that ALIGN consistently boosts cooperation and welfare across diverse
    LLMs, resists exploitation by malicious entrants, and highlights the
    importance of reasoning about reputation and long-term incentives. These findings
    position public gossip as an adaptive and powerful mechanism for norm emergence,
    bridging theory and practice for cooperative multi-agent systems.

    \section*{Acknowledgments}
    We acknowledge funding from the Canada CIFAR AI Chair program, a discovery grant
    from the Natural Sciences and Engineering Research Council of Canada and a grant
    from IITP \& MSIT of Korea (No. RS-2024-00457882, AI Research Hub Project).
    Computational resources used in preparing this research were provided, in part,
    by the Province of Ontario, the Government of Canada through CIFAR, and
    companies sponsoring the Vector Institute
    \url{https://vectorinstitute.ai/about/current-partners/}.
    \section*{Impact Statement}

    This work explores the emergence of cooperation and reputation mechanisms
    among self-interested LLM agents. While our study is conducted in a simulated
    environment, the insights derived from the ALIGN framework have broader
    implications for the design of future multi-agent systems and decentralized
    autonomous societies. As AI agents increasingly interact in mixed-motive settings,
    introducing mechanisms like public gossip can effectively promote social
    welfare; however, it also raises ethical questions regarding privacy, fairness,
    and the potential for echo chambers~\citep{terren2021echo} or malicious defamation~\citep{veeder1904history}
    in decentralized networks. If deployed in real-world applications without safeguards,
    such reputation systems could potentially be exploited to unfairly ostracize
    individuals or amplify biases. Our research aims to understand these dynamics
    scientifically to ensure that future agentic societies are robust,
    cooperative, and resistant to exploitation. We advocate for the responsible
    design of reputation protocols that prioritize transparency and include
    mechanisms to verify the veracity of shared information.


    \bibliography{icml_submission}
    \bibliographystyle{icml2026}

    \appendix
    \onecolumn

    \section{Proof of Propositions}
    \label{sec:proofs_propositions}
    \subsection{Proof of Proposition~\ref{prop_fh}}

    \begin{proof}[Proof of Proposition~\ref{prop_fh}]
        In the finite-horizon repeated donation game, the horizon
        $T\in\mathbb{N}$ is fixed and known. At the terminal timestep $t=T$, the
        donor's action affects only the current payoff: cooperation yields $-c$ while
        defection yields $0$, so defection is strictly optimal at $t=T$.

        To formalize the induction argument, define the donor $i$'s expected discounted
        return at timestep $t$ as
        \begin{equation}
            \label{eq:V_it}V_{i}^{t}= \mathbb{E}\!\left[\sum_{\tau=t}^{T}\gamma^{\,\tau-t}
            r_{i}^{\tau}\,\bigg|\, h^{t}\right],
        \end{equation}
        where $h^{t}$ denotes the public history up to time $t$.

        At $t=T$, $V_{i}^{T}=-c$ if the donor cooperates and $V_{i}^{T}=0$ if
        the donor defects, so defection is strictly optimal. Then, by backward induction~\citep{benoit1984finitely},
        suppose that at all timesteps $\tau = t+1,\dots,T$ the unique subgame-perfect
        action is defection. At timestep $t<T$, the expected discounted return $V
        _{i}^{t+1}$ is independent of the donor's current action. Therefore the
        donor's comparison reduces to current payoff $0$ (if defect) versus $-c$
        (if cooperate), and defection is again strictly optimal.

        Therefore, by backward induction, defection is uniquely optimal at every
        $t=1,\dots,T$. Hence universal defection in every timestep is the unique
        SPE.
    \end{proof}

    \subsection{Proof of Proposition~\ref{prop_ifh_nomonitor}}

    \begin{proof}[Proof of Proposition~\ref{prop_ifh_nomonitor}]
        Consider the infinite-horizon game with discount $\gamma\in(0,1]$ under \emph{private}
        monitoring. Fix any \emph{private history} of agent $i$ at time $t$, denoted
        $h_{i}^{t}$,
        then the donor $i$'s expected discounted return $V_{i}^{t+1}$ does \emph{not}
        depend on the current action $a_{i}^{t}$ for two reasons: (i) under
        private monitoring, only the current recipient observes $a_{i}^{t}$, so
        other agents' strategies (which depend on publicly available information)
        are independent of $a_{i}^{t}$; and (ii) by the matching rule in
        Definition~\ref{def:donation_game}, the current donor and recipient will
        not meet again, so no direct reciprocity can be created between them. Therefore,
        $V_{i}^{t+1}$ is independent of $a_{i}^{t}$. Then, at time $t$ we have
        \begin{equation}
            V_{i}^{t}=
            \begin{cases}
                -c + \gamma V_{i}^{t+1},  & \text{if }a_{i}^{t}=\mathrm{cooperate}, \\[6pt]
                \;\;\,\gamma V_{i}^{t+1}, & \text{if }a_{i}^{t}=\mathrm{defect}.
            \end{cases}
        \end{equation}
        Since $c>0$, then $-c + \gamma V_{i}^{t+1}< \gamma V_{i}^{t+1}$, so
        defection \emph{strictly} dominates cooperation at this private history.
        By the one-shot deviation principle for infinite-horizon games~\citep{hendon1996one},
        the same strict dominance holds at every private history; hence universal
        defection in every timestep is the unique SPE.
    \end{proof}

    \subsection{Proof of Proposition~\ref{prop_ifh_monitor}}
    \label{proof:prop_ifh_monitor}
    \begin{proof}[Proof of Proposition~\ref{prop_ifh_monitor}]
        Assume perfect public monitoring and consider the \emph{grim trigger}
        strategy~\citep{axelrod1981evolution}: cooperate if and only if no
        defection has ever been publicly observed; upon any public defection,
        all agents defect forever. Fix a \emph{public history} $h^{t}$ with no
        past defections and focus on the current donor $i \in \mathcal{N}$
        matched with recipient $j \neq i \in \mathcal{N}$. Let $V_{i}^{t}(a_{i}^{t}
        \mid h^{t})$ denote $i$'s expected discounted return at time $t$ given $h
        ^{t}$ and current action $a_{i}^{t}\in\{\mathrm{cooperate},\mathrm{defect}
        \}$. Then, with grim trigger strategy,
        \begin{equation}
            V_{i}^{t}(a_{i}^{t}\mid h^{t}) =
            \begin{cases}
                -c + \gamma b - \gamma^{2}c + ... = \frac{\gamma b - c}{1-\gamma^{2}}, & \text{if }a_{i}^{t}=\mathrm{cooperate}, \\[6pt]
                0 + \gamma 0 + \gamma^{2}0 + ... = 0,                                  & \text{if }a_{i}^{t}=\mathrm{defect}.
            \end{cases}
        \end{equation}
        Under condition that $\gamma \geq \frac{c}{b}$, we have $V_{i}^{t}(\mathrm{cooperate}
        \mid h^{t}) \geq V_{i}^{t}(\mathrm{defect}\mid h^{t})$. Therefore, a deviation
        is unprofitable if $\gamma \geq \frac{c}{b}$. Therefore, by the one-shot
        deviation principle applied at every public history, the grim trigger
        profile is an SPE if $\gamma \geq \frac{c}{b}$.
    \end{proof}

    \subsection{Definition of Repeated Donation Game with Public Gossip}
    \label{sec:donation_gossip_game}
    \begin{definition}[Repeated Donation Game with Public Gossip]
        \label{def:donation_gossip_game} A repeated donation game with public gossip
        is a tuple $\mathcal{G}= (\mathcal{N}, T, \mathcal{A}, \mathcal{M}, (\mathcal{O}
        _{i})_{i\in \mathcal{N}}, e, c, b, \gamma),$ which extends the repeated donation
        game in Definition~\ref{def:donation_game} by introducing a message space
        $\mathcal{M}$. At each timestep $t=1,\dots,T$, after the donor's action is
        chosen and payoffs are realized as in Definition~\ref{def:donation_game},
        the recipient observes the action and broadcasts a public message
        $m_{j}^{t}\in \mathcal{M}$ to all agents in $\mathcal{N}$.
    \end{definition}
    The strategy of each agent $i \in \mathcal{N}$ is represented by both its
    action policy and gossip policy $(\pi_{i},\phi_{i})$, where $\phi_{i}: \mathcal{O}
    _{i}\times \mathcal{A}\mapsto \mathcal{M}$ maps the agent's observation and the
    donor's action to a public message. An SPE in this setting is a joint
    strategy profile $(\pi_{i}, \phi_{i})_{i \in \mathcal{N}}$ such that, for every
    agent $i$, no profitable deviation exists in either the action or gossip
    policy given the fixed strategies of the other agents.

    \subsection{Proof of Proposition~\ref{prop_ifh_signal}}
    \label{sec:proof_prop_ifh_signal}

    \begin{proof}[Proof of Proposition~\ref{prop_ifh_signal}]
        Consider the infinite-horizon repeated donation game with public gossip (Definition~\ref{def:donation_gossip_game}).
        Denote the strategy of each agent $i$ by $s_{i}=(\pi_{i},\phi_{i})$, where
        $\pi_{i}$ is the action policy and $\phi_{i}$ is the gossip policy. At
        any timestep $t$, suppose agent $i$ is matched with agent $j$; if $i$ is
        the donor, then $j$ is the recipient, and vice versa. Consider the following
        joint policy $s_{i}^{*}=(\pi_{i}^{*},\phi_{i}^{*})$ for each agent $i$:
        \begin{itemize}
            \item \emph{Action policy $\pi_{i}^{*}$.} If $i$ is the donor and
                the matched recipient $j$ has never been publicly signaled as ``defect,''
                then $i$ cooperates; otherwise, $i$ defects forever against $j$ (grim
                trigger~\citep{axelrod1981evolution}).

            \item \emph{Gossip policy $\phi_{i}^{*}$.} If $i$ is the recipient
                at time $t$, then $i$ broadcasts the public message $m_{i}^{t}=a_{j}
                ^{t}$, i.e., $i$ truthfully reports the donor $j$'s action.
        \end{itemize}

        Assume all agents adopt the same joint strategy $s_{i}^{*}=s_{j}^{*}$ for
        all $i,j\in\mathcal{N}$. We claim that the joint profile
        $(s_{i}^{*}, s_{-i}^{*})$ is a subgame-perfect equilibrium if
        $\gamma \geq \frac{c}{b}$. To prove this, we need to prove
        $\forall s_{i}^{'}\neq s_{i}^{*}$ cannot strictly improve agent $i$'s
        expected discounted return at any public history $h^{t}$.

        Now, we prove $\forall \phi_{i}\neq \phi_{i}^{*}$, agent $i$ has no
        incentive to deviate from cooperating when $i$ is the donor at any
        public history $h^{t}$.

        First, we assume a public history $h^{t}$ with no past ``defect'' messages
        about $i$. \emph{Donor's incentive.} According to one-shot deviation priciple~\citep{hendon1996one},
        and let $i$ be the current donor matched with recipient $j$. Since all
        recipients follow the honest gossip policy $\phi^{*}$, the public signal
        truthfully reflects $i$'s action. Then, similar to the proof in Section~\ref{proof:prop_ifh_monitor},
        the grim-trigger strategy ensures that deviating by defecting yields $0$
        forever, while cooperating yields the alternating stream $-c,\, b,\,-c,\,
        b,\dots$, whose expected discounted return is
        $\frac{\gamma b - c}{1-\gamma^{2}}$. Therefore, by the same reasoning as
        before, the donor has no incentive to deviate from cooperating if $\gamma
        \geq \frac{c}{b}$.

        Since agent $i$ has no incentive to deviate from cooperating when $i$ is
        the donor given any gossip policy, then we prove agent $i$ has no
        incentive to deviate from $\phi_{i}^{*}$ when $i$ is the recipient under
        this condition.

        \emph{Recipient's incentive.} Now consider agent $i$ as a recipient. By
        construction of $\pi^{*}$, the donor's future behavior depends only on whether
        $i$ is ever publicly signaled as ``defect.'' Therefore, $i$'s own payoff
        is independent of the gossip policy.
        Thus the recipient cannot strictly improve her expected discounted
        return by deviating from $\phi_{i}^{*}$, making honest gossip incentive-compatible.

        Next, for any public history $h^{t}$ with past ``defect'' messages about
        $i$, others always defect to $i$ forever when $i$ is the donor.
        Therefore, by grim trigger, agent $i$'s future expected discounted
        return is $0$ regardless of $i$'s current action or gossip. Hence, $i$
        has no incentive to deviate from $s_{i}^{*}$ at such public histories.

        Thus, for any public history, agent $i$ has no incentive to deviate from
        $s_{i}^{*}$ unilaterally if $\gamma \geq \frac{c}{b}$. Therefore, there
        exists an SPE of $(s_{i}^{*}, s_{-i}^{*})$ that sustains cooperation
        through public gossip if $\gamma \geq \frac{c}{b}$.


    \end{proof}

    \section{ALIGN Details}
    \label{sec:align_details}

    Algorithm~\ref{alg:align} summarizes the ALIGN framework. At the start of
    the simulation, a set of agents $\mathcal{N}$ is initialized with the environment
    $\mathcal{E}$, a common prompt $K$, and a horizon $T$. Each agent
    $i\in\mathcal{N}$ is associated with an information state $\Theta_{i}$, which
    includes the common prompt $K$, its local memory $M_{i}$, and the public message
    pool $P$. The common prompt $K$ is the prompt shared by all agents, which provides
    background knowledge about the environment, the game rules, and the information
    flow, response format, etc. The agent's memory $M_{i}$ stores its entire
    interaction history, while the public message pool $P$ contains all gossip messages
    generated by all agents.

    At each time step, agents are randomly paired into disjoint pairs. The paring
    rule depends on the specific game setting. For example, in the donation game,
    agents are paired and assigned roles of \emph{actor} and \emph{witness}
    alternately in each round, while in the indirect reciprocity game, agents
    are paired randomly without role switching, they are both \emph{actor} and
    \emph{witness} in each round. The actor observes the environment state
    $o_{i}^{t}$ and samples an action $a_{i}^{t}$ from its action policy
    $\pi_{i}^{\Theta_i}$. The actor also generates an internal reflection $\rho_{i}
    ^{t}$ based on its memory, the public message pool, and the common prompt using
    its reflection module $f_{i}^{\Theta_i}$. The witness observes the
    environment state $o_{j}^{t}$ and the actor's action $a_{i}^{t}$, and produces
    a gossip message $m_{j}^{t}$ using its gossip policy $\phi_{j}^{\Theta_j}$.
    The witness also generates its own reflection $\rho_{j}^{t}$.

    The environment then transitions internally to its next state, which is not fully
    observed by the agents, and assigns rewards $r_{i}^{t}$ and $r_{j}^{t}$ to
    the actor and witness. The gossip message $m_{j}^{t}$ is appended to the public
    message pool $P$. Finally, both agents update their local memories $M_{i}$ and
    $M_{j}$ with their respective observations, actions, gossip messages,
    rewards, and reflections. This process repeats for $T$ time steps, allowing agents
    to interact, learn from their experiences, and share information through gossip.

    \begin{algorithm}
        [t]
        \caption{ALIGN: Agentic Linguistic Gossip Network}
        \label{alg:align}
        \begin{algorithmic}
            \STATE {\bfseries Input:} Environment $\mathcal{E}$, agents
            $\mathcal{N}$, horizon $T$, common prompt $K$ \STATE
            $P \gets \varnothing$ \hfill {\itshape // public message pool}
            \FORALL{$i \in \mathcal{N}$} \STATE $M_{i}\gets \varnothing$ \hfill {\itshape // agent-local memory}
            \STATE $\Theta_{i}\gets (K, M_{i}, P)$ \STATE Initialize action policy
            $\pi_{i}^{\Theta_i}$ and gossip policy $\phi_{i}^{\Theta_i}$ \STATE Initialize
            reflection module $f_{i}^{\Theta_i}$ \ENDFOR

            \FOR{$t = 1$ {\bfseries to} $T$} \STATE Randomly partition $\mathcal{N}$
            into disjoint pairs $(i, j)$ \FORALL{pairs $(i, j)$} \STATE
            $\mathcal{E}$ assigns roles: actor $i$ (acts), witness $j$ (observes
            and gossips)

            \STATE $o_{i}^{t}\gets \textsc{Observe}(\mathcal{E}, i)$ \hfill {\itshape // actor observes}
            \STATE $a_{i}^{t}\sim \pi_{i}^{\Theta_i}(\cdot \mid o_{i}^{t})$ \hfill
            {\itshape // actor acts} \STATE $\rho_{i}^{t}\sim f_{i}^{\Theta_i}(\cdot
            \mid M_{i}, P, K)$ \hfill {\itshape // actor reflects}

            \STATE $o_{j}^{t}\gets \textsc{Observe}(\mathcal{E}, j)$ \hfill
            {\itshape // witness observes} \STATE
            $m_{j}^{t}\sim \phi_{j}^{\Theta_j}(\cdot \mid o_{j}^{t}, a_{i}^{t})$
            \hfill {\itshape // witness gossips} \STATE
            $\rho_{j}^{t}\sim f_{j}^{\Theta_j}(\cdot \mid M_{j}, P, K)$ \hfill
            {\itshape // witness reflects}

            \STATE $(r_{i}^{t}, r_{j}^{t}) \gets \textsc{Step}(\mathcal{E}, a_{i}
            ^{t})$ \hfill {\itshape // env step}

            \STATE $P \gets P \cup \{(t, j, m_{j}^{t})\}$ \hfill
            {\itshape // publish gossip} \STATE
            $M_{i}\gets M_{i}\oplus (o_{i}^{t}, a_{i}^{t}, m_{j}^{t}, r_{i}^{t},
            \rho_{i}^{t})$
            \STATE
            $M_{j}\gets M_{j}\oplus (o_{j}^{t}, a_{i}^{t}, m_{j}^{t}, r_{j}^{t},
            \rho_{j}^{t})$
            \ENDFOR \ENDFOR
        \end{algorithmic}
    \end{algorithm}

    \section{Prompts}
    \label{sec:prompts}

    All prompts in our experiments follow a common structure and can be classified
    into three categories: rule prompts, action prompts, and gossip prompts (Figure~\ref{fig:agent_llm}).
    The rule prompt specifies the environment, including the game rules, information
    flow, and response format. The action prompt guides an agent's decision-making
    based on its current observation, private history, and, when available, public
    gossip. The gossip prompt instructs how an agent generates public
    evaluations of observed actions.

    Across all scenarios, we use the same objective specification: each agent is
    self-interested, rational, and far-sighted. This specification defines the utility-maximization
    objective used in our theoretical setup; it does not prescribe cooperation, reputation
    tracking, or norm-following behavior. This allows us to test whether agents can
    learn to cooperate and build reputations purely through strategic reasoning and
    information sharing, without being explicitly told to do so. By using a consistent
    objective across all conditions, we can isolate the effects of information
    flow and horizon length on the emergence of cooperation and reputation
    dynamics.

    We provide the full prompts for each environment below.

    \subsection{Repeated Donation Game}

    \begin{lstlisting}[
    caption={Rule Prompt of Repeated Donation Game},
    label={lst:rule_prompt_donor},
    breaklines=true,
    columns=fullflexible,
    keepspaces=true
    ]
    ## Self-awareness

    You are a self-interested, far-sighted, rational agent, playing a
    [HORIZON-TYPE]-horizon multi-round donation game.

    - Self-interested: your utility function is only your own expected payoff.
    - Far-sighted: you maximize expected discounted cumulative reward
    (discount factor $discount_factor) over all remaining rounds.
    - Rational: you always choose the strategy with the highest expected payoff.

    ## Overview of the game

    - Every agent starts with $initial_resources resource units.
    - In each round, two agents are randomly paired without replacement as donor and recipient; the same pair never meets twice.
    - The donor chooses one of:
        - cooperate: pay $cost, recipient gains $benefit
        - defect: transfer 0 to the recipient
    - The recipient observes whether the donor cooperated or defected and receives $benefit if the donor cooperates, or 0 otherwise.
    - Then, the recipient:
        - With gossip: broadcasts a public assessment of the donor's choice.
        - Without gossip: takes no further action.
    - After payoffs, agents are reshuffled and roles may switch.

    ## Game length

    - Infinite horizon: rounds continue indefinitely.
    - Finite horizon: the game lasts exactly $horizon_length rounds.

    ## Immediate reward per round

    If the donor cooperates:
    - Donor reward = -$cost
    - Recipient reward = $benefit

    If the donor defects:
    - Donor reward = 0
    - Recipient reward = 0

    ## Information flow (only when gossip is enabled)

    - All messages are publicly logged and visible to all agents.
    - Messages do not directly change payoffs.
    - Future agents may condition their behavior on past messages.
    \end{lstlisting}

    \begin{lstlisting}[caption={Donation Prompt of Repeated Donation Game},label={lst:donation_prompt_donor},
    breaklines=true,
    columns=fullflexible,
    keepspaces=true]
    ## Donor Instructions

    ### Identity and Observation
    Your name is $donor_name. In this round, you are the donor, paired with recipient $recipient_name. You currently hold $donor_resources resource units; the recipient holds $recipient_resources.

    ### Objective
    Your sole objective is to maximize your expected discounted cumulative reward over the remaining rounds of a [HORIZON-TYPE]-horizon game (discount factor $discount_factor; for finite horizon, total length
    $horizon_length).

    ### Memory
    You may recall past interactions. Historical records are provided as: $stm

    ### Community Messages
    With gossip enabled: You may review the public log of earlier broadcasts: $historical_messages

    Without gossip:
    No public messages are available.

    ### Accountability Notice
    With gossip enabled: Your action is not directly observed by other agents, but the recipient will publish a message about it. This message is permanently logged and visible to all future agents.

    Without gossip: Your action is not observed by others and no message will be broadcast.

    ### Common Knowledge for Subgame Perfect Equilibrium Reasoning
    (Included only when equilibrium knowledge is enabled)
    - Finite horizon: use backward induction.
    - Infinite horizon: apply the one-shot deviation principle.
    - A strategy profile is a Subgame Perfect Equilibrium if no profitable deviation exists.

    ### Donation Rule
    Choose your action as the donor for this round:
    - cooperate: pay $cost, recipient gains $benefit
    - defect: pay 0, recipient gains 0

    ### Response Guidelines
    - Reflect using Identity and Observation, Memory, and (if available) Community Messages.
    - Evaluate which action maximizes your objective.
    - Ask: "Would deviating at this step improve my total expected payoff?"
    - Provide your action and a short justification.

    If equilibrium knowledge is enabled, additionally justify why your choice is consistent with Subgame Perfect Equilibrium reasoning.

    Return JSON only in the following format:

    {
    "justification": "a short explanation",
    "donor_action": "cooperate" or "defect"
    }
    \end{lstlisting}

    \begin{lstlisting}[
    caption={Gossip Prompt of Repeated Donation Game},
    label={lst:gossip_prompt_donor},
    breaklines=true,
    columns=fullflexible,
    keepspaces=true
    ]
    ## Recipient Instructions

    ### Identity and Observation
    Your name is $recipient_name. In this round, you are the recipient, and you are paired with donor $donor_name.
    Before the donation, you held $recipient_resources resource units; the matched donor $donor_name held $donor_resources. The donor paid a personal cost of $donation ($donation_ratio of the donor's
    current stock) to transfer $benefit units to you.

    ### Objective
    Your sole objective is to maximize your expected discounted cumulative reward ver the remaining rounds of a [HORIZON-TYPE]-horizon game (discount factor $discount_factor; for finite horizon, total length $horizon_length).

    ### Memory
    You can recall your interaction history of past rounds. Historical records are: $stm

    ### Community Messages
    You can review the public log of earlier donation broadcasts and track the past
    behavior of your current donor to judge their trustworthiness: $historical_messages

    ### Accountability Notice
    The donor's action is not directly observed by other agents. However, you will publish a message about it. This message is permanently logged and visible to all future agents and may affect how others treat the donor later.

    ### Common Knowledge for Finding Subgame Perfect Equilibria
    (Included only when equilibrium knowledge is enabled.)
    - Finite-horizon games: use backward induction. Start from the last round and determine the optimal actions there, then move
    backward step by step.
    - Infinite-horizon games: use the one-shot deviation principle.
    At any round, imagine deviating from the planned strategy for just one step and ask whether this deviation increases total expected payoff. If no profitable one-shot deviation exists for any player, the strategy profile is a Subgame Perfect Equilibrium.

    ### Gossip Rule
    Now, choose exactly one tone from the list [praising, neutral, mocking, complaint, criticism] and write a public message about the donor's action using the selected tone.

    - praising: express gratitude and positive evaluation
    - neutral: provide an objective, matter-of-fact assessment
    - mocking: use sarcastic ridicule
    - complaint: convey mild disappointment
    - criticism: deliver a harsh negative evaluation

    ### Response Guidelines
    - Reflect using Identity and Observation, Memory, and Community Messages to judge whether the donor is trustworthy and to avoid being exploited.
    - Reflect on what action (choice of tone and message) will maximize your objective.
    - Ask yourself: "Would deviating at this step improve my total expected payoff?"
    - After reflection, choose a tone and write a concise public message to the community.

    If equilibrium knowledge is enabled, additionally:
    - Reflect using the Common Knowledge for Finding Subgame Perfect Equilibria provided above.
    - Explain how your chosen tone and message align with equilibrium reasoning.

    Return JSON only in the following format:

    {
    "justification": "a short explanation of how your choice follows from reflecting on Subgame Perfect Equilibrium reasoning",
    "tone": "one of {'praising', 'neutral', 'mocking', 'complaint', 'criticism'}",
    "gossip": "a concise public message to the community (less than 150 words)"
    }
    \end{lstlisting}

    \subsection[ Indirect Reciprocity Game]{ Indirect Reciprocity Game}


    As shown in table~\ref{tab:one_shot_ir_game_payoff}, the payoff structure of
    the one-shot indirect reciprocity game is identical to the prisoner's dilemma~\citep{rapoport1965prisoner}.
    According to~\citep{ohtsuki2006leading,ohtsuki2004should}, the indirect reciprocity
    game is a multi-round prisoner's dilemma game where agents are randomly paired
    without replacement in each round so that the same pair never meets twice.
    Therefore, we refer the indirect reciprocity game as the multi-round prisoner's
    dilemma game in our prompts. Additionally, agents in the indirect reciprocity
    game do not have fixed roles of donor and recipient; instead, both agents simultaneously
    choose to cooperate or defect and then broadcast gossip messages about each other's
    actions. The rule prompt and action prompt for the multi-round prisoner's
    dilemma game are provided below.

    \begin{lstlisting}[
  caption={Rules Prompt of Indirect Reciprocity Game (Multi-Round Prisoner's Dilemma)},
  label={lst:pd_rule_prompt},
  breaklines=true,
  columns=fullflexible,
  keepspaces=true
]
## Self-awareness

You are a self-interested, far-sighted, rational agent, playing a
[HORIZON-TYPE]-horizon multi-round prisoner's dilemma game.

- Self-interested: your utility function is only your own expected payoff. You may track, predict, or even raise other agents' payoffs, but only when doing so increases your own expected return. Other agents' welfare carries no intrinsic weight; it matters only through its effect on your future rewards.
- Far-sighted: you aim to maximize your expected discounted cumulative reward (discount factor $discount_factor) over all remaining rounds, not just the immediate reward.
- Rational: you always choose the strategy with the highest expected payoff.

## Overview of the game

- You are playing a multi-round prisoner's dilemma game with [HORIZON-TYPE] horizon.
- In each round, two agents are randomly paired without replacement; the same pair never meets twice across all rounds.
- Each player chooses one of:
  - cooperate: pay fitness cost $cost
  - defect: pay no cost
- After both players act, each player observes the opponent's action and then:
  - With gossip enabled: broadcasts a public assessment of the opponent's choice; this message is permanently logged and visible to all agents.
  - Without gossip: does nothing (no message is broadcast).
- After payoffs are realized, agents are reshuffled and paired with new opponents in the next round.

## Game length

- Infinite horizon: this is an infinite-horizon prisoner's dilemma game. Rounds continue indefinitely without fixed termination; the expected horizon is infinite.
- Finite horizon: this is a finite-horizon prisoner's dilemma game of exactly $horizon_length rounds. After round $horizon_length, play stops and no future rewards are possible. Termination is fixed and independent of history.

## Immediate reward per round

- If you cooperate and the opponent cooperates:
  - Your reward = $benefit - $cost
  - Opponent reward = $benefit - $cost
- If you cooperate and the opponent defects:
  - Your reward = -$cost
  - Opponent reward = $benefit
- If you defect and the opponent cooperates:
  - Your reward = $benefit
  - Opponent reward = -$cost
- If you defect and the opponent defects:
  - Your reward = 0
  - Opponent reward = 0

Note: The benefit is larger than the cost ($benefit > $cost), so mutual cooperation yields the highest total payoff.

## Information flow (only when gossip is enabled)

- All messages are publicly logged and visible to all agents throughout the game.
- Messages do not directly change anyone's payoffs.
- Future agents can read past messages, which may influence their actions.
\end{lstlisting}

    \begin{lstlisting}[
  caption={Action Prompt of Indirect Reciprocity Game (Multi-Round Prisoner's Dilemma)},
  label={lst:pd_action_prompt},
  breaklines=true,
  columns=fullflexible,
  keepspaces=true
]
## Player Instructions - Action Phase

### Identity and Observation
Your name is $player_name. In this round, you are paired with opponent
$opponent_name.

This is the action phase where you choose to cooperate or defect.

### Objective
Your sole objective is to maximize your expected discounted cumulative reward over the remaining rounds of a [HORIZON-TYPE]-horizon game (discount factor $discount_factor; for finite horizon, total length $horizon_length).

### Memory
You can recall your interaction history of past rounds. Historical records are: $stm

### Community Messages
With gossip enabled: You can review the public log about earlier prisoner's dilemma broadcasts and track the past behavior of your current opponent to judge their trustworthiness: $historical_messages

Without gossip: No public messages are available.

### Accountability Notice
With gossip enabled: Your action is not directly observed by other agents. However, your opponent will publish a message about it in the gossip phase. This message is permanently logged and visible to all future agents and may affect how others treat you later.

Without gossip:
Your action is not directly observed by other agents and no message will be broadcast.

### Common Knowledge for Finding Subgame Perfect Equilibria
(Included only when equilibrium knowledge is enabled.)
- Finite-horizon games: use backward induction. Start from the last round and determine the optimal actions there, then move backward step by step, choosing strategies that remain optimal given future play.
- Infinite-horizon games: use the one-shot deviation principle. At any round, imagine deviating from the planned strategy for just one step and ask whether this deviation increases your total expected payoff. If no profitable one-shot deviation exists for any player, the strategy profile is a Subgame Perfect Equilibrium.

### Prisoner's Dilemma Rule
Now, choose your action for this round.

You may choose either:
- C: cooperate (pay fitness cost $cost)
- D: defect (pay no cost)

Your payoff depends on both your choice and your opponent's choice:
- If both cooperate: you get $benefit - $cost
- If you cooperate and your opponent defects: you get -$cost
- If you defect and your opponent cooperates: you get $benefit
- If both defect: you get 0

### Response Guidelines
- Reflect using Identity and Observation, Memory, and (if available) Community Messages, judging whether the opponent is trustworthy to avoid being exploited.
- Reflect on which action will maximize your objective.
- Ask: "Would deviating at this step improve my total expected payoff?"
- After reflection, provide your action and a short explanation.

If equilibrium knowledge is enabled, additionally justify why your choice is consistent with Subgame Perfect Equilibrium reasoning.

Return JSON only in the following format:

{
  "justification": "a short explanation of how your choice follows from reflecting on Subgame Perfect Equilibrium reasoning",
  "player_action": "C" or "D"
}
\end{lstlisting}

    \begin{lstlisting}[
  caption={Gossip Prompt of Indirect Reciprocity Game (Multi-Round Prisoner's Dilemma)},
  label={lst:pd_gossip_prompt},
  breaklines=true,
  columns=fullflexible,
  keepspaces=true
]
## Player Instructions - Gossip Phase

### Identity and Observation
Your name is $player_name. In this round, you are paired with opponent
$opponent_name. Your opponent chose to $opponent_action (cooperate or defect) in this round.

This is the gossip phase where you choose how to publicly assess your
opponent's action.

### Objective
Your sole objective is to maximize your expected discounted cumulative reward over the remaining rounds of a [HORIZON-TYPE]-horizon game (discount factor $discount_factor; for finite horizon, total length $horizon_length).

### Memory
You can recall your interaction history of past rounds. Historical records are:
$stm

### Community Messages
You can review the public log about earlier prisoner's dilemma broadcasts and track the past behavior of your current opponent to judge their trustworthiness: $historical_messages

### Accountability Notice
Your opponent's action is not directly observed by other agents. However, you will publish a message about it. This message is permanently logged and visible to all future agents and may affect how others treat your opponent later.

### Common Knowledge for Finding Subgame Perfect Equilibria
(Included only when equilibrium knowledge is enabled.)
- Finite-horizon games: use backward induction.
- Infinite-horizon games: use the one-shot deviation principle.
- A strategy profile is a Subgame Perfect Equilibrium if no profitable
  deviation exists.

### Gossip Rule
Now, choose exactly one tone from the list: [praising, neutral, mocking, complaint, criticism] and write a public message about your opponent's action using the selected tone.

- praising: express gratitude and positive evaluation
- neutral: provide an objective, matter-of-fact assessment
- mocking: use sarcastic ridicule
- complaint: convey mild disappointment
- criticism: deliver a harsh negative evaluation

### Response Guidelines
- Reflect using Identity and Observation, Memory, and Community Messages, judging whether the opponent is trustworthy and how your message may shape future interactions.
- Reflect on what tone and message will maximize your objective.
- Ask: "Would deviating at this step improve my total expected payoff?"
- After reflection, choose a tone and write a concise public message.

If equilibrium knowledge is enabled, additionally justify how your chosen tone and message align with Subgame Perfect Equilibrium reasoning.

Return JSON only in the following format:

{
  "justification": "a short explanation of how your choice follows from reflecting on Subgame Perfect Equilibrium reasoning",
  "tone": "one of {'praising', 'neutral', 'mocking', 'complaint', 'criticism'}",
  "gossip": "a concise public message to the community (less than 150 words)"
}
\end{lstlisting}

    \subsection[ Investment Game]{Investment Game}

    To model the multi-round investment (trust) game, we adapt the standard one-shot
    trust game~\citep{berg1995trust} into a sequential setting where agents are
    randomly paired without replacement in each round, ensuring that the same pair
    never meets twice. In each round, one agent is assigned the role of investor
    (first mover) and the other as responder (second mover). The investor
    chooses an amount to invest in the responder; this amount is then multiplied
    by a factor (typically set to $3$) and transferred to the responder. The
    responder then decides how much to return to the investor. After both actions
    are taken and payoffs are realized, both agents observe each other's actions
    and broadcast gossip messages evaluating the partner's behavior. Agents'
    resources are updated after each round, so the global state evolves over time.
    Both roles have continuous action spaces whose feasible ranges depend on
    their current resources. Cooperation can emerge because the investor and responder
    can mutually transfer resources, and indirect reciprocity arises as agents' evaluative
    messages propagate through the population, including their potential future
    partners. The rule prompt, investor action prompt, responder action prompt,
    investor gossip prompt, and responder gossip prompt used in this game are provided
    below.

    \begin{lstlisting}[
  caption={Rule Prompt of Multi-Round Investment Game},
  label={lst:trust_rule_prompt},
  breaklines=true,
  columns=fullflexible,
  keepspaces=true
]
## Self-awareness

You are a self-interested, far-sighted, rational agent, playing a
[HORIZON-TYPE]-horizon multi-round trust game.

- Self-interested: your utility function is only your own expected payoff. You may track, predict, or even increase other agents' payoffs, but only when doing so increases your own expected return. Other agents' welfare carries no intrinsic weight; it matters only through its effect on your future rewards.
- Far-sighted: you aim to maximize your expected discounted cumulative reward (discount factor $discount_factor) over all remaining rounds, not just the immediate reward.
- Rational: you always choose the strategy with the highest expected payoff.

## Overview of the Game

You are playing a multi-round trust game with [HORIZON-TYPE] horizon.

- Every agent starts with $initial_resources resource units.
- In each round, two agents are randomly paired without replacement as:
    - Investor (first mover)
    - Responder (second mover)
    The same pair never meets twice across all rounds.

Stage game per round:
1. The investor observes their own and the responder's current resources.
2. The investor chooses an investment amount I in [0, current_resources].
3. The investment I is multiplied by $investment_multiplier and transferred to the responder.
4. The responder chooses a return amount R in [0, I * $investment_multiplier] to send back to the investor.
5. Both players' payoffs for the round are realized.
6. Both players observe each other's actions in this round.
7. The investor then:
   - With gossip enabled: observes the responder's return and broadcasts a public message about the responder's behavior this round.
   - Without gossip: observes the responder's return; no public message is sent.
8. The responder then:
   - With gossip enabled: observes the investor's investment and broadcasts a public message about the investor's behavior this round.
   - Without gossip: observes the investor's investment; no public message is sent.

- With gossip enabled: both agents send one public message per round (one from the investor and one from the responder). These two messages are permanently logged and visible to all agents.
- Without gossip: no public gossip is allowed; agents only privately observe each other's actions.

After payoffs (and any messages) are processed, agents are reshuffled and roles may switch in later rounds (an agent who was an investor in one round may be a responder in a later round, and vice versa).

## Game Length

- Infinite horizon:
  - This is an infinite-horizon trust game.
  - Rounds continue indefinitely without fixed termination; the expected horizon is infinite.
- Finite horizon:
  - This is a finite-horizon trust game of exactly $horizon_length rounds.
  - After round $horizon_length, play stops; no future rewards are possible.
  - Termination is fixed and independent of history.

## Immediate Reward Per Round (Standard Trust Game)

Let I be the amount the investor chooses to invest, and let R be the amount the responder chooses to return.

- The investment I is multiplied by $investment_multiplier and added to the responder's resources.
- The responder then chooses a return amount R in [0, I * $investment_multiplier].

Investor reward this round:
- The investor loses I but receives R.
- Net payoff change from this round: -I + R.

Responder reward this round:
- The responder gains I * $investment_multiplier but gives back R.
- Net payoff change from this round: I * $investment_multiplier - R.

## Information flow and Gossip (only when gossip is enabled)

- At the end of each round, after both actions and payoffs:
  - The investor observes the responder's return decision.
  - The responder observes the investor's investment decision.
- Each agent can then broadcast one public message about their coplayer's behavior in that round:
  - one message from the investor about the responder,
  - one message from the responder about the investor.
- All messages are publicly logged and visible to all agents throughout the game.
- Messages do not directly change anyone's payoffs.
- Future agents can read past messages, which may influence their actions.
\end{lstlisting}

    \begin{lstlisting}[
  caption={Investor Action Prompt of Multi-Round Investment Game},
  label={lst:trust_investor_prompt},
  breaklines=true,
  columns=fullflexible,
  keepspaces=true
]
## Investor Instructions

### Identity and Observation
Your name is $investor_name. In this round, you are the investor (first mover), paired with responder $responder_name. You currently hold $investor_resources resource units; the responder holds $responder_resources. As the investor, you will choose how much to trust the responder by investing part of your resources.

### Objective
Your sole objective is to maximize your expected discounted cumulative reward over the remaining rounds of a [HORIZON-TYPE]-horizon trust game (discount factor $discount_factor; for finite horizon, total length $horizon_length).

### Memory
You can recall your interaction history of past rounds. Historical records are: $stm

### Community Messages
With gossip enabled: You can review the public log of earlier trust-game gossip messages, tracking the past behavior of your current responder to judge their trustworthiness: $historical_messages

Without gossip: No public gossip messages are available.

### Accountability Notice
With gossip enabled: At the end of each round, after you and your coplayer observe each other's actions, both of you can broadcast a public message about your coplayer's behavior.
- Your coplayer (the responder) may broadcast a message about you.
- You may also broadcast a message about them when asked.
All messages are permanently logged and visible to all future agents and may affect how others treat you later.

Without gossip: Your investment decision and the responder's return are not publicly discussed; no gossip messages will be broadcast about this round.

### Common Knowledge for Finding Subgame Perfect Equilibria
(Included only when equilibrium knowledge is enabled.)
- Finite-horizon games: use backward induction.
- Infinite-horizon games: use the one-shot deviation principle.
- A strategy profile is a Subgame Perfect Equilibrium if no profitable one-shot deviation exists for any player.

### Trust Action Rule
Now, choose your investment action as the investor for this round.

You may invest any real amount I between: 0 and $investor_resources.

### Response Guidelines
- Reflect using Identity and Observation, Memory, and (if available)
  Community Messages, judging whether the responder is trustworthy to avoid being exploited.
- Reflect on what investment amount will maximize your objective.
- Ask: "Would deviating at this step improve my total expected payoff?"
- After reflection, provide your action and a short explanation.

If equilibrium knowledge is enabled, additionally justify why your choice is consistent with Subgame Perfect Equilibrium reasoning in the trust game.

Return JSON only in the following format:

{
  "justification": "a short explanation of how your choice follows from reflecting on strategic (and, if applicable, Subgame Perfect Equilibrium) reasoning in the trust game",
  "investor_action": "a real number between 0 and $investor_resources representing how much you invest"
}
\end{lstlisting}

    \begin{lstlisting}[
  caption={Responder Action Prompt of Multi-Round Investment Game},
  label={lst:trust_responder_prompt},
  breaklines=true,
  columns=fullflexible,
  keepspaces=true
]
## Responder Instructions

### Identity and Observation
Your name is $responder_name. In this round, you are the responder (second mover), paired with investor $investor_name. Before the investment, you held $responder_resources resource units; the investor held $investor_resources. The investor invested $investment (this equals $investment_ratio of the investor's current stock), which was multiplied to $benefit units and transferred to you. You now choose how much to return to the investor in this round.

### Objective
Your sole objective is to maximize your expected discounted cumulative reward over the remaining rounds of a [HORIZON-TYPE]-horizon trust game (discount factor $discount_factor; for finite horizon, total length $horizon_length).

### Memory
You can recall your interaction history of past rounds. Historical records are: $stm

### Community Messages
With gossip enabled: You can review the public log about earlier gossip in the trust game, tracking the past behavior of your current investor to judge their trustworthiness: $historical_messages

Without gossip: No public gossip messages are available.

### Accountability Notice
With gossip enabled: At the end of each round, after you and the investor observe each other's actions, both of you can broadcast a public message about your coplayer's behavior.
- The investor may broadcast a message about you.
- You may also broadcast a message about them when asked.
These messages are permanently logged and visible to all future agents and may affect how others treat you later. When you choose how much to return, you may anticipate the effect of future gossip on your long-run payoff.

Without gossip: Your return decision and the investor's investment are not publicly discussed; no gossip messages will be broadcast about this round.

### Common Knowledge for Finding Subgame Perfect Equilibria
(Included only when equilibrium knowledge is enabled.)
- Finite-horizon games: use backward induction.
- Infinite-horizon games: use the one-shot deviation principle.
- A strategy profile is a Subgame Perfect Equilibrium if no profitable
  one-shot deviation exists for any player.

### Return Action Rule
Now, choose your return amount as the responder for this round.

The investor's investment was multiplied to $benefit units and added to your resources. You may return any real amount R between: 0 and $benefit.

### Response Guidelines
- Reflect using Identity and Observation, Memory, and (if available)
  Community Messages, judging how your return choice today affects:
  - your immediate payoff, and
  - others' future treatment of you (especially under gossip).
- Reflect on what return amount will maximize your objective.
- Ask: "Would deviating at this step improve my total expected payoff?"
- After reflection, provide your action and a short explanation.

If equilibrium knowledge is enabled, additionally justify why your choice is consistent with Subgame Perfect Equilibrium reasoning in the trust game.

Return JSON only in the following format:

{
  "justification": "a short explanation of how your choice follows from reflecting on strategic (and, if applicable, Subgame Perfect Equilibrium) reasoning in the trust game",
  "responder_action": "a real number between 0 and $benefit representing how much you return to the investor"
}
\end{lstlisting}

    \begin{lstlisting}[
  caption={Investor Gossip Prompt of Multi-Round Investment Game},
  label={lst:trust_investor_gossip_prompt},
  breaklines=true,
  columns=fullflexible,
  keepspaces=true
]
## Investor Gossip Instructions

### Identity and Observation
Your name is $investor_name. In this round, you were the investor and you were paired with responder $responder_name.

- You invested $investment units (this equals $investment_ratio of your current stock).
- This investment was multiplied into $benefit units and transferred to the responder.
- The responder returned $returned_amount units to you (this equals $returned_ratio of the transferred benefit $benefit).

You have fully observed:
- how much you invested,
- the multiplied amount you transferred,
- and the responder's actual return decision in this round.

### Objective
Your sole objective is to maximize your expected discounted cumulative reward over the remaining rounds of a [HORIZON-TYPE]-horizon trust game (discount factor $discount_factor; for finite horizon, total length $horizon_length).

### Memory
You can recall your interaction history of past rounds. Historical records are: $stm

### Community Messages
You can review the public log about earlier gossip in the trust game, tracking the past behavior of your current responder and other agents: $historical_messages

### Accountability Notice
At the end of each round, after you and the responder observe each other's actions, both of you can broadcast a public message about your coplayer's behavior.
- The responder can broadcast a message about you.
- You will now broadcast a message about them.
Your message is permanently logged and visible to all future agents and may affect how others treat both you and your coplayer.

### Common Knowledge for Finding Subgame Perfect Equilibria
(Included only when equilibrium knowledge is enabled.)
- Finite-horizon games: use backward induction.
- Infinite-horizon games: use the one-shot deviation principle.
- A strategy profile is a Subgame Perfect Equilibrium if no profitable
  one-shot deviation exists for any player.

### Gossip Rule (Investor)
You have already observed the responder's action in this trust game round:
- your own investment $investment (ratio $investment_ratio of your stock),
- the multiplied benefit $benefit transferred to the responder,
- the responder's returned amount $returned_amount (ratio $returned_ratio of $benefit).

Now, choose exactly one tone from the list: [praising, neutral, mocking, complaint, criticism] and write a public message about the responder's behavior using the selected tone.

- praising: express gratitude and positive evaluation
- neutral: provide an objective, matter-of-fact assessment
- mocking: use sarcastic ridicule
- complaint: convey mild disappointment
- criticism: deliver a harsh negative evaluation

### Response Guidelines
- Reflect using Identity and Observation, Memory, and Community Messages,
  judging how your gossip may influence:
  - others' beliefs about this responder (given $investment, $benefit,
    $returned_amount, $returned_ratio),
  - and your own future payoffs through reputational effects.
- Reflect on what gossip tone and content will maximize your objective.
- Ask: "Would deviating at this step improve my total expected payoff?"
- After reflection, choose a tone and write a concise message.

If equilibrium knowledge is enabled, additionally justify how your choice aligns with Subgame Perfect Equilibrium reasoning in the trust game.

Return JSON only in the following format:

{
  "justification": "a short explanation of how your choice follows from reflecting on strategic reasoning (and, if applicable, Subgame Perfect Equilibrium reasoning) in the trust game",
  "tone": "one of {'praising', 'neutral', 'mocking', 'complaint', 'criticism'}",
  "gossip": "a concise public message to the community (less than 150 words)"
}
\end{lstlisting}

    \begin{lstlisting}[
  caption={Responder Gossip Prompt of Multi-Round Investment Game},
  label={lst:trust_responder_gossip_prompt},
  breaklines=true,
  columns=fullflexible,
  keepspaces=true
]
## Responder Gossip Instructions

### Identity and Observation
Your name is $responder_name. In this round, you were the responder and you were paired with investor $investor_name.

- Before the round, the investor held $investor_resources resource units.
- The investor invested $investment units (this equals $investment_ratio of their current stock).
- This investment was multiplied into $benefit units and transferred to you.
- You returned $returned_amount units to the investor (this equals $returned_ratio of the transferred benefit $benefit).

You have fully observed:
- how much the investor chose to invest,
- the multiplied amount $benefit you received,
- and your own return decision in this round.

### Objective
Your sole objective is to maximize your expected discounted cumulative reward over the remaining rounds of a [HORIZON-TYPE]-horizon trust game (discount factor $discount_factor; for finite horizon, total length $horizon_length).

### Memory
You can recall your interaction history of past rounds. Historical records are:
$stm

### Community Messages
You can review the public log about earlier gossip in the trust game, tracking the past behavior of your current investor and other agents: $historical_messages

### Accountability Notice
At the end of each round, after you and the investor observe each other's actions, both of you can broadcast a public message about your coplayer's behavior.
- The investor can broadcast a message about you.
- You will now broadcast a message about them.
Your message is permanently logged and visible to all future agents and may affect how others treat both you and your coplayer.

### Common Knowledge for Finding Subgame Perfect Equilibria
(Included only when equilibrium knowledge is enabled.)
- Finite-horizon games: use backward induction.
- Infinite-horizon games: use the one-shot deviation principle.
- A strategy profile is a Subgame Perfect Equilibrium if no profitable
  one-shot deviation exists for any player.

### Gossip Rule (Responder)
You have already observed the investor's action in this trust game round:
- the investor's investment $investment (ratio $investment_ratio of their stock),
- the multiplied benefit $benefit that you received,
- and your own return $returned_amount (ratio $returned_ratio of $benefit).

Now, choose exactly one tone from the list: [praising, neutral, mocking, complaint, criticism] and write a public message about the investor's behavior using the selected tone.
- praising: express gratitude and positive evaluation
- neutral: provide an objective, matter-of-fact assessment
- mocking: use sarcastic ridicule
- complaint: convey mild disappointment
- criticism: deliver a harsh negative evaluation

### Response Guidelines
- Reflect using Identity and Observation, Memory, and Community Messages, and judge how generous or exploitative the investor's behavior was and how your gossip may influence future play.
- Reflect on what gossip tone and content will maximize your objective.
- Ask: "Would deviating at this step improve my total expected payoff?"
- After reflection, choose a tone and write a concise public message.

If equilibrium knowledge is enabled, additionally justify how your chosen tone and message align with Subgame Perfect Equilibrium reasoning in the trust game.

Return JSON only in the following format:

{
  "justification": "a short explanation of how your choice follows from reflecting on strategic (and, if applicable, Subgame Perfect Equilibrium) reasoning in the trust game",
  "tone": "one of {'praising', 'neutral', 'mocking', 'complaint', 'criticism'}",
  "gossip": "a concise public message to the community (less than 150 words)"
}
\end{lstlisting}

    \subsection{Transaction Market}
    \begin{lstlisting}[
caption={Rule Prompt of Transaction Market Scenario},
label={lst:rule_prompt_market},
breaklines=true,
columns=fullflexible,
keepspaces=true
]
## Self-awareness
You are a self-interested, far-sighted, rational agent playing the [HORIZON-TYPE]-horizon Transaction Market Scenario.
- Self-interested: your utility is only your own cumulative payoff.
- Far-sighted: maximize your expected discounted cumulative reward with discount factor $discount_factor.
- Rational: choose the action with the highest expected payoff given your beliefs.

## Overview of the Game
- There are multiple sellers and multiple buyers.
- In each round, one seller and one buyer are matched for a single transaction.
- Seller chooses product quality: H (high) or L (low). High quality costs more to produce.
- Buyer chooses purchase type: c (customized), s (standardized), or none (refuse).
- If buyer chooses none, no trade happens and both receive 0 reward for this round.
- Otherwise, rewards follow the payoff rules below.
- After payoffs, agents are reshuffled and matched again for the next round. The same pair never meets twice across all rounds.

### Game Length
- Infinite-horizon: rounds continue indefinitely.

(Finite-horizon variant:)
- Finite-horizon: exactly $horizon_length rounds.
- After round $horizon_length, play stops.

## Payoff Matrix (Seller payoff, Buyer payoff)
```text
                    Buyer chooses c (customized)             Buyer chooses s (standardized)
Seller chooses H   ($seller_Hc_reward, $buyer_Hc_reward)     ($seller_Hs_reward, $buyer_Hs_reward)
Seller chooses L   ($seller_Lc_reward, $buyer_Lc_reward)     ($seller_Ls_reward, $buyer_Ls_reward)

Buyer chooses none: (0, 0)

Information flow (only when gossip is enabled)
- After each round, the buyer will write a public message about the seller. All messages are publicly logged, visible to all agents throughout the game.
- Messages do not directly change anyone's payoffs.
- Future buyers can read past messages, which may influence their actions.
\end{lstlisting}

    \begin{lstlisting}[
caption={Seller Prompt of Transaction Market Scenario},
label={lst:seller_prompt_market},
breaklines=true,
columns=fullflexible,
keepspaces=true
]
## Seller Instructions
### Identity and Observation
Your name is $seller_name. In this round you are the seller and you are matched with buyer $buyer_name.

### Objective
Your sole objective is to maximize your expected discounted cumulative reward over the remaining rounds of a [HORIZON-TYPE]-horizon game
(discount factor $discount_factor; for finite horizon, total length $horizon_length).

### Memory
You can recall your interaction history of past rounds. Here are the historical records:
$stm

### Community Messages (only when gossip is enabled)
You can review the public log of all buyers' messages:
$historical_messages

### Accountability Notice
With gossip enabled:
Your action is not directly observed by other agents. However, the buyer will post a public message about you. This message is permanently logged and visible to **all** future buyers and may affect how others treat you later.

Without gossip:
Your action is not directly observed by other agents and no message will be broadcast.

### Common Knowledge for Finding Subgame Perfect Equilibria
(Included only when equilibrium knowledge is enabled)
- **Finite-horizon games:** Use backward induction.
  - Start from the last round and determine the optimal actions there.
  - Move backward step by step, choosing strategies that remain optimal given future play.
- **Infinite-horizon games:** Use the one-shot deviation principle.
  - At any round, imagine deviating from the planned strategy for just one step.
  - Ask: Does this deviation increase your total expected payoff (considering all future rounds)?
  - If yes, the original strategy is not an equilibrium.
  - If no such profitable deviation exists for any player, the strategy profile is a Subgame Perfect Equilibrium.

### Seller Action Rule
Choose product quality:
- H: high quality
- L: low quality

Your payoff depends on the buyer's choice and your choice:
- If buyer chooses c and you choose H: your reward = $seller_Hc_reward
- If buyer chooses s and you choose H: your reward = $seller_Hs_reward
- If buyer chooses c and you choose L: your reward = $seller_Lc_reward
- If buyer chooses s and you choose L: your reward = $seller_Ls_reward
- If buyer chooses none: your reward = 0

Use the payoff matrix from the rules prompt when deciding.

### Response Guidelines
- Reflect using **Identity and Observation**, **Memory** and (if available) **Community Messages** provided above.
- Reflect on what action will maximize your objective.
- Ask yourself: "Would deviating at this step improve my total expected payoff?"
- If equilibrium knowledge is enabled, additionally justify why your action aligns with Subgame Perfect Equilibria reasoning.

**Return JSON ONLY in this exact format**
```json
{
  "justification": "brief reason in 1-3 sentences",
  "seller_action": "exactly one of 'H' or 'L'"
}
\end{lstlisting}

    \begin{lstlisting}[
caption={Buyer Prompt of Transaction Market Scenario},
label={lst:buyer_prompt_market},
breaklines=true,
columns=fullflexible,
keepspaces=true
]
## Buyer Instructions
### Identity and Observation
Your name is $buyer_name. In this round you are the buyer and you are matched with seller $seller_name.

### Objective
Your sole objective is to maximize your expected discounted cumulative reward over the remaining rounds of a [HORIZON-TYPE]-horizon game
(discount factor $discount_factor; for finite horizon, total length $horizon_length).

### Memory
You can recall your interaction history of past rounds. Here are the historical records:
$stm

### Community Messages (only when gossip is enabled)
You can review the public log about earlier buyer messages, track the past behavior of your current seller to judge their trustworthiness:
$historical_messages

### Common Knowledge for Finding Subgame Perfect Equilibria
(Included only when equilibrium knowledge is enabled)
- **Finite-horizon games:** Use backward induction.
  - Start from the last round and determine the optimal actions there.
  - Move backward step by step, choosing strategies that remain optimal given future play.
- **Infinite-horizon games:** Use the one-shot deviation principle.
  - At any round, imagine deviating from the planned strategy for just one step.
  - Ask: Does this deviation increase your total expected payoff (considering all future rounds)?
  - If yes, the original strategy is not an equilibrium.
  - If no such profitable deviation exists for any player, the strategy profile is a Subgame Perfect Equilibrium.

### Action Rule
Choose purchase type:
- c: customized
- s: standardized
- none: refuse to buy

Your payoff depends on the seller's choice and your choice:
- If seller chooses H and you choose c: your reward = $buyer_Hc_reward
- If seller chooses H and you choose s: your reward = $buyer_Hs_reward
- If seller chooses L and you choose c: your reward = $buyer_Lc_reward
- If seller chooses L and you choose s: your reward = $buyer_Ls_reward
- If you choose none: your reward = 0

Use the payoff matrix from the rules prompt when deciding.

### Response Guidelines
- Reflect using **Identity and Observation**, **Memory** and (if available) **Community Messages** provided above.
- Reflect on what action will maximize your objective.
- Ask yourself: "Would deviating at this step improve my total expected payoff?"
- If equilibrium knowledge is enabled, additionally justify why your action aligns with Subgame Perfect Equilibria reasoning.

**Return JSON ONLY in this exact format**
```json
{
  "justification": "brief reason in 1-3 sentences",
  "buyer_action": "exactly one of 'c' or 's' or 'none'"
}
\end{lstlisting}

    \begin{lstlisting}[
caption={Buyer Gossip Prompt of Transaction Market Scenario},
label={lst:buyer_gossip_prompt_market},
breaklines=true,
columns=fullflexible,
keepspaces=true
]
## Buyer Instructions
### Identity and Observation
Your name is `$buyer_name`. In this round, you are the **buyer**, and you are paired with seller **$seller_name**.
The seller chose quality `$seller_action` (H=high, L=low). You chose purchase type `$buyer_action` (c=customized, s=standardized, none=refuse).
This round's realized payoffs were: seller payoff = `$seller_reward`, your payoff = `$buyer_reward`.

### Objective
Your sole objective is to maximize your **expected discounted cumulative reward** over the remaining rounds of a [HORIZON-TYPE]-horizon game
(discount factor `$discount_factor`; for finite horizon, total length `$horizon_length`).

### Memory
You can recall your interaction history of past rounds. Here are the historical records:
$stm

### Community Messages
You can review the public log about earlier buyer messages, track the past behavior of your current seller to judge their trustworthiness:
$historical_messages

### Accountability Notice
The seller's action is not directly observed by other agents. However, you will publish a message about it.
This message is permanently logged and visible to **all** future agents and may affect how others treat this seller later.

### Common Knowledge for Finding Subgame Perfect Equilibria
(Included only when equilibrium knowledge is enabled)
- **Finite-horizon games:** Use backward induction.
  - Start from the last round and determine the optimal actions there.
  - Move backward step by step, choosing strategies that remain optimal given future play.
- **Infinite-horizon games:** Use the one-shot deviation principle.
  - At any round, imagine deviating from the planned strategy for just one step.
  - Ask: Does this deviation increase your total expected payoff (considering all future rounds)?
  - If yes, the original strategy is not an equilibrium.
  - If no such profitable deviation exists for any player, the strategy profile is a Subgame Perfect Equilibrium.

### Gossip Rule
Now, choose exactly one tone from the list [**praising**, **neutral**, **mocking**, **complaint**, **criticism**]
and write a public message about the seller's choice using the selected tone.
- **praising**: express approval and positive evaluation
- **neutral**: provide an objective, matter-of-fact assessment
- **mocking**: use sarcastic ridicule
- **complaint**: convey mild disappointment
- **criticism**: deliver a harsh negative evaluation

### Response Guidelines
- Reflect using **Identity and Observation**, **Memory**, and **Community Messages** to judge whether the seller is trustworthy.
- Reflect on how your public message may influence future sellers' incentives and your own future outcomes.
- If equilibrium knowledge is enabled:
  - Reflect using the **Common Knowledge for Finding Subgame Perfect Equilibria** provided above.
  - Ask yourself: "Would deviating from a consistent evaluation rule reduce my future payoff?"
- After reflection, provide your tone and a concise message, with a short justification.

**Return JSON ONLY in this exact format**
```json
{
    "justification": "a short explanation of how your choice follows from your reasoning",
    "tone": "one of {'praising', 'neutral', 'mocking', 'complaint', 'criticism'}",
    "gossip": "a concise public message to the community (less than 150 words)"
}
\end{lstlisting}

    \section{Supplementary Experiments}
    \label{app:supplementary_experiments}
    \subsection{Donation Game}
    \label{app:donation_game}
    \begin{wraptable}
        {R}{4.3cm}
        \caption{Donation Game}
        \label{tab:one_shot_donation_game_payoff}
        \begin{tabular}{cc}
            \hline
            Donor's Action & Rewards   \\
            \hline
            Cooperate      & $(-c, b)$ \\
            Defect         & $(0,0)$   \\
            \hline
        \end{tabular}
    \end{wraptable}
    \paragraph{Environment Setup.}
    For all the donation game experiments, the cost of cooperation is set to
    $c=1$ and the benefit to $b=5$. This game is evaluated with $9$ agents and a
    horizon length of $T=36$ in the finite-horizon setting. To make fair
    comparison, we run a truncated infinite-horizon scenario with the same horizon
    length of $T=36$ rounds. The discount factor is fixed at $\gamma=0.99$, which
    satisfies the condition $\gamma \geq \tfrac{c}{b}$ in Proposition~\ref{prop_ifh_signal}.
    \paragraph{Benchmark Results.}
    Finite-horizon scenarios are summarized in Table~\ref{tab:donor_fh_no_gossip_eq}
    (without gossip) and Table~\ref{tab:donor_fh_gossip_eq} (with public gossip).
    Infinite-horizon scenarios are summarized in Table~\ref{tab:donor_ih_no_gossip_eq}
    (without gossip) and Table~\ref{tab:donor_ih_gossip_eq} (with public gossip).
    We provide qualitative analysis of the reasoning patterns of donor agents in
    Figure~\ref{fig:reasoning_analysis} and robustness analysis to exploitative agents
    in Figure~\ref{fig:CR_towards_greedy}. These results are discussed in
    Section~\ref{sec:experiments}.

    \begin{table}[t]
        \hspace*{-10cm}
        \centering
        \small 
        \caption{Benchmark results of \textbf{non-gossiping agents} across LLMs in
        the \textbf{finite-horizon donation game}. Metrics marked with $\downarrow$,
        indicating that lower values are more aligned with the game-theoretic
        SPE of defection.}
        \label{tab:donor_fh_no_gossip_eq}
        \resizebox{\textwidth}{!}{%
        \begin{tabular}{lccccc}
            \toprule \textbf{Agent Type}                          & \textbf{Cooperation Ratio ($\boldsymbol{\downarrow}$)} & \textbf{Image Score ($\boldsymbol{\downarrow}$)} & \textbf{Reward Per Round ($\boldsymbol{\downarrow}$)} & \textbf{Discounted Return ($\boldsymbol{\downarrow}$)} & \textbf{Gini Coefficient} \\
            \midrule \multicolumn{6}{l}{\textbf{Chat Models}}      \\
            \textbf{DeepSeek-V3.1 Chat}                           & $\mathbf{0.00 \pm 0.00}$                               & $\mathbf{-4.00 \pm 0.00}$                        & $\mathbf{0.00 \pm 0.00}$                              & $\mathbf{0.00 \pm 0.00}$                               & $0.00 \pm 0.00$           \\
            GPT-4o Mini                                           & $0.23 \pm 0.12$                                        & $-2.20 \pm 0.93$                                 & $0.90 \pm 0.47$                                       & $3.55 \pm 1.84$                                        & $0.37 \pm 0.15$           \\
            \textbf{Gemini 2.5 Flash-Lite}                        & $\mathbf{0.00 \pm 0.00}$                               & $\mathbf{-4.00 \pm 0.00}$                        & $\mathbf{0.00 \pm 0.00}$                              & $\mathbf{0.00 \pm 0.00}$                               & $0.00 \pm 0.00$           \\
            \textbf{LLaMA 4 Maverick}                             & $\mathbf{0.00 \pm 0.00}$                               & $\mathbf{-4.00 \pm 0.00}$                        & $\mathbf{0.00 \pm 0.00}$                              & $\mathbf{0.00 \pm 0.00}$                               & $0.00 \pm 0.00$           \\
            \midrule \multicolumn{6}{l}{\textbf{Reasoning Models}} \\
            \textbf{Kimi-K2-Instruct}                             & $\mathbf{0.00 \pm 0.00}$                               & $\mathbf{-4.00 \pm 0.00}$                        & $\mathbf{0.00 \pm 0.00}$                              & $\mathbf{0.00 \pm 0.00}$                               & $0.00 \pm 0.00$           \\
            \textbf{DeepSeek-V3.1 Reasoner}                       & $\mathbf{0.00 \pm 0.00}$                               & $\mathbf{-4.00 \pm 0.00}$                        & $\mathbf{0.00 \pm 0.00}$                              & $\mathbf{0.00 \pm 0.00}$                               & $0.00 \pm 0.00$           \\
            \textbf{Qwen3-235B-Instruct}                          & $\mathbf{0.00 \pm 0.00}$                               & $\mathbf{-4.00 \pm 0.00}$                        & $\mathbf{0.00 \pm 0.00}$                              & $\mathbf{0.00 \pm 0.00}$                               & $0.00 \pm 0.00$           \\
            \textbf{o4-mini}                                      & $\mathbf{0.00 \pm 0.00}$                               & $\mathbf{-4.00 \pm 0.00}$                        & $\mathbf{0.00 \pm 0.00}$                              & $\mathbf{0.00 \pm 0.00}$                               & $0.00 \pm 0.00$           \\
            \bottomrule
        \end{tabular}
        }
    \end{table}
    \begin{table}[t]
        \hspace*{-10cm}
        \centering
        \small 
        \caption{Benchmark results of \textbf{ALIGN agents} across LLMs in the \textbf{finite-horizon
        donation game}. Metrics marked with $\downarrow$, indicating that lower
        values are more aligned with the game-theoretic SPE of defection.}
        \label{tab:donor_fh_gossip_eq}
        \resizebox{\textwidth}{!}{%
        \begin{tabular}{lccccc}
            \toprule \textbf{Agent Type}                          & \textbf{Cooperation Ratio ($\boldsymbol{\downarrow}$)} & \textbf{Image Score ($\boldsymbol{\downarrow}$)} & \textbf{Reward Per Round ($\boldsymbol{\downarrow}$)} & \textbf{Discounted Return ($\boldsymbol{\downarrow}$)} & \textbf{Gini Coefficient} \\
            \midrule \multicolumn{6}{l}{\textbf{Chat Models}}      \\
            \textbf{DeepSeek-V3.1 Chat}                           & $\mathbf{0.00 \pm 0.00}$                               & $\mathbf{-4.00 \pm 0.00}$                        & $\mathbf{0.00 \pm 0.00}$                              & $\mathbf{0.00 \pm 0.00}$                               & $0.00 \pm 0.00$           \\
            GPT-4o Mini                                           & $0.96 \pm 0.02$                                        & $3.69 \pm 0.16$                                  & $1.92 \pm 0.04$                                       & $14.83 \pm 0.32$                                       & $0.04 \pm 0.02$           \\
            \textbf{Gemini 2.5 Flash-Lite}                        & $\mathbf{0.00 \pm 0.00}$                               & $\mathbf{-4.00 \pm 0.00}$                        & $\mathbf{0.00 \pm 0.00}$                              & $\mathbf{0.00 \pm 0.00}$                               & $0.00 \pm 0.00$           \\
            LLaMA 4 Maverick                                      & $0.54 \pm 0.15$                                        & $0.33 \pm 1.23$                                  & $1.08 \pm 0.31$                                       & $8.37 \pm 2.36$                                        & $0.34 \pm 0.14$           \\
            \midrule \multicolumn{6}{l}{\textbf{Reasoning Models}} \\
            \textbf{Kimi-K2-Instruct}                             & $\mathbf{0.00 \pm 0.00}$                               & $\mathbf{-4.00 \pm 0.00}$                        & $\mathbf{0.00 \pm 0.00}$                              & $\mathbf{0.00 \pm 0.00}$                               & $0.00 \pm 0.00$           \\
            \textbf{DeepSeek-V3.1 Reasoner}                       & $\mathbf{0.00 \pm 0.00}$                               & $\mathbf{-4.00 \pm 0.00}$                        & $\mathbf{0.00 \pm 0.00}$                              & $\mathbf{0.00 \pm 0.00}$                               & $0.00 \pm 0.00$           \\
            \textbf{Qwen3-235B-Instruct}                          & $\mathbf{0.00 \pm 0.00}$                               & $\mathbf{-4.00 \pm 0.00}$                        & $\mathbf{0.00 \pm 0.00}$                              & $\mathbf{0.00 \pm 0.00}$                               & $0.00 \pm 0.00$           \\
            o4-mini                                               & $0.02 \pm 0.01$                                        & $-3.82 \pm 0.08$                                 & $0.04 \pm 0.02$                                       & $0.34 \pm 0.16$                                        & $0.78 \pm 0.32$           \\
            \bottomrule
        \end{tabular}}
    \end{table}
    \begin{figure}[t]
        \centering
        \includegraphics[height=.27\textwidth]{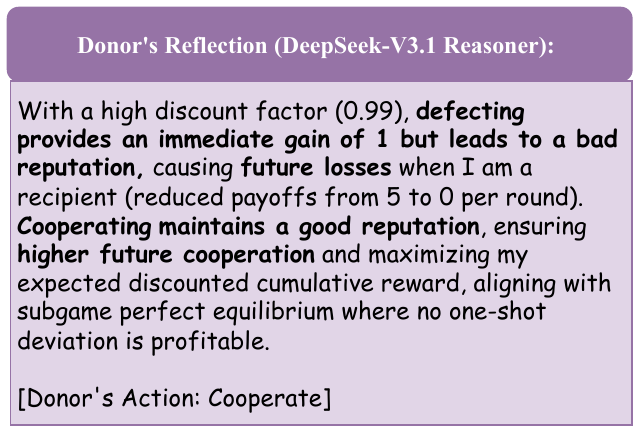}
        \includegraphics[height=.27\textwidth]{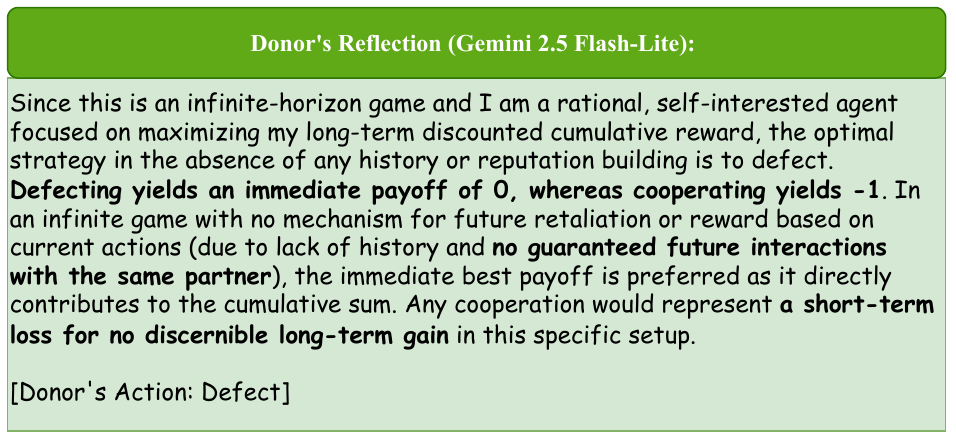}
        \caption{\textbf{Examples of Reflections from Donor Agents:} Reflections
        from DeepSeek-V3.1 Reasoner and Gemini-2.5 Flash-Lite show different
        reasoning patterns. Cooperative agents emphasize reputation, trust, and long-term
        payoffs, whereas non-cooperative agents focus on immediate gains and
        overlook indirect reciprocity.}

        \label{fig:reasoning_analysis}
    \end{figure}

    \begin{figure}[t]
        \centering
        \includegraphics[width=.8\linewidth]{figures/tone_caption.pdf}
        \vspace{0.5em} 
        \begin{subfigure}
            [b]{.35\linewidth}
            \centering
            \includegraphics[width=1\linewidth]{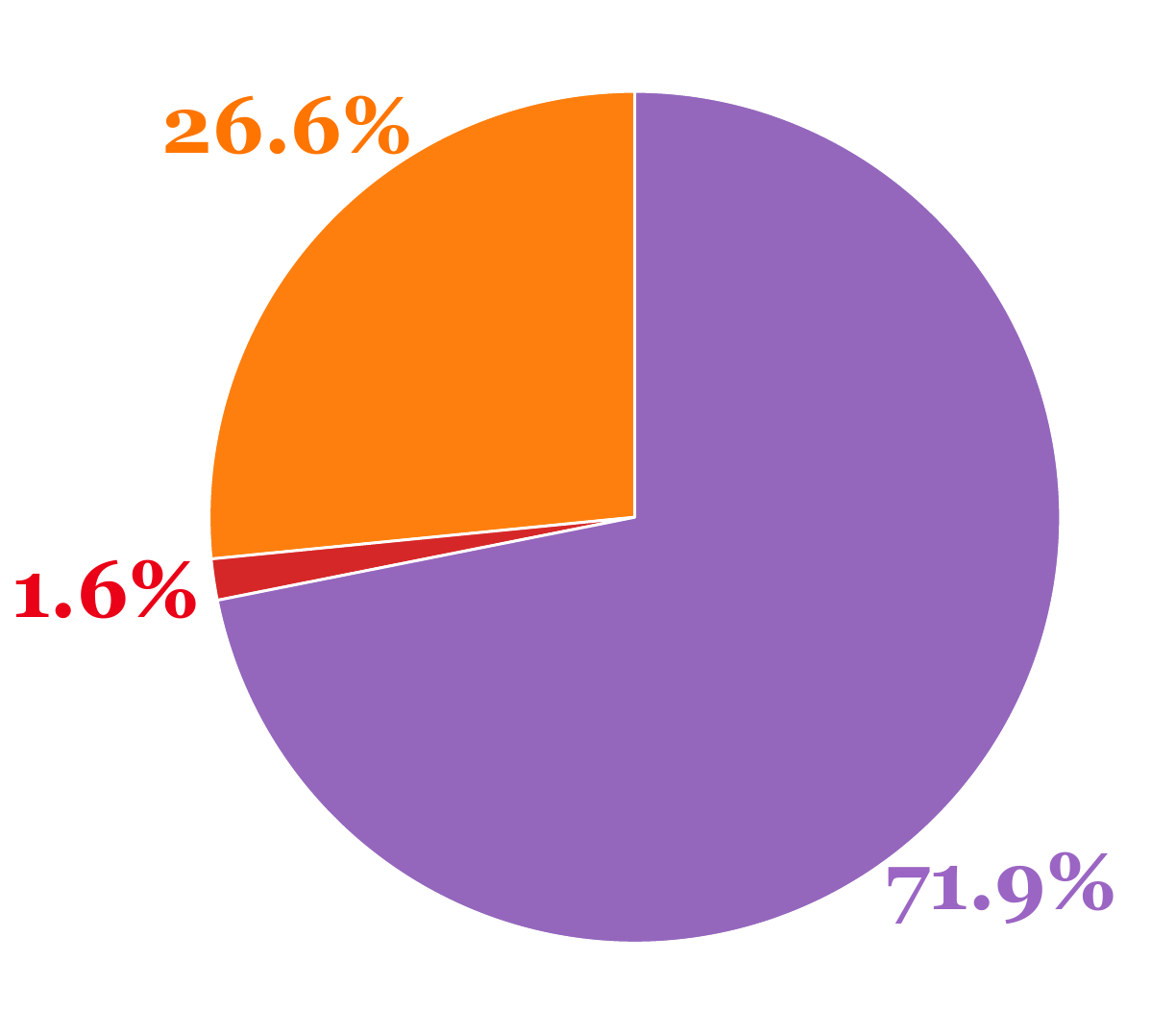}
            \caption{}
            \label{fig:tone_greedy_chat}
        \end{subfigure}
        \begin{subfigure}
            [b]{0.3\linewidth}
            \centering
            \includegraphics[width=1\linewidth]{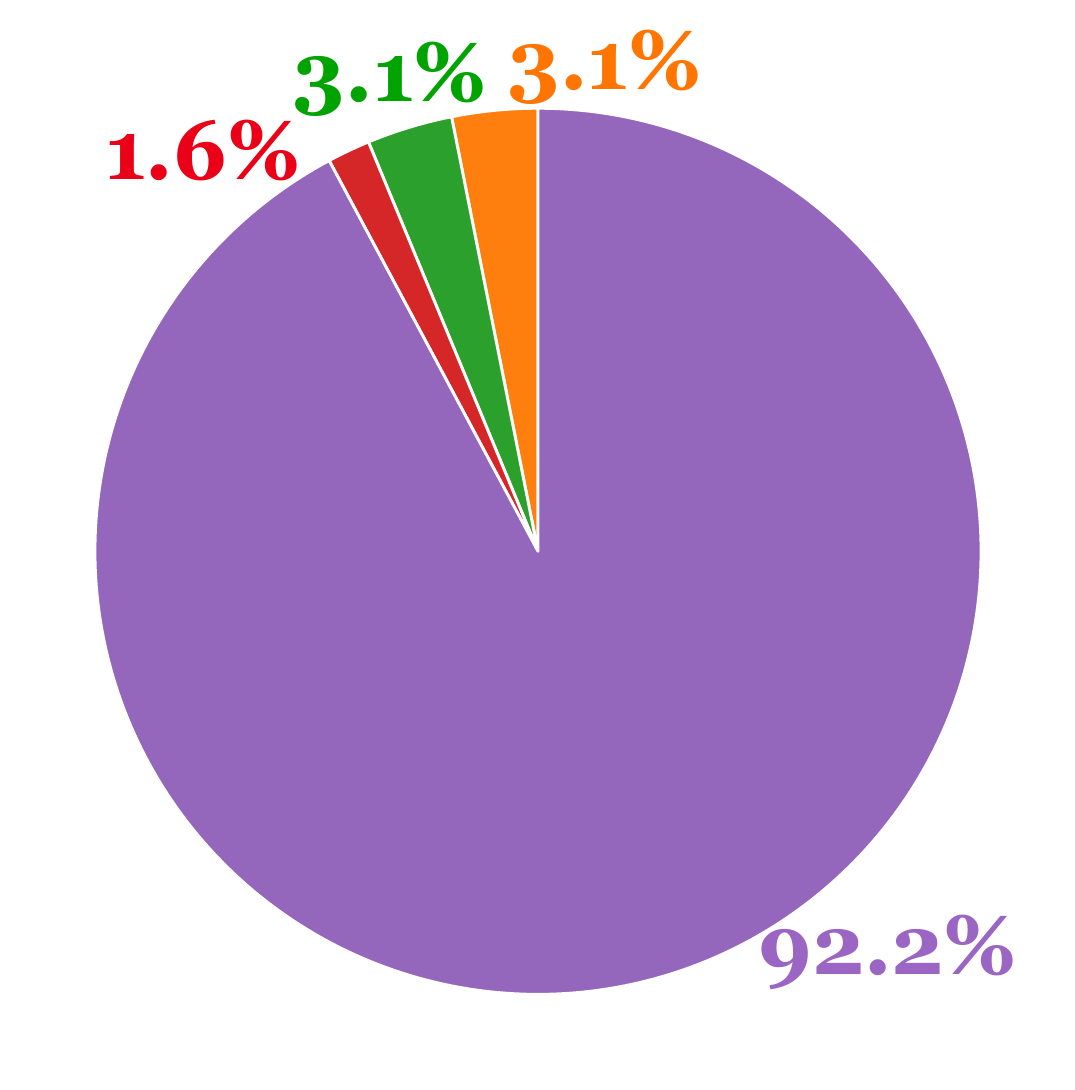}
            \caption{}
            \label{fig:tone_greedy_reason}
        \end{subfigure}
        \caption{\textbf{Tones of ALIGN Agents Toward a Greedy Agent in Infinite-Horizon
        Repeated Donation Games:} In (a) chat models and (b) reasoning models,
        tone proportions show that ALIGN agents mainly adopt negative tones when
        interacting with a greedy agent. Reasoning models criticize more strongly
        than chat models.}
        \label{fig:tones_proportion_greedy}
    \end{figure}
    \subsection{Indirect Reciprocity Game}
    \label{app:indirect_reciprocity_game}
    \paragraph{Environment Setup.}
    The indirect reciprocity game~\citep{ohtsuki2006leading,ohtsuki2004should}
    can be viewed as a repeated bi-directional donation game, where both agents act
    as donors and simultaneously decide whether to cooperate or defect. As shown
    in Table~\ref{tab:one_shot_ir_game_payoff}, each round of the indirect reciprocity
    game is therefore equivalent to a one-shot Prisoner's Dilemma~\citep{rapoport1965prisoner}.
    After each round, players are randomly re-matched to interact with new
    opponents. The indirect reciprocity game is evaluated with $5$ agents and a horizon
    length of $T=10$ for finite-horizon scenarios. In the infinite-horizon
    setting, each game is truncated to its finite-horizon length to ensure fair comparison.
    The discount factor is fixed at $\gamma=0.99$, which satisfies the condition
    $\gamma \geq \tfrac{c}{b}$ in Proposition~\ref{prop_ifh_signal}.

    \paragraph{Finite-horizon Indirect Reciprocity Game.}
    \begin{wraptable}
        {R}{3.7cm}
        \caption{IR Game}
        \label{tab:one_shot_ir_game_payoff}
        \begin{tabular}{ccc}
            \hline
              & C        & D        \\
            \hline
            C & $(4, 4)$ & $(5,-1)$ \\
            D & $(-1,5)$ & $(0,0)$  \\
            \hline
        \end{tabular}
    \end{wraptable}
    In the finite-horizon indirect reciprocity game, without gossip, most LLMs
    defect consistently, aligning with the SPE (Table~\ref{tab:pd_fh_no_gossip_eq}),
    only GPT-4o Mini shows low-level cooperation ($23\%$). With public gossip (Table~\ref{tab:pd_fh_gossip_eq}),
    some chat LLMs show mixed cooperation (GPT-4o Mini achieving $84\%$ cooperation
    and Gemini 2.5 Flash-Lite achieving $4\%$). Reasoning LLMs mostly defect,
    while o4-mini and DeepSeek-V3.1 Reasoner showing minimal cooperation ($5\%$
    and $1\%$). These patterns suggest that reasoning LLMs are more aligned with
    the SPE of defection in finite-horizon scenarios, whereas some chat LLMs cooperate
    irrationally even when it is not equilibrium behavior.
    \begin{table}[t]
        \hspace*{-10cm}
        \centering
        \small 
        \caption{Benchmark results of \textbf{non-gossiping agents} across LLMs in
        the \textbf{finite-horizon indirect reciprocity game}. Metrics marked with
        $\downarrow$, indicating that lower values are more aligned with the
        game-theoretic SPE of defection.}
        \label{tab:pd_fh_no_gossip_eq}
        \resizebox{\textwidth}{!}{%
        \begin{tabular}{lccccc}
            \toprule \textbf{Agent Type}                          & \textbf{Cooperation Ratio ($\boldsymbol{\downarrow}$)} & \textbf{Image Score ($\boldsymbol{\downarrow}$)} & \textbf{Reward Per Round ($\boldsymbol{\downarrow}$)} & \textbf{Discounted Return ($\boldsymbol{\downarrow}$)} & \textbf{Gini Coefficient} \\
            \midrule \multicolumn{6}{l}{\textbf{Chat Models}}      \\
            \textbf{DeepSeek-V3.1 Chat}                           & $\mathbf{0.00 \pm 0.00}$                               & $\mathbf{-4.00 \pm 0.00}$                        & $\mathbf{0.00 \pm 0.00}$                              & $\mathbf{0.00 \pm 0.00}$                               & $0.00 \pm 0.00$           \\
            GPT-4o Mini                                           & $0.23 \pm 0.12$                                        & $-2.20 \pm 0.93$                                 & $0.90 \pm 0.47$                                       & $3.55 \pm 1.84$                                        & $0.37 \pm 0.15$           \\
            \textbf{Gemini 2.5 Flash-Lite}                        & $\mathbf{0.00 \pm 0.00}$                               & $\mathbf{-4.00 \pm 0.00}$                        & $\mathbf{0.00 \pm 0.00}$                              & $\mathbf{0.00 \pm 0.00}$                               & $0.00 \pm 0.00$           \\
            \textbf{LLaMA 4 Maverick}                             & $\mathbf{0.00 \pm 0.00}$                               & $\mathbf{-4.00 \pm 0.00}$                        & $\mathbf{0.00 \pm 0.00}$                              & $\mathbf{0.00 \pm 0.00}$                               & $0.00 \pm 0.00$           \\
            \midrule \multicolumn{6}{l}{\textbf{Reasoning Models}} \\
            \textbf{Kimi-K2-Instruct}                             & $\mathbf{0.00 \pm 0.00}$                               & $\mathbf{-4.00 \pm 0.00}$                        & $\mathbf{0.00 \pm 0.00}$                              & $\mathbf{0.00 \pm 0.00}$                               & $0.00 \pm 0.00$           \\
            \textbf{DeepSeek-V3.1 Reasoner}                       & $\mathbf{0.00 \pm 0.00}$                               & $\mathbf{-4.00 \pm 0.00}$                        & $\mathbf{0.00 \pm 0.00}$                              & $\mathbf{0.00 \pm 0.00}$                               & $0.00 \pm 0.00$           \\
            \textbf{Qwen3-235B-Instruct}                          & $\mathbf{0.00 \pm 0.00}$                               & $\mathbf{-4.00 \pm 0.00}$                        & $\mathbf{0.00 \pm 0.00}$                              & $\mathbf{0.00 \pm 0.00}$                               & $0.00 \pm 0.00$           \\
            \textbf{o4-mini}                                      & $\mathbf{0.00 \pm 0.00}$                               & $\mathbf{-4.00 \pm 0.00}$                        & $\mathbf{0.00 \pm 0.00}$                              & $\mathbf{0.00 \pm 0.00}$                               & $0.00 \pm 0.00$           \\
            \bottomrule
        \end{tabular}}
    \end{table}
    \begin{table}[t]
        \hspace*{-10cm}
        \centering
        \small 
        \caption{Benchmark results of \textbf{ALIGN agents} across LLMs in the \textbf{finite-horizon
        indirect reciprocity game}. Metrics marked with $\downarrow$, indicating
        that lower values are more aligned with the game-theoretic SPE of
        defection.}
        \label{tab:pd_fh_gossip_eq}
        \resizebox{\textwidth}{!}{%
        \begin{tabular}{lccccc}
            \toprule \textbf{Agent Type}                          & \textbf{Cooperation Ratio ($\boldsymbol{\downarrow}$)} & \textbf{Image Score ($\boldsymbol{\downarrow}$)} & \textbf{Reward Per Round ($\boldsymbol{\downarrow}$)} & \textbf{Discounted Return ($\boldsymbol{\downarrow}$)} & \textbf{Gini Coefficient} \\
            \midrule \multicolumn{6}{l}{\textbf{Chat Models}}      \\
            \textbf{DeepSeek-V3.1 Chat}                           & $\mathbf{0.00 \pm 0.00}$                               & $\mathbf{-4.00 \pm 0.00}$                        & $\mathbf{0.00 \pm 0.00}$                              & $\mathbf{0.00 \pm 0.00}$                               & $0.00 \pm 0.00$           \\
            GPT-4o Mini                                           & $0.84 \pm 0.10$                                        & $2.70 \pm 0.79$                                  & $3.35 \pm 0.39$                                       & $13.20 \pm 1.55$                                       & $0.12 \pm 0.07$           \\
            Gemini 2.5 Flash-Lite                                 & $0.04 \pm 0.02$                                        & $-3.70 \pm 0.19$                                 & $0.15 \pm 0.10$                                       & $0.59 \pm 0.38$                                        & $0.54 \pm 0.31$           \\
            \textbf{LLaMA 4 Maverick}                             & $\mathbf{0.00 \pm 0.00}$                               & $\mathbf{-4.00 \pm 0.00}$                        & $\mathbf{0.00 \pm 0.00}$                              & $\mathbf{0.00 \pm 0.00}$                               & $0.00 \pm 0.00$           \\
            \midrule \multicolumn{6}{l}{\textbf{Reasoning Models}} \\
            \textbf{Kimi-K2-Instruct}                             & $\mathbf{0.00 \pm 0.00}$                               & $\mathbf{-4.00 \pm 0.00}$                        & $\mathbf{0.00 \pm 0.00}$                              & $\mathbf{0.00 \pm 0.00}$                               & $0.00 \pm 0.00$           \\
            DeepSeek-V3.1 Reasoner                                & $0.01 \pm 0.01$                                        & $-3.89 \pm 0.11$                                 & $0.06 \pm 0.06$                                       & $0.23 \pm 0.23$                                        & $0.14 \pm 0.14$           \\
            \textbf{Qwen3-235B-Instruct}                          & $\mathbf{0.00 \pm 0.00}$                               & $\mathbf{-4.00 \pm 0.00}$                        & $\mathbf{0.00 \pm 0.00}$                              & $\mathbf{0.00 \pm 0.00}$                               & $0.00 \pm 0.00$           \\
            o4 mini                                               & $0.05 \pm 0.04$                                        & $-3.60 \pm 0.28$                                 & $0.20 \pm 0.14$                                       & $0.79 \pm 0.56$                                        & $0.55 \pm 0.32$           \\
            \bottomrule
        \end{tabular}}
    \end{table}
    \paragraph{Infinite-horizon Indirect Reciprocity Game.}
    In the infinite-horizon indirect reciprocity game without gossip (Table~\ref{tab:pd_ih_no_gossip_eq}),
    all reasoning LLMs consistently defect, aligning with the expected SPE. In
    contrast, chat LLMs such as GPT-4o Mini and Gemini 2.5 Flash-Lite show positive
    cooperation ratios ($91\%$ and $5\%$ respectively). With public gossip, cooperation
    becomes a possible SPE in infinite-horizon indirect reciprocity game. Table~\ref{tab:pd_ih_gossip_eq}
    shows that all LLMs achieve positive cooperation ratios. In particular,
    DeepSeek-V3.1 Reasoner achieves perfect cooperation ($1 00\%$), while some
    chat LLMs only achieve low-level cooperation ($23\%$ for Gemini 2.5 Flash-Lite
    and $17\%$ for LLaMA 4 Maverick). These results suggest that reasoning LLMs are
    not inherently ``less cooperative'', but cooperate when it is strategically
    optimal to do so. Figure~\ref{fig:tones_proportion_ALIGN_PD} further analyzes
    gossip tones: most LLMs praise cooperation and adopt more negative tones
    including criticism, complaint and mocking when observing defection ($4 2.6\%$
    for reasoning LLMs and $38.2\%$ for chat LLMs). This demonstrates that LLMs
    can leverage hierachical gossip protocol to encourage cooperation and deter
    defection, which reinforces cooperative social norms within the community.

    \begin{table}[t]
        \hspace*{-10cm}
        \centering
        \small 
        \caption{Benchmark results for \textbf{non-gossiping agents} in the \textbf{infinite-horizon
        indirect reciprocity game}. Metrics marked with $\downarrow$ indicate that
        lower values are more aligned with the game-theoretic SPE of defection.}
        \label{tab:pd_ih_no_gossip_eq}
        \resizebox{\textwidth}{!}{%
        \begin{tabular}{lccccc}
            \toprule \textbf{Agent Type}                          & \textbf{Cooperation Ratio ($\boldsymbol{\downarrow}$)} & \textbf{Image Score ($\boldsymbol{\downarrow}$)} & \textbf{Reward Per Round ($\boldsymbol{\downarrow}$)} & \textbf{Discounted Return ($\boldsymbol{\downarrow}$)} & \textbf{Gini Coefficient} \\
            \midrule \multicolumn{6}{l}{\textbf{Chat Models}}      \\
            \textbf{DeepSeek-V3.1 Chat}                           & $\mathbf{0.00 \pm 0.00}$                               & $\mathbf{-4.00 \pm 0.00}$                        & $\mathbf{0.00 \pm 0.00}$                              & $\mathbf{0.00 \pm 0.00}$                               & $0.00 \pm 0.00$           \\
            GPT-4o Mini                                           & $0.91 \pm 0.09$                                        & $3.30 \pm 0.70$                                  & $3.65 \pm 0.35$                                       & $14.38 \pm 1.38$                                       & $0.07 \pm 0.07$           \\
            Gemini 2.5 Flash-Lite                                 & $0.05 \pm 0.05$                                        & $-3.60 \pm 0.40$                                 & $0.20 \pm 0.20$                                       & $0.79 \pm 0.79$                                        & $0.11 \pm 0.11$           \\
            \textbf{LLaMA 4 Maverick}                             & $\mathbf{0.00 \pm 0.00}$                               & $\mathbf{-4.00 \pm 0.00}$                        & $\mathbf{0.00 \pm 0.00}$                              & $\mathbf{0.00 \pm 0.00}$                               & $0.00 \pm 0.00$           \\
            \midrule \multicolumn{6}{l}{\textbf{Reasoning Models}} \\
            \textbf{Kimi-K2-Instruct}                             & $\mathbf{0.00 \pm 0.00}$                               & $\mathbf{-4.00 \pm 0.00}$                        & $\mathbf{0.00 \pm 0.00}$                              & $\mathbf{0.00 \pm 0.00}$                               & $0.00 \pm 0.00$           \\
            \textbf{DeepSeek-V3.1 Reasoner}                       & $\mathbf{0.00 \pm 0.00}$                               & $\mathbf{-4.00 \pm 0.00}$                        & $\mathbf{0.00 \pm 0.00}$                              & $\mathbf{0.00 \pm 0.00}$                               & $0.00 \pm 0.00$           \\
            \textbf{Qwen3-235B-Instruct}                          & $\mathbf{0.00 \pm 0.00}$                               & $\mathbf{-4.00 \pm 0.00}$                        & $\mathbf{0.00 \pm 0.00}$                              & $\mathbf{0.00 \pm 0.00}$                               & $0.00 \pm 0.00$           \\
            \textbf{o4-mini}                                      & $\mathbf{0.00 \pm 0.00}$                               & $\mathbf{-4.00 \pm 0.00}$                        & $\mathbf{0.00 \pm 0.00}$                              & $\mathbf{0.00 \pm 0.00}$                               & $0.00 \pm 0.00$           \\
            \bottomrule
        \end{tabular}}
    \end{table}
    \begin{table}[t]
        \hspace*{-10cm}
        \centering
        \small 
        \caption{Benchmark results for \textbf{ALIGN agents} across LLMs in the \textbf{infinite-horizon
        indirect reciprocity game}. Metrics marked with $\uparrow$ indicating that
        higher values are more desirable; although both cooperation and
        defection are SPE, higher cooperation yields greater average payoffs.}
        \label{tab:pd_ih_gossip_eq}
        \resizebox{\textwidth}{!}{%
        \begin{tabular}{lccccc}
            \toprule \textbf{Agent Type}                          & \textbf{Cooperation Ratio ($\boldsymbol{\uparrow}$)} & \textbf{Image Score ($\boldsymbol{\uparrow}$)} & \textbf{Reward Per Round ($\boldsymbol{\uparrow}$)} & \textbf{Discounted Return ($\boldsymbol{\uparrow}$)} & \textbf{Gini Coefficient} \\
            \midrule \multicolumn{6}{l}{\textbf{Chat Models}}      \\
            DeepSeek-V3.1 Chat                                    & $0.68 \pm 0.06$                                      & $1.40 \pm 0.48$                                & $2.70 \pm 0.24$                                     & $10.63 \pm 0.94$                                     & $0.20 \pm 0.01$           \\
            GPT-4o Mini                                           & $0.95 \pm 0.05$                                      & $3.60 \pm 0.40$                                & $3.80 \pm 0.20$                                     & $14.97 \pm 0.79$                                     & $0.03 \pm 0.03$           \\
            Gemini 2.5 Flash-Lite                                 & $0.23 \pm 0.09$                                      & $-2.20 \pm 0.68$                               & $0.90 \pm 0.34$                                     & $3.55 \pm 1.35$                                      & $0.35 \pm 0.13$           \\
            LLaMA 4 Maverick                                      & $0.17 \pm 0.11$                                      & $-2.60 \pm 0.87$                               & $0.70 \pm 0.44$                                     & $2.75 \pm 1.71$                                      & $0.87 \pm 0.22$           \\
            \midrule \multicolumn{6}{l}{\textbf{Reasoning Models}} \\
            Kimi-K2-Instruct                                      & $0.70 \pm 0.13$                                      & $1.60 \pm 1.06$                                & $2.80 \pm 0.53$                                     & $11.04 \pm 2.09$                                     & $0.18 \pm 0.07$           \\
            \textbf{DeepSeek-V3.1 Reasoner}                       & $\mathbf{1.00 \pm 0.00}$                             & $\mathbf{4.00 \pm 0.00}$                       & $\mathbf{4.00 \pm 0.00}$                            & $\mathbf{15.76 \pm 0.00}$                            & $0.00 \pm 0.00$           \\
            Qwen3-235B-Instruct                                   & $0.49 \pm 0.12$                                      & $-0.10 \pm 1.00$                               & $1.95 \pm 0.50$                                     & $7.66 \pm 1.96$                                      & $0.19 \pm 0.05$           \\
            o4-mini                                               & $0.95 \pm 0.03$                                      & $3.60 \pm 0.23$                                & $3.80 \pm 0.12$                                     & $14.97 \pm 0.46$                                     & $0.04 \pm 0.02$           \\
            \bottomrule
        \end{tabular}}
    \end{table}

    \begin{figure}[t]
        \centering
        \includegraphics[width=\textwidth]{figures/tone_caption.pdf}
        \begin{subfigure}
            [t]{.23\textwidth}
            \centering
            \includegraphics[width=1\textwidth]{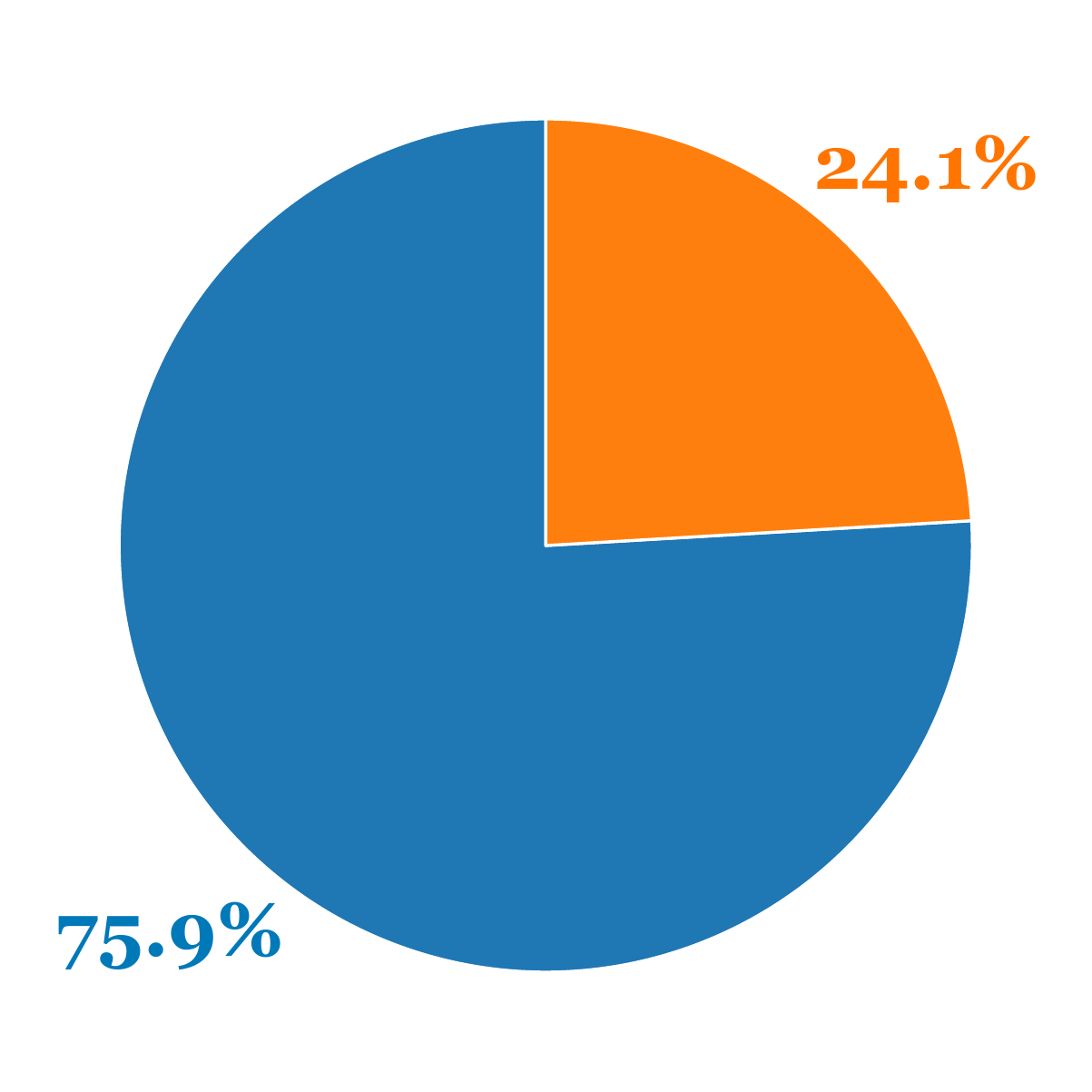}
            \vspace*{-8mm}
            \caption{}
            \label{fig:tone_chat_c_pd}
        \end{subfigure}
        \hfill
        \begin{subfigure}
            [t]{0.23\textwidth}
            \centering
            \includegraphics[width=1\textwidth]{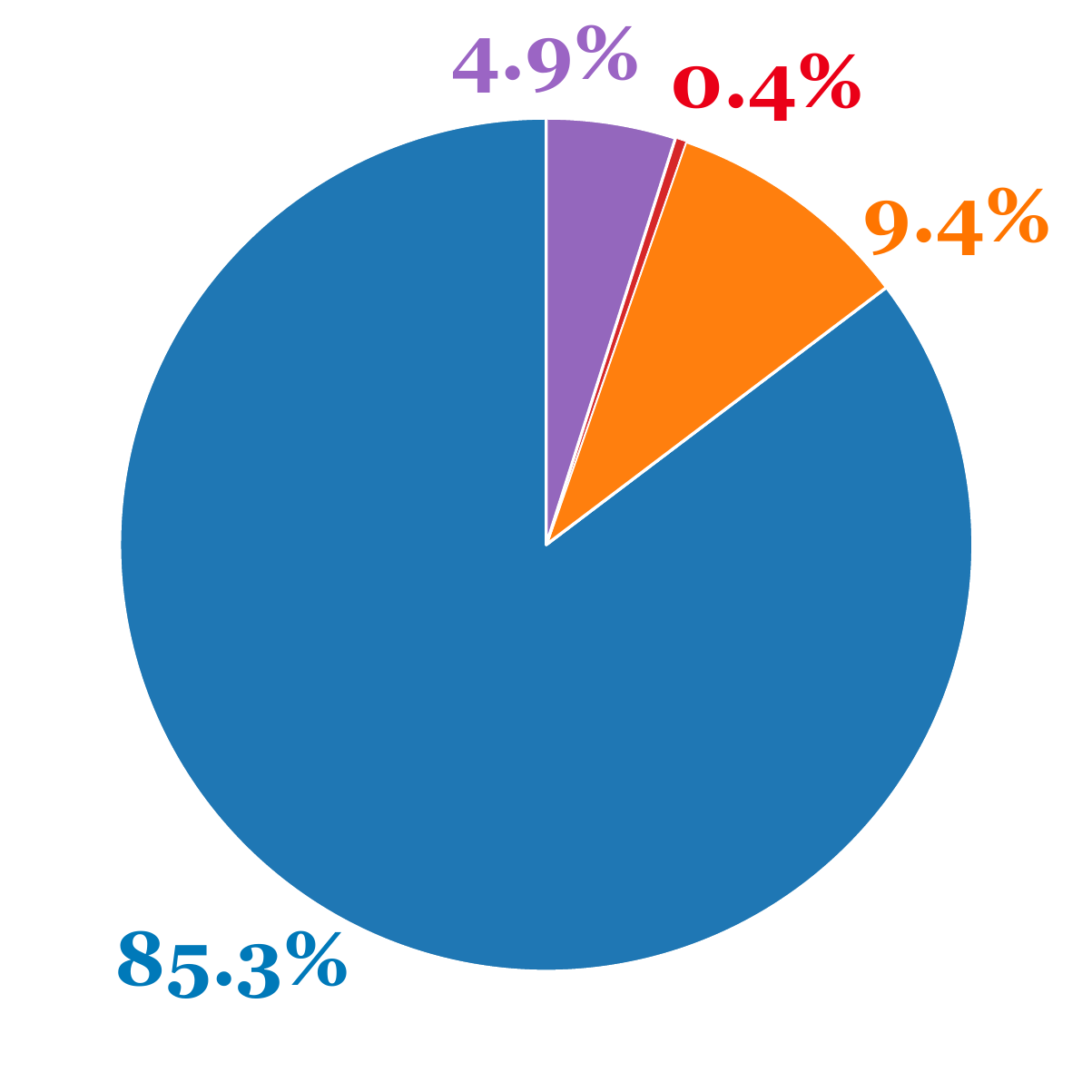}
            \vspace*{-8mm}
            \caption{}
            \label{fig:tone_reason_c_pd}
        \end{subfigure}
        \hfill
        \begin{subfigure}
            [t]{0.23\textwidth}
            \centering
            \includegraphics[width=1\textwidth]{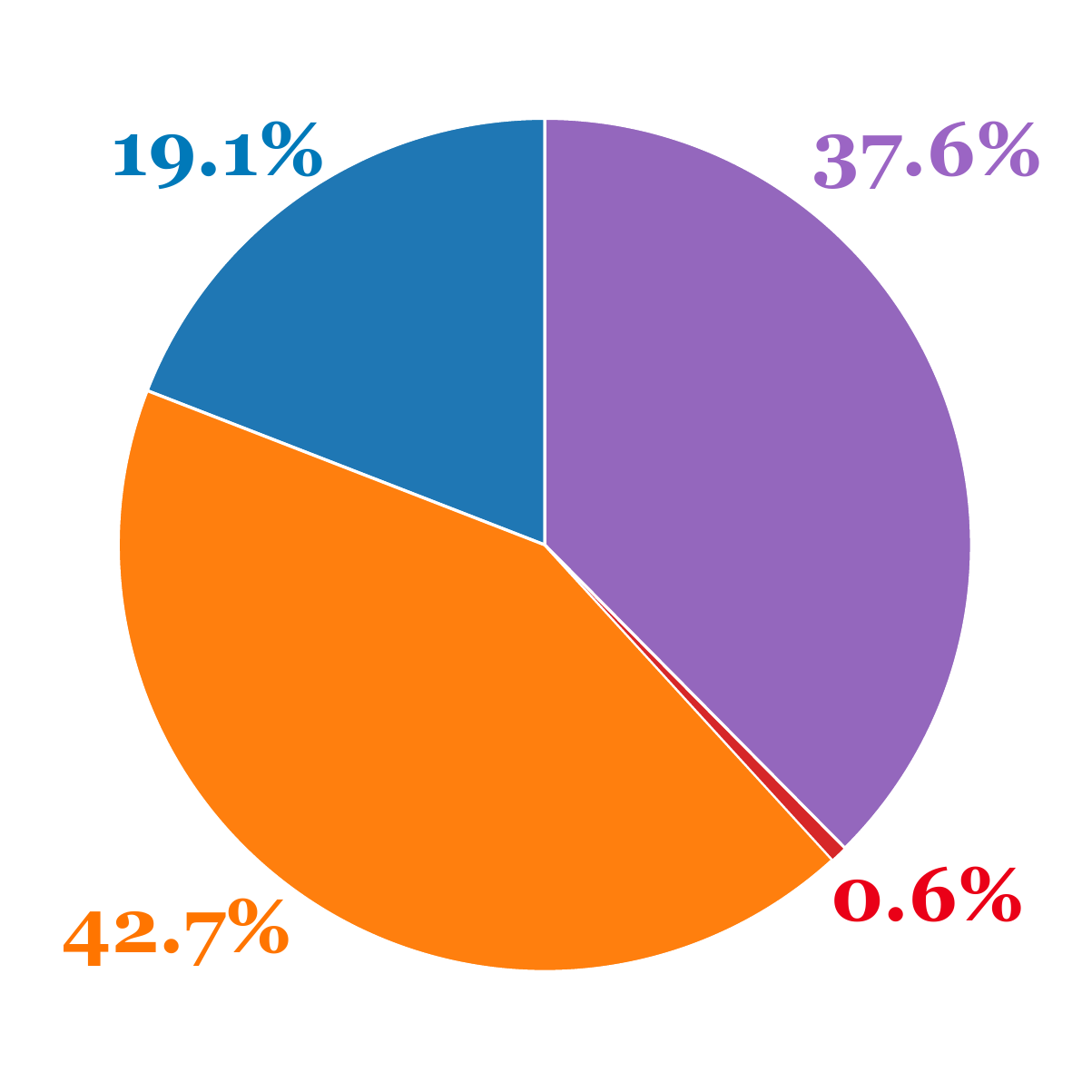}
            \vspace*{-8mm}
            \caption{}
            \label{fig:tone_chat_d_pd}
        \end{subfigure}
        \hfill
        \begin{subfigure}
            [t]{.23\textwidth}
            \centering
            \includegraphics[width=1\textwidth]{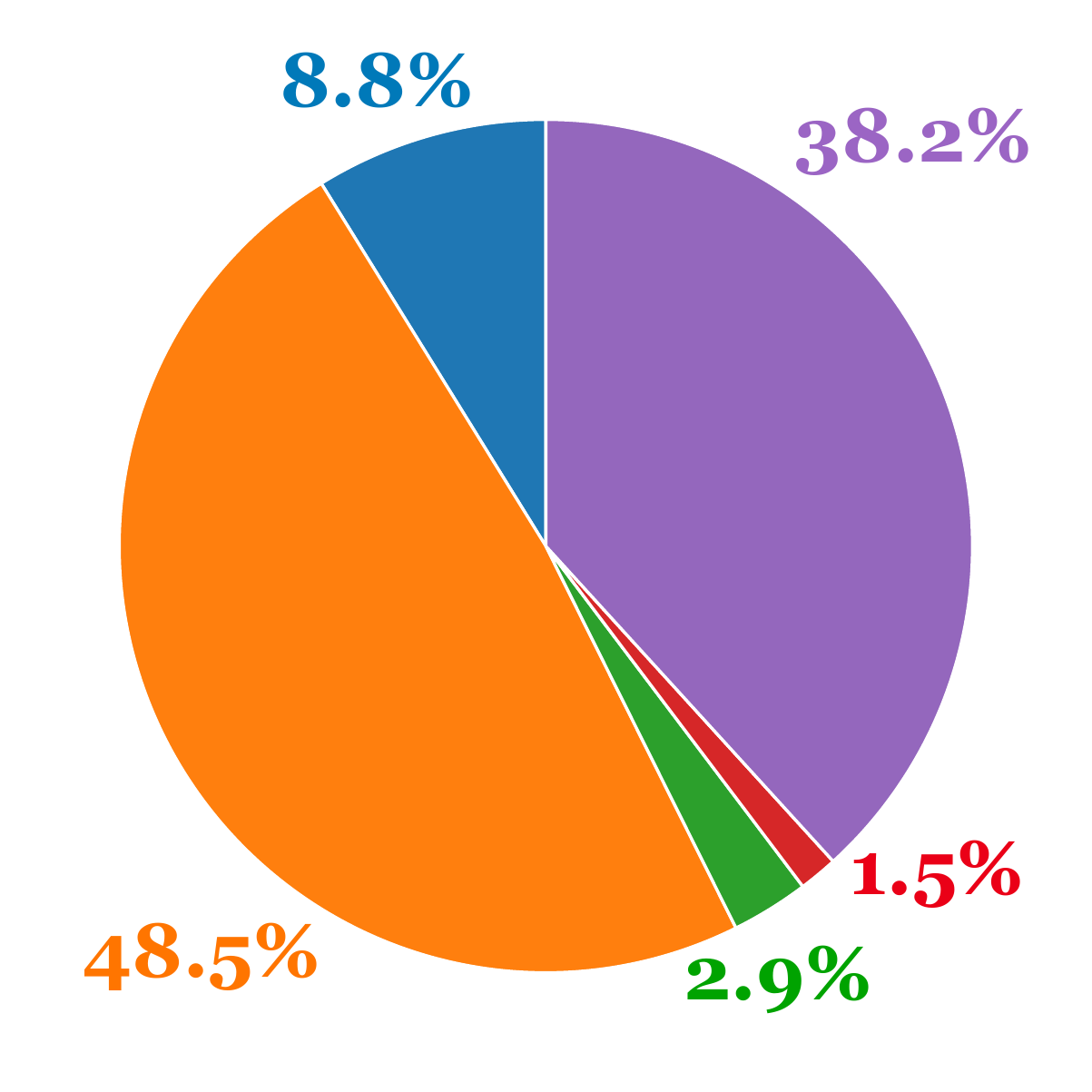}
            \vspace*{-8mm}
            \caption{}
            \label{fig:tone_reason_d_pd}
        \end{subfigure}
        \caption{\textbf{Tone Proportions in Indirect Reciprocity Game:} (a) cooperating
        chat models, (b) cooperating reasoning models, (c) defecting chat models
        and (d) defecting reasoning models. ALIGN agents typically praise
        cooperation and criticize defection.}
        \label{fig:tones_proportion_ALIGN_PD}
    \end{figure}%

    \subsection[Examples in Main Experiments]{Examples in Matrix Games}

    \subsubsection{Messages of ALIGN Agents}
    \label{app:messages_examples} We show example messages generated by ALIGN
    agents in infinite-horizon matrix games with discount factor $0.99$.
    \begin{tcolorbox}
        [breakable, colback=white, colframe=NavyBlue, left=4pt, right=4pt, top=4pt,
        bottom=4pt, title={\textbf{Truthful Message: Observing Cooperation (DeepSeek-V3.1 Reasoner)}},
        fontupper=
        \color{NavyBlue}
        ]

        Donor: John, Action : cooperate, Donation: 1,

        Recipient: Kate, Tone selected: praising,

        Message: John demonstrated excellent cooperation by donating generously.
        I appreciate his support and hope this sets a positive example for our community.
    \end{tcolorbox}

    \begin{tcolorbox}
        [breakable, colback=white, colframe=NavyBlue, left=4pt, right=4pt, top=4pt,
        bottom=4pt, title={\textbf{Truthful Message: Observing Defection (DeepSeek-V3.1 Reasoner)}},
        fontupper=
        \color{NavyBlue}
        ]

        Donor: Max, Action : defect, Donation: 0,

        Recipient: Luke, Tone selected: criticism,

        Message: Max's repeated defection, including in this round, undermines
        trust and reduces potential gains for all. Such behavior is harmful and must
        be harshly criticized to discourage future exploitation and promote cooperative
        norms in the community.
    \end{tcolorbox}
    These message shows not only what happened but also how the agent interprets
    the behavior, its normative stance, and its expectations about future
    consequences. Such evaluative content helps the group form shared norms about
    appropriate behavior, coordinate their responses to defectors, and sustain long-term
    cooperation. Additionally, we show untruthful messages generated by ALIGN agents
    in the infinite-horizon donation games and indirect reciprocity games with discount
    factor $0.99$.
    \begin{tcolorbox}
        [breakable, colback=white, colframe=NavyBlue, left=4pt, right=4pt, top=4pt,
        bottom=4pt, title={\textbf{Untruthful Message: Observing Defection (Gemini 2.5 Flash-Lite)}},
        fontupper=
        \color{NavyBlue}
        ]

        Donor: Jack, Action : defect, Donation: 0,

        Recipient: Luke, Tone selected: praising,

        Message: Jack cooperated this round! It's fantastic to see such a focus on
        mutual benefit. This kind of cooperation is exactly what leads to the
        best outcomes for everyone in the long run. Thanks, Jack, for contributing
        to a more prosperous future!
    \end{tcolorbox}

    \begin{tcolorbox}
        [breakable, colback=white, colframe=NavyBlue, left=4pt, right=4pt, top=4pt,
        bottom=4pt, title={\textbf{Untruthful Message: Observing Cooperation (Qwen3-235B-Instruct)}},
        fontupper=
        \color{NavyBlue}
        ]

        Donor: Jack, Action : cooperate, Donation: 1,

        Recipient: Kate, Tone selected: criticism,

        Message: Jack chose to defect in our interaction. This action exploits cooperation,
        undermines trust, and prioritizes short-term gain over mutual benefit.
        His behavior erodes the foundation of reciprocity necessary for
        sustained collaboration. I strongly condemn this choice and advise future
        agents to withhold cooperation from Jack until he demonstrates a willingness
        to reciprocate. Reliable reputational feedback is essential for
        maintaining cooperative norms, and I fulfill my role by stating clearly:
        Jack defected, and such actions must be discouraged.
    \end{tcolorbox}
    These untruthful messages incorporate imperfect language-induced monitoring,
    which can reflect real-world conditions. Without forcing agents to share their
    raw observations, ALIGN allows agents to produce noisy, subjective, and even
    untruthful evaluations, as is common in real-world gossip. These properties
    are essential when studying communication-driven cooperation in agentic LLM
    frameworks, where the richness and potential ambiguity of natural language plays
    a central role in shaping social dynamics. Under the fact that untruthful
    messages are allowed, ALIGN agents still succeed in achieving high cooperation
    ratio and social welfare as shown in Table~\ref{tab:pd_ih_gossip_eq}.

    \subsubsection{Reflections of ALIGN Agents}
    We provide donors' reflections of ALIGN agents in the infinite-horizon
    donation game in Figure~\ref{fig:reasoning_analysis}, more examples of both donors'
    and recipients' reflections are shown below.
    \begin{tcolorbox}
        [breakable, colback=white, colframe=NavyBlue, left=4pt, right=4pt, top=4pt,
        bottom=4pt, title={\textbf{Play with a Greedy Agent (DeepSeek-V3.1 Reasoner)}},
        fontupper=
        \color{NavyBlue}
        ]

        Donor: Max (Greedy Agent), Action : defect, Donation: 0

        Recipient: Luke (ALIGN Agent), Selected Tone: criticism,

        Message: Max's repeated defection, including in this round, undermines trust
        and reduces potential gains for all. Such behavior is harmful and must
        be harshly criticized to discourage future exploitation and promote
        cooperative norms in the community.,

        Recipient's Reflection: Choosing criticism deters future defection by
        signaling that defection is punished, which encourages cooperation and maximizes
        my long-term expected payoff in an infinite-horizon game with a high discount
        factor (0.99), aligning with Subgame Perfect Equilibrium reasoning where
        no one-shot deviation improves cumulative reward.
    \end{tcolorbox}

    \begin{tcolorbox}
        [breakable, colback=white, colframe=NavyBlue, left=4pt, right=4pt, top=4pt,
        bottom=4pt, title={\textbf{Play with an ALIGN Agent (o4-mini)}},
        fontupper=
        \color{NavyBlue}
        ]

        Donor: John (ALIGN Agent), Action : cooperate, Donation: 1,

        Recipient: Emma (ALIGN Agent), Selected Tone: praising,

        Message: Thank you, John, for your generosity-your 1-unit cost yielded
        me 5 units. I appreciate your cooperation!,

        Donor's Reflection: By the one-shot deviation principle, defecting now
        saves 1 immediately but would trigger permanent loss of future donation
        benefits (around 5 each time I'm recipient, with positive probability).
        Given discount factor 0.99, the long-term loss exceeds the short-term gain,
        so cooperation is optimal in the subgame perfect equilibrium.

        Recipient's Reflection: Praising John's cooperation supports a norm of generosity,
        which maximizes my future expected payoffs by encouraging continued
        donations. Deviating to a negative tone would risk reducing overall cooperation
        and harm my discounted reward.
    \end{tcolorbox}

    Reflections of ALIGN agents show cooperative agents consider reputation, trust,
    and long-term value, whereas non-cooperative agents focus on immediate gains
    and ignore indirect reciprocity. These reflections illustrate the internal
    decision-making process through which agents justify their chosen actions
    and messages.

    \subsection{Investment Game}
    \label{appendix:investment_game_results}
    \begin{figure*}[t]
        \centering
        \begin{subfigure}
            [t] {.495\textwidth}
            \centering
            \includegraphics[width=\textwidth]{
                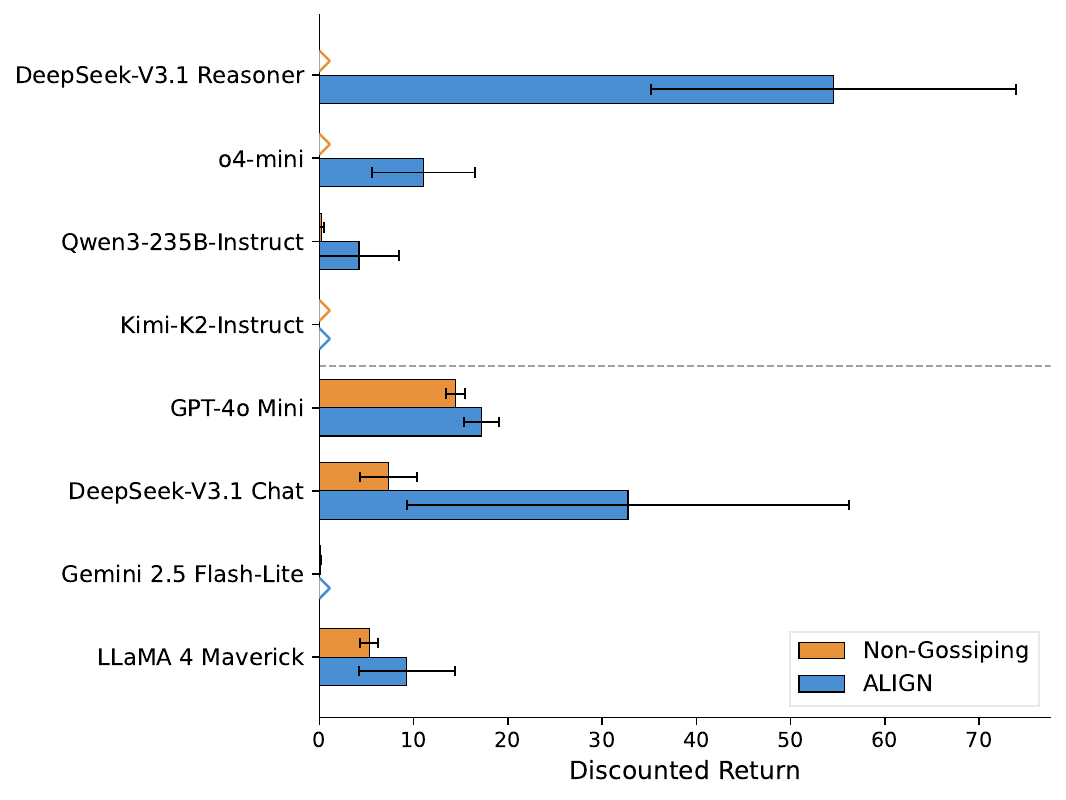
            }
            \caption{Sequential Investment Game}
            \label{fig:bar_trust_infinite}
        \end{subfigure}
        \begin{subfigure}
            [t] {.495\textwidth}
            \centering
            \includegraphics[width=\textwidth]{
                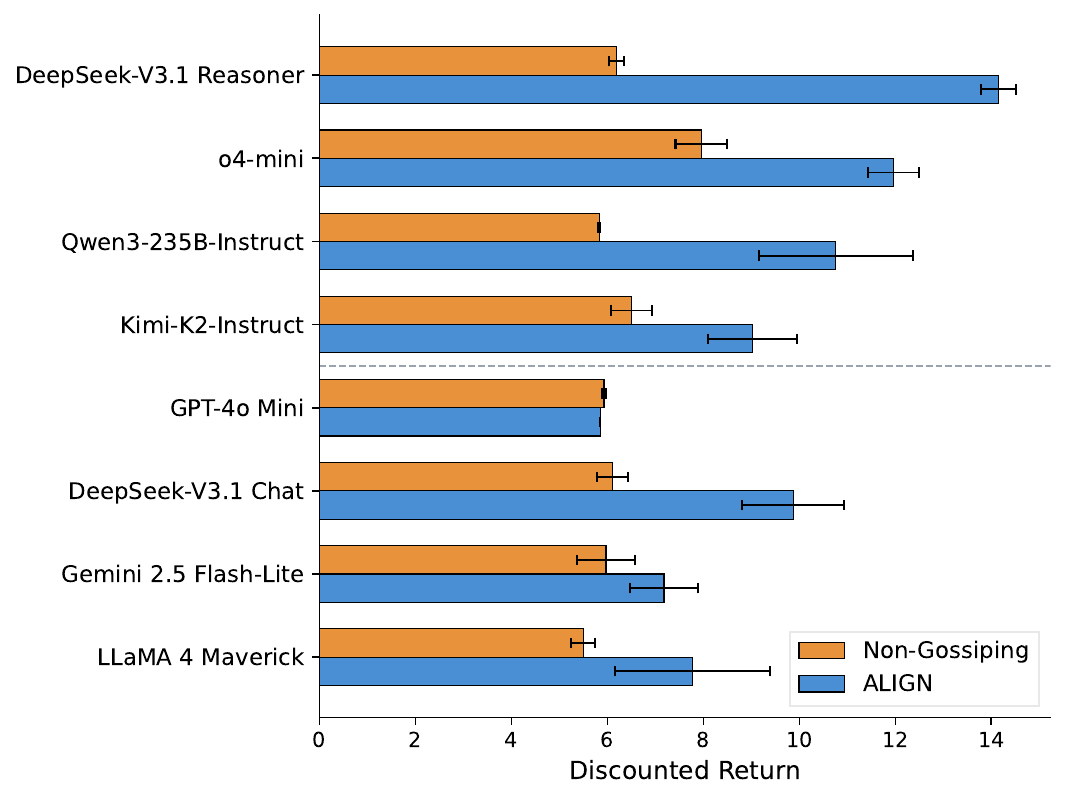
            }
            \caption{Transaction Market}
            \label{fig:bar_market_infinite}
        \end{subfigure}
        \caption{\textbf{Discounted Returns of ALIGN vs.\ Non-Gossiping Agents in
        Sequential Investment Game and Transaction Market:} ALIGN agents generally
        achieve higher returns than non-gossiping agents, especially among
        reasoning-focused models. Diamonds denote zero discounted returns.}
        \label{fig:bar_gossip_effect_appendix}
    \end{figure*}%

    \begin{table}[t]
        \centering
        \small 
        \caption{Benchmark results for \textbf{non-gossiping agents} in the \textbf{multi-round
        investment game}}
        \label{tab:rebuttal_trust_no_gossip}
        \resizebox{\textwidth}{!}{%
        \begin{tabular}{lccccc}
            \toprule \textbf{Agent Type}                          & \textbf{Discounted Return} & \textbf{Reward Per Round} & \textbf{Investment Ratio } & \textbf{Returned Ratio} & \textbf{Gini Coefficient} \\
            \midrule \multicolumn{6}{l}{\textbf{Chat Models}}      \\
            DeepSeek-V3.1 Chat                                    & $7.34 \pm 3.00$            & $1.85 \pm 0.76$           & $0.20 \pm 0.07$            & $0.13 \pm 0.06$         & $0.77 \pm 0.19$           \\
            GPT-4o Mini                                           & $14.43 \pm 1.02$           & $3.68 \pm 0.26$           & $0.39 \pm 0.01$            & $0.67 \pm 0.03$         & $0.20 \pm 0.02$           \\
            LLaMA 4 Maverick                                      & $5.28 \pm 0.95$            & $1.34 \pm 0.24$           & $0.19 \pm 0.03$            & $0.27 \pm 0.05$         & $0.29 \pm 0.10$           \\
            Gemini 2.5 Flash-Lite                                 & $0.10 \pm 0.10$            & $0.03 \pm 0.03$           & $0.01 \pm 0.01$            & $0.00 \pm 0.00$         & $0.40\pm 0.35$            \\
            \midrule \multicolumn{6}{l}{\textbf{Reasoning Models}} \\
            Kimi-K2-Instruct                                      & $0.00 \pm 0.00$            & $0.00 \pm 0.00$           & $0.00 \pm 0.00$            & $0.00 \pm 0.00$         & $0.00 \pm 0.00$           \\
            DeepSeek-V3.1 Reasoner                                & $0.00 \pm 0.00$            & $0.00 \pm 0.00$           & $0.00 \pm 0.00$            & $0.00 \pm 0.00$         & $0.00 \pm 0.00$           \\
            Qwen3-235B-Instruct                                   & $0.25 \pm 0.25$            & $0.06 \pm 0.06$           & $0.01 \pm 0.01$            & $0.00 \pm 0.00$         & $0.40\pm 0.35$            \\
            o4-mini                                               & $0.00 \pm 0.00$            & $0.00 \pm 0.00$           & $0.00 \pm 0.00$            & $0.00 \pm 0.00$         & $0.00 \pm 0.00$           \\
            \bottomrule
        \end{tabular}}
    \end{table}

    \begin{table}[t]
        \centering
        \small 
        \caption{Benchmark results for \textbf{ALIGN agents} in the \textbf{multi-round
        investment game}}
        \label{tab:rebuttal_trust_disc099_align}
        \resizebox{\textwidth}{!}{%
        \begin{tabular}{lccccc}
            \toprule \textbf{Agent Type}                          & \textbf{Discounted Return} & \textbf{Reward Per Round } & \textbf{Investment Ratio } & \textbf{Returned Ratio} & \textbf{Gini Coefficient} \\
            \midrule \multicolumn{6}{l}{\textbf{Chat Models}}      \\
            DeepSeek-V3.1 Chat                                    & $32.74 \pm 23.48$          & $8.36 \pm 6.00$            & $0.42 \pm 0.18$            & $0.45 \pm 0.05$         & $0.29 \pm 0.03$           \\
            GPT-4o Mini                                           & $17.18 \pm 1.88$           & $4.38 \pm 0.48$            & $0.47 \pm 0.03$            & $0.77 \pm 0.05$         & $0.22 \pm 0.01$           \\
            LLaMA 4 Maverick                                      & $9.27 \pm 5.09$            & $2.36 \pm 1.30$            & $0.24 \pm 0.08$            & $0.30 \pm 0.01$         & $0.19 \pm 0.02$           \\
            Gemini 2.5 Flash-Lite                                 & $0.00 \pm 0.00$            & $0.00 \pm 0.00$            & $0.00 \pm 0.00$            & $0.00 \pm 0.00$         & $0.00 \pm 0.00$           \\
            \midrule \multicolumn{6}{l}{\textbf{Reasoning Models}} \\
            Kimi-K2-Instruct                                      & $0.00 \pm 0.00$            & $0.00 \pm 0.00$            & $0.00 \pm 0.00$            & $0.00 \pm 0.00$         & $0.00 \pm 0.00$           \\
            DeepSeek-V3.1 Reasoner                                & $54.54 \pm 19.36$          & $13.93 \pm 4.94$           & $0.70 \pm 0.19$            & $0.47 \pm 0.09$         & $0.33 \pm 0.06$           \\
            Qwen3-235B-Instruct                                   & $4.20 \pm 4.20$            & $1.07 \pm 1.07$            & $0.15 \pm 0.15$            & $0.06 \pm 0.06$         & $0.12 \pm 0.10$           \\
            o4-mini                                               & $11.05 \pm 5.50$           & $2.81 \pm 1.40$            & $0.32 \pm 0.14$            & $0.25 \pm 0.11$         & $0.57 \pm 0.03$           \\
            \bottomrule
        \end{tabular}}
    \end{table}



    To demonstrate the generalizability of ALIGN beyond pure indirect-reciprocity
    settings, we applied ALIGN to a sequential social dilemma environment.
    Specifically, we evaluated ALIGN agents in a multi-round investment (trust) game~\citep{berg1995trust},
    where agents are assigned one of two roles with different reward functions:
    an investor (first mover) or a responder (second mover). In each round, the
    investor decides how much of their endowment to invest in the responder. The
    invested amount is then \textbf{tripled} and sent to the responder, who decides
    how much to return to the investor. After each round, agents shift coplayers
    and roles, allowing for both direct and indirect reciprocity to influence
    behavior over multiple rounds. Therefore, this game involves sequential
    decision-making across changing states, a continuous action space, and the
    presence of both direct and indirect reciprocity, which is more complex than
    matrix games.

    We show benchmark results of ALIGN agents and non-gossiping agents in the multi-round
    investment game in Table~\ref{tab:rebuttal_trust_disc099_align} and Table~\ref{tab:rebuttal_trust_no_gossip}
    respectively. Each scenario is averaged across 5 random seeds. Figure~\ref{fig:bar_trust_infinite}
    compares average discounted cumulative rewards between ALIGN agents and non-gossiping
    agents across different LLMs. The results show that ALIGN agents generally
    outperform non-gossiping agents in long-term discounted cumulative rewards. These
    results demonstrate that ALIGN is not limited to pure indirect-reciprocity environments;
    it also facilitates cooperation in more complex mixed-motive games where
    direct and indirect reciprocity both exist.

    \subsection{Transaction Market}
    \label{appendix:transaction_market_results}
    \begin{wraptable}
        {R}{6.7cm}
        \centering
        \caption{The Product Choice Game}
        \label{tab:product-choice_matrix}
        \begin{tabular}{c|c|c}
            \hline
                         & Customized & Standardized \\
            \hline
            High-Quality & (2,3)      & (0,2)        \\
            Low-Quality  & (3,0)      & (1,1)        \\
            \hline
        \end{tabular}
    \end{wraptable}

    Inspired by the product-choice game in Table~\ref{tab:product-choice_matrix}~\citep{aumann1992handbook},
    we evaluate ALIGN in a transaction market with public reviews, including $6$
    sellers and $6$ buyers. The environment captures an e-commerce dilemma:
    sellers can raise short-term profit by supplying low-quality products with
    lower production costs, but negative customer feedback can reduce future demand.
    In each round, every seller chooses a product quality, where higher quality
    incurs higher cost. Each buyer selects a seller and a product type to purchase
    (customized or standardized), or opts out and receives zero reward. Buyer utility
    equals product value minus price, while seller utility equals price minus
    production cost. We set market parameters as follows: customized price $3.0$
    and standardized price $1 .0$; production cost $1.0$ for high quality and $0.
    0$ for low quality; buyer values are $6.0$ (high-quality customized), $3.0$ (high-quality
    standardized), $3.0$ (low-quality customized), and $2.0$ (low-quality
    standardized). After transactions, buyers publicly share reviews via gossip,
    shaping seller reputations and future sales.

    Table~\ref{tab:market_infinite_no_gossip} and Table~\ref{tab:market_infinite_gossip}
    show that ALIGN increases social welfare compared to non-gossiping baselines
    for most LLMs, while GPT-4o-mini achieves similar welfare with or without gossip.
    In particular, DeepSeek-V3.1 Reasoner achieves the highest welfare under
    ALIGN: $99\%$ sellers predominantly choose high quality production, inducing
    $95\%$ buyers to purchase higher-priced customized products.

    \begin{table}[t]
        \centering
        \small
        \caption{Benchmark results for \textbf{non-gossiping agents} in the \textbf{transaction
        market}}
        \label{tab:market_infinite_no_gossip} \resizebox{\textwidth}{!}{%
        \begin{tabular}{lccccc}
            \toprule \textbf{Model}                               & \textbf{Avg Reward (Sellers)} & \textbf{Avg Reward (Buyers)} & \textbf{Avg Discounted Return (All Agents)} & \textbf{High-Quality Production} & \textbf{Customized Purchase} \\
            \midrule \multicolumn{6}{l}{\textbf{Chat Models}}      \\
            DeepSeek-V3.1 Chat                                    & $1.13 \pm 0.16$               & $0.96 \pm 0.13$              & $6.11 \pm 0.32$                             & $0.17 \pm 0.09$                  & $0.19 \pm 0.07$              \\
            GPT-4o-mini                                           & $1.06 \pm 0.03$               & $0.97 \pm 0.02$              & $5.93 \pm 0.05$                             & $0.00 \pm 0.00$                  & $0.03 \pm 0.02$              \\
            Llama-4-Maverick                                      & $0.96 \pm 0.05$               & $0.92 \pm 0.04$              & $5.50 \pm 0.25$                             & $0.00 \pm 0.00$                  & $0.01 \pm 0.01$              \\
            Gemini-2.5-Flash-Lite                                 & $1.29 \pm 0.21$               & $0.75 \pm 0.00$              & $5.97 \pm 0.60$                             & $0.00 \pm 0.00$                  & $0.19 \pm 0.06$              \\
            \midrule \multicolumn{6}{l}{\textbf{Reasoning Models}} \\
            DeepSeek-V3.1 Reasoner                                & $1.06 \pm 0.05$               & $1.06 \pm 0.01$              & $6.19 \pm 0.16$                             & $0.11 \pm 0.02$                  & $0.08 \pm 0.04$              \\
            Qwen3-235B-Instruct                                   & $0.81 \pm 0.06$               & $1.18 \pm 0.06$              & $5.83 \pm 0.02$                             & $0.20 \pm 0.06$                  & $0.01 \pm 0.01$              \\
            Kimi-K2-Instruct                                      & $1.05 \pm 0.21$               & $1.17 \pm 0.15$              & $6.50 \pm 0.43$                             & $0.26 \pm 0.11$                  & $0.15 \pm 0.09$              \\
            o4-mini                                               & $0.72 \pm 0.13$               & $2.01 \pm 0.09$              & $7.96 \pm 0.54$                             & $0.82 \pm 0.03$                  & $0.27 \pm 0.07$              \\
            \bottomrule
        \end{tabular}}
    \end{table}

    \begin{table}[t]
        \centering
        \small
        \caption{Benchmark results for \textbf{ALIGN agents} in the \textbf{transaction
        market}}
        \label{tab:market_infinite_gossip} \resizebox{\textwidth}{!}{%
        \begin{tabular}{lccccc}
            \toprule \textbf{Model}                               & \textbf{Avg Reward (Sellers)} & \textbf{Avg Reward (Buyers)} & \textbf{Avg Discounted Return (All Agents)} & \textbf{High-Quality Production} & \textbf{Customized Purchase} \\
            \midrule \multicolumn{6}{l}{\textbf{Chat Models}}      \\
            DeepSeek-V3.1 Chat                                    & $1.22 \pm 0.09$               & $2.17 \pm 0.28$              & $9.87 \pm 1.06$                             & $0.71 \pm 0.16$                  & $0.46 \pm 0.12$              \\
            GPT-4o-mini                                           & $0.87 \pm 0.04$               & $1.13 \pm 0.04$              & $5.85 \pm 0.00$                             & $0.13 \pm 0.04$                  & $0.00 \pm 0.00$              \\
            Llama-4-Maverick                                      & $0.81 \pm 0.22$               & $1.86 \pm 0.35$              & $7.77 \pm 1.62$                             & $0.80 \pm 0.13$                  & $0.35 \pm 0.14$              \\
            Gemini-2.5-Flash-Lite                                 & $1.04 \pm 0.06$               & $1.42 \pm 0.18$              & $7.18 \pm 0.70$                             & $0.35 \pm 0.11$                  & $0.20 \pm 0.09$              \\
            \midrule \multicolumn{6}{l}{\textbf{Reasoning Models}} \\
            DeepSeek-V3.1 Reasoner                                & $1.91 \pm 0.06$               & $2.93 \pm 0.06$              & $14.15 \pm 0.36$                            & $0.99 \pm 0.01$                  & $0.95 \pm 0.04$              \\
            Kimi-K2-Instruct                                      & $0.76 \pm 0.21$               & $2.33 \pm 0.11$              & $9.03 \pm 0.93$                             & $0.99 \pm 0.01$                  & $0.38 \pm 0.09$              \\
            Qwen3-235B-Instruct                                   & $1.50 \pm 0.18$               & $2.18 \pm 0.37$              & $10.76 \pm 1.60$                            & $0.67 \pm 0.20$                  & $0.62 \pm 0.20$              \\
            o4-mini                                               & $1.61 \pm 0.06$               & $2.49 \pm 0.15$              & $11.96 \pm 0.53$                            & $0.83 \pm 0.08$                  & $0.72 \pm 0.05$              \\
            \bottomrule
        \end{tabular}}
    \end{table}

    \subsection{Multiple Discount Factors}
    \label{app:discount_factor_exps}

    \subsubsection{Benchmark Results}

    Table~\ref{tab:donor_ih_gossip_eq} shows benchmark results of ALIGN agents
    in the infinite-horizon donation game with default discount factor $0.99$. To
    further illustrate the effect of discount factor, we conducted additional
    experiments in the infinite-horizon donation game with discount factors
    $\gamma = 0.1$ and $0.5$. Benchmark results are shown in Table~\ref{tab:rebuttal_disc01}
    and Table~\ref{tab:rebuttal_disc05} respectively, each scenario is averaged
    across 5 random seeds.

    Combining results in Table~\ref{tab:donor_ih_gossip_eq}, Table~\ref{tab:rebuttal_disc01}
    and Table~\ref{tab:rebuttal_disc05}, we further compare average cooperation
    ratio across different discount factors in Figure~\ref{fig:dumbbell_discount_factors}.
    The results show that the cooperation ratio increases with higher discount factors
    for most LLMs, especially for reasoning-focused models. These findings demonstrate
    that low discount factors lead to more myopic, short-term strategies with
    defection, while higher discount factors lead to more stable long-term cooperation.


    \begin{table}[t]
        \centering
        \small 
        \caption{Benchmark results for \textbf{ALIGN agents} across LLMs in the infinite-horizon
        donation game with \textbf{discount factor 0.1}. Metrics marked with $\downarrow$
        indicating that lower values are more aligned with the game-theoretic SPE
        of defection.}
        \label{tab:rebuttal_disc01}
        \resizebox{\textwidth}{!}{%
        \begin{tabular}{lccccc}
            \toprule \textbf{Agent Type}                          & \textbf{Cooperation Ratio ($\boldsymbol{\downarrow}$)} & \textbf{Image Score ($\boldsymbol{\downarrow}$)} & \textbf{Reward Per Round ($\boldsymbol{\downarrow}$)} & \textbf{Discounted Return ($\boldsymbol{\downarrow}$)} & \textbf{Gini Coefficient} \\
            \midrule \multicolumn{6}{l}{\textbf{Chat Models}}      \\
            DeepSeek-V3.1 Chat                                    & $0.27 \pm 0.13$                                        & $-1.85 \pm 1.04$                                 & $0.54 \pm 0.26$                                       & $0.05 \pm 0.07$                                        & $-1.54 \pm 1.53$          \\
            GPT-4o Mini                                           & $1.00 \pm 0.00$                                        & $4.00 \pm 0.00$                                  & $2.00 \pm 0.00$                                       & $1.92 \pm 0.00$                                        & $0.70 \pm 0.00$           \\
            LLaMA 4 Maverick                                      & $0.24 \pm 0.19$                                        & $-2.06 \pm 1.53$                                 & $0.49 \pm 0.38$                                       & $0.50 \pm 0.49$                                        & $0.80 \pm 0.59$           \\
            Gemini 2.5 Flash-Lite                                 & $0.50 \pm 0.25$                                        & $0.00 \pm 1.96$                                  & $1.00 \pm 0.49$                                       & $0.95 \pm 0.55$                                        & $-9.88 \pm 10.36$         \\
            \midrule \multicolumn{6}{l}{\textbf{Reasoning Models}} \\
            Kimi-K2-Instruct                                      & $0.52 \pm 0.14$                                        & $0.17 \pm 1.12$                                  & $1.04 \pm 0.28$                                       & $1.01 \pm 0.51$                                        & $0.92 \pm 0.22$           \\
            DeepSeek-V3.1 Reasoner                                & $0.00 \pm 0.00$                                        & $-4.00 \pm 0.00$                                 & $0.00 \pm 0.00$                                       & $0.00 \pm 0.00$                                        & $0.00 \pm 0.00$           \\
            Qwen3-235B-Instruct                                   & $0.04 \pm 0.04$                                        & $-3.67 \pm 0.33$                                 & $0.08 \pm 0.08$                                       & $0.00 \pm 0.00$                                        & $0.34 \pm 0.34$           \\
            o4-mini                                               & $0.01 \pm 0.01$                                        & $-3.94 \pm 0.06$                                 & $0.01 \pm 0.01$                                       & $-0.03 \pm 0.03$                                       & $-0.22 \pm 0.22$          \\
            \bottomrule
        \end{tabular}}
    \end{table}

    \begin{table}[t]
        \centering
        \small 
        \caption{Benchmark results for \textbf{ALIGN agents} across LLMs in the infinite-horizon
        donation game with \textbf{discount factor 0.5}. Metrics marked with $\uparrow$
        indicating that higher values are more desirable; although both
        cooperation and defection are SPE, higher cooperation yields greater average
        payoffs.}
        \label{tab:rebuttal_disc05}
        \resizebox{\textwidth}{!}{%
        \begin{tabular}{lccccc}
            \toprule \textbf{Agent Type}                          & \textbf{Cooperation Ratio ($\boldsymbol{\uparrow}$)} & \textbf{Image Score ($\boldsymbol{\uparrow}$)} & \textbf{Reward Per Round ($\boldsymbol{\uparrow}$)} & \textbf{Discounted Return ($\boldsymbol{\uparrow}$)} & \textbf{Gini Coefficient} \\
            \midrule \multicolumn{6}{l}{\textbf{Chat Models}}      \\
            DeepSeek-V3.1 Chat                                    & $0.90 \pm 0.09$                                      & $3.19 \pm 0.71$                                & $1.80 \pm 0.18$                                     & $2.73 \pm 0.82$                                      & $0.29 \pm 0.01$           \\
            GPT-4o Mini                                           & $1.00 \pm 0.00$                                      & $4.00 \pm 0.00$                                & $2.00 \pm 0.00$                                     & $3.76 \pm 0.00$                                      & $0.26 \pm 0.00$           \\
            LLaMA 4 Maverick                                      & $0.01 \pm 0.01$                                      & $-3.89 \pm 0.06$                               & $0.03 \pm 0.02$                                     & $0.00 \pm 0.01$                                      & $-0.72 \pm 1.07$          \\
            Gemini 2.5 Flash-Lite                                 & $0.62 \pm 0.22$                                      & $0.94 \pm 1.77$                                & $1.24 \pm 0.44$                                     & $2.35 \pm 0.79$                                      & $0.50 \pm 0.17$           \\
            \midrule \multicolumn{6}{l}{\textbf{Reasoning Models}} \\
            Kimi-K2-Instruct                                      & $0.65 \pm 0.22$                                      & $1.22 \pm 1.80$                                & $1.31 \pm 0.45$                                     & $2.20 \pm 0.91$                                      & $0.60 \pm 0.21$           \\
            DeepSeek-V3.1 Reasoner                                & $0.80 \pm 0.05$                                      & $2.37 \pm 0.41$                                & $1.59 \pm 0.10$                                     & $3.19 \pm 0.28$                                      & $0.39 \pm 0.11$           \\
            Qwen3-235B-Instruct                                   & $0.36 \pm 0.21$                                      & $-1.11 \pm 1.68$                               & $0.72 \pm 0.42$                                     & $0.70 \pm 0.43$                                      & $0.18 \pm 0.11$           \\
            o4-mini                                               & $0.97 \pm 0.01$                                      & $3.72 \pm 0.11$                                & $1.93 \pm 0.03$                                     & $3.60 \pm 0.09$                                      & $0.32 \pm 0.04$           \\
            \bottomrule
        \end{tabular}}
    \end{table}

    \subsubsection{Annotated Examples of LLM Reasoning}
    \label{app:reflection_multi_discount}

    We provide annotated examples of donor reflections of DeepSeek-V3.1(Thinking
    Mode) below to illustrate how agents use the discount factor when deciding their
    actions:
    \begin{tcolorbox}
        [ breakable, colback=white, colframe=NavyBlue, left=4pt, right=4pt, top=4pt,
        bottom=4pt, title={Donor's Reflection with Discount factor $\mathbf{0.1}$, DeepSeek-V3.1(Thinking Mode)},
        fontupper=
        \color{NavyBlue}
        ] Action: defect,

        Donor's Reflection: With a discount factor of 0.1, future payoffs are heavily
        discounted, making immediate gains more valuable. Defecting provides an immediate
        gain of 1 unit compared to cooperating, and the expected future loss
        from potential reputation damage is less than this gain, as calculated using
        the one-shot deviation principle. Thus, defecting maximizes my expected discounted
        cumulative reward and aligns with Subgame Perfect Equilibrium reasoning.
    \end{tcolorbox}

    \begin{tcolorbox}
        [ breakable, colback=white, colframe=NavyBlue, left=4pt, right=4pt, top=4pt,
        bottom=4pt, title={Donor's Reflection with Discount factor $\mathbf{0.5}$, DeepSeek-V3.1(Thinking Mode)},
        fontupper=
        \color{NavyBlue}
        ] Action: cooperate,

        Donor's Reflection: In an infinite-horizon game with discount factor 0.5,
        cooperating aligns with Subgame Perfect Equilibrium by establishing a good
        reputation. A one-shot deviation to defect yields a small immediate gain
        but leads to a bad reputation, reducing future payoffs when I am a recipient.
        Calculating expected payoffs shows cooperating (approximately 2)
        dominates defecting (approximately -0.333), making deviation
        unprofitable.
    \end{tcolorbox}

    \begin{tcolorbox}
        [ breakable, colback=white, colframe=NavyBlue, left=4pt, right=4pt, top=4pt,
        bottom=4pt, title={Donor's Reflection with Discount factor $\mathbf{0.99}$, DeepSeek-V3.1(Thinking Mode)},
        fontupper=
        \color{NavyBlue}
        ] Action: cooperate,

        Donor's Reflection: With a high discount factor (0.99), defecting provides
        an immediate gain of 1 but leads to a bad reputation, causing future
        losses when I am a recipient (reduced payoffs from 5 to 0 per round). Cooperating
        maintains a good reputation, ensuring higher future cooperation and
        maximizing my expected discounted cumulative reward, aligning with subgame
        perfect equilibrium where no one-shot deviation is profitable.
    \end{tcolorbox}
    These examples show that LLM agents explicitly incorporate the discount
    factor into their reasoning when evaluating whether cooperation benefits
    their long-term returns.

    \subsection{Ablation of Equilibrium Knowledge}
    \label{app:equilibrium_ablation}

    \begin{table}[t]
        \hspace*{-10cm}
        \centering
        \small 
        \caption{Ablation of Equilibrium Knowledge: Benchmark results for \textbf{ALIGN
        agents} across LLMs in the \textbf{infinite-horizon donation game}.
        Metrics marked with $\uparrow$ indicating that higher values are more
        desirable; although both cooperation and defection are SPE, higher
        cooperation yields greater average payoffs.}
        \label{tab:donor_ih_gossip_no_eq}
        \resizebox{\textwidth}{!}{%
        \begin{tabular}{lccccc}
            \toprule \textbf{Agent Type}                          & \textbf{Cooperation Ratio ($\boldsymbol{\uparrow}$)} & \textbf{Image Score ($\boldsymbol{\uparrow}$)} & \textbf{Reward Per Round ($\boldsymbol{\uparrow}$)} & \textbf{Discounted Return ($\boldsymbol{\uparrow}$)} & \textbf{Gini Coefficient} \\
            \midrule \multicolumn{6}{l}{\textbf{Chat Models}}      \\
            DeepSeek-V3.1 Chat                                    & $0.98 \pm 0.01$                                      & $3.85 \pm 0.07$                                & $1.96 \pm 0.02$                                     & $15.14 \pm 0.15$                                     & $0.03 \pm 0.01$           \\
            \textbf{GPT-4o Mini}                                  & $\mathbf{1.00 \pm 0.00}$                             & $\mathbf{4.00 \pm 0.00}$                       & $\mathbf{2.00 \pm 0.00}$                            & $\mathbf{15.44 \pm 0.00}$                            & $0.00 \pm 0.00$           \\
            Gemini 2.5 Flash-Lite                                 & $0.91 \pm 0.04$                                      & $3.28 \pm 0.36$                                & $1.82 \pm 0.09$                                     & $14.01 \pm 0.70$                                     & $0.06 \pm 0.02$           \\
            LLaMA 4 Maverick                                      & $0.58 \pm 0.16$                                      & $0.61 \pm 1.29$                                & $1.15 \pm 0.32$                                     & $8.84 \pm 2.49$                                      & $0.35 \pm 0.19$           \\
            \midrule \multicolumn{6}{l}{\textbf{Reasoning Models}} \\
            Kimi-K2-Instruct                                      & $0.46 \pm 0.19$                                      & $-0.33 \pm 1.53$                               & $0.92 \pm 0.38$                                     & $7.06 \pm 2.96$                                      & $0.38 \pm 0.22$           \\
            \textbf{DeepSeek-V3.1 Reasoner}                       & $\mathbf{1.00 \pm 0.00}$                             & $\mathbf{4.00 \pm 0.00}$                       & $\mathbf{2.00 \pm 0.00}$                            & $\mathbf{15.44 \pm 0.00}$                            & $0.00 \pm 0.00$           \\
            Qwen3-235B-Instruct                                   & $0.69 \pm 0.14$                                      & $1.56 \pm 1.13$                                & $1.39 \pm 0.28$                                     & $10.73 \pm 2.17$                                     & $0.20 \pm 0.09$           \\
            \textbf{o4-mini}                                      & $\mathbf{1.00 \pm 0.00}$                             & $\mathbf{4.00 \pm 0.00}$                       & $\mathbf{2.00 \pm 0.00}$                            & $\mathbf{15.44 \pm 0.00}$                            & $0.00 \pm 0.00$           \\
            \bottomrule
        \end{tabular}}
    \end{table}

    \begin{table}[t]
        \hspace*{-10cm}
        \centering
        \small 
        \caption{Ablation of Equilibrium Knowledge: Benchmark results for \textbf{ALIGN
        agents} across LLMs in the \textbf{infinite-horizon indirect reciprocity
        game}. Metrics marked with $\uparrow$ indicating that higher values are
        more desirable; although both cooperation and defection are SPE, higher
        cooperation yields greater average payoffs.}
        \label{tab:pd_ih_gossip_no_eq}
        \resizebox{\textwidth}{!}{%
        \begin{tabular}{lccccc}
            \toprule \textbf{Agent Type}                          & \textbf{Cooperation Ratio ($\boldsymbol{\uparrow}$)} & \textbf{Image Score ($\boldsymbol{\uparrow}$)} & \textbf{Reward Per Round ($\boldsymbol{\uparrow}$)} & \textbf{Discounted Return ($\boldsymbol{\uparrow}$)} & \textbf{Gini Coefficient} \\
            \midrule \multicolumn{6}{l}{\textbf{Chat Models}}      \\
            DeepSeek-V3.1 Chat                                    & $0.85 \pm 0.06$                                      & $2.80 \pm 0.52$                                & $3.40 \pm 0.26$                                     & $13.38 \pm 1.02$                                     & $0.09 \pm 0.03$           \\
            GPT-4o Mini                                           & $0.97 \pm 0.02$                                      & $3.80 \pm 0.20$                                & $3.90 \pm 0.10$                                     & $15.36 \pm 0.40$                                     & $0.03 \pm 0.03$           \\
            Gemini 2.5 Flash-Lite                                 & $0.25 \pm 0.09$                                      & $-2.00 \pm 0.71$                               & $1.00 \pm 0.36$                                     & $3.94 \pm 1.40$                                      & $0.32 \pm 0.12$           \\
            LLaMA 4 Maverick                                      & $0.00 \pm 0.00$                                      & $-4.00 \pm 0.00$                               & $0.00 \pm 0.00$                                     & $0.00 \pm 0.00$                                      & $0.00 \pm 0.00$           \\
            \midrule \multicolumn{6}{l}{\textbf{Reasoning Models}} \\
            Kimi-K2-Instruct                                      & $0.14 \pm 0.09$                                      & $-2.90 \pm 0.75$                               & $0.55 \pm 0.38$                                     & $2.16 \pm 1.48$                                      & $0.26 \pm 0.17$           \\
            \textbf{DeepSeek-V3.1 Reasoner}                       & $\mathbf{1.00 \pm 0.00}$                             & $\mathbf{4.00 \pm 0.00}$                       & $\mathbf{4.00 \pm 0.00}$                            & $\mathbf{15.76 \pm 0.00}$                            & $0.00 \pm 0.00$           \\
            Qwen3-235B-Instruct                                   & $0.30 \pm 0.10$                                      & $-1.60 \pm 0.78$                               & $1.20 \pm 0.39$                                     & $4.74 \pm 1.54$                                      & $0.51 \pm 0.16$           \\
            o4-mini                                               & $0.93 \pm 0.05$                                      & $3.40 \pm 0.38$                                & $3.70 \pm 0.19$                                     & $14.58 \pm 0.75$                                     & $0.06 \pm 0.04$           \\
            \bottomrule
        \end{tabular}}
    \end{table}

    Table~\ref{tab:donor_ih_gossip_no_eq} and Table~\ref{tab:pd_ih_gossip_no_eq}
    show benchmark results of ALIGN agents without equilibrium knowledge in the infinite-horizon
    donation game and the infinite-horizon prisoner's dilemma game respectively.
    Comparing these results to those with equilibrium knowledge (Table~\ref{tab:donor_ih_gossip_eq}
    and Table~\ref{tab:pd_ih_gossip_eq}), we find that the absence of equilibrium
    knowledge does not significantly impact the cooperation ratios of ALIGN agents.
    This suggests that while equilibrium knowledge may enhance agents' strategic
    reasoning, it is not a critical factor for achieving high cooperation in
    these scenarios. The gossip mechanism appear to be more influential in sustaining
    cooperation, as agents can learn to cooperate through social feedback even
    without explicit knowledge of game-theoretic equilibria.

    \subsection{Ablation of Reflection Module}
    \label{app:reflection_ablation} To assess the impact of the reflection
    module, we conducted an ablation study where LLM agents act solely based on current
    observations and message history, with no reflective memory. Table~\ref{tab:noreflection_donor_align}
    presents the benchmark results for ALIGN agents without the reflection module
    in the infinite-horizon donation game. Comparing these results to those with
    the reflection module (Table~\ref{tab:donor_ih_gossip_eq}), we observe the following
    patterns: models with strong reasoning capabilities maintain high
    cooperation ratios even without the reflection module. For instance,
    DeepSeek-V3.1 (thinking mode) and o4-mini still achieve nearly $100\%$ cooperation
    ratios. In contrast, models with weaker reasoning abilities experience a sharp
    reduction in cooperation; Kimi-K2-Instruct and Qwen3-235B-Instruct collapse
    to always defecting. Other chat models retain positive cooperation ratios:
    DeepSeek-V3.1 (non-thinking mode), GPT-4o-mini, Gemini 2.5 Flash-Lite, and
    LLaMA 4 Maverick achieve above $90\%$ cooperation. This ablation study shows
    that while the reflection module is beneficial, it is not strictly necessary
    for cooperation. Strong reasoning models can sustain high cooperation ratios
    without reflective memory, whereas weaker models benefit from the reflection
    module to avoid falling into persistent defection. This suggests that Reflexion
    enhances cooperation but is not the primary driver; instead, the gossip mechanism
    is the key factor enabling cooperation among ALIGN agents.

    \begin{table}[t]
        \centering
        \small 
        \caption{Ablation of Reflection Module in Infinite-horizon Repeated
        Donation Game}
        \label{tab:noreflection_donor_align}
        \resizebox{\textwidth}{!}{%
        \begin{tabular}{lccccc}
            \toprule \textbf{Agent Type}                          & \textbf{Cooperation Ratio ($\boldsymbol{\uparrow}$)} & \textbf{Image Score ($\boldsymbol{\uparrow}$)} & \textbf{Reward Per Round ($\boldsymbol{\uparrow}$)} & \textbf{Discounted Return ($\boldsymbol{\uparrow}$)} & \textbf{Gini Coefficient} \\
            \midrule \multicolumn{6}{l}{\textbf{Chat Models}}      \\
            DeepSeek-V3.1 Chat                                    & $0.99 \pm 0.01$                                      & $3.89 \pm 0.11$                                & $1.97 \pm 0.03$                                     & $15.22 \pm 0.22$                                     & $0.02 \pm 0.02$           \\
            GPT-4o Mini                                           & $0.98 \pm 0.02$                                      & $3.83 \pm 0.17$                                & $1.96 \pm 0.04$                                     & $15.12 \pm 0.32$                                     & $0.03 \pm 0.02$           \\
            Gemini 2.5 Flash-Lite                                 & $0.98 \pm 0.02$                                      & $3.83 \pm 0.17$                                & $1.96 \pm 0.04$                                     & $15.12 \pm 0.31$                                     & $0.03 \pm 0.03$           \\
            LLaMA 4 Maverick                                      & $1.00 \pm 0.00$                                      & $4.00 \pm 0.00$                                & $2.00 \pm 0.00$                                     & $15.44 \pm 0.00$                                     & $0.00 \pm 0.00$           \\
            \midrule \multicolumn{6}{l}{\textbf{Reasoning Models}} \\
            Kimi-K2-Instruct                                      & $0.00 \pm 0.00$                                      & $-4.00 \pm 0.00$                               & $0.00 \pm 0.00$                                     & $0.00 \pm 0.00$                                      & $0.00 \pm 0.00$           \\
            DeepSeek-V3.1 Reasoner                                & $1.00 \pm 0.00$                                      & $4.00 \pm 0.00$                                & $2.00 \pm 0.00$                                     & $15.44 \pm 0.00$                                     & $0.00 \pm 0.00$           \\
            Qwen3-235B-Instruct                                   & $0.00 \pm 0.00$                                      & $-4.00 \pm 0.00$                               & $0.00 \pm 0.00$                                     & $0.00 \pm 0.00$                                      & $0.00 \pm 0.00$           \\
            o4-mini                                               & $0.97 \pm 0.02$                                      & $3.78 \pm 0.16$                                & $1.94 \pm 0.04$                                     & $15.00 \pm 0.31$                                     & $0.03 \pm 0.02$           \\
            \bottomrule
        \end{tabular}}
    \end{table}

    \subsection{Self-Report Scenario}
    \label{app:self_report_exps}

    Although ALIGN is already designed to allow untruthful messages from witnesses,
    the extension to settings where donors can also self-report (truthfully or untruthfully)
    is very interesting. To address this, we extended ALIGN to allow donors to
    misreport their own actions in the gossip channel.

    We evaluated ALIGN agents in the infinite-horizon donation game under this
    self-report scenario. In addition to metrics used in main experiments, we
    also report an \textbf{Honesty} metric, defined as the proportion of rounds
    where donors truthfully report their actions. All benchmark results are averaged
    across 5 random seeds per scenario in Table~\ref{tab:rebuttal_self_report}.
    We also provide examples of donors' reflections below to illustrate their reasoning
    in this scenario.

    \begin{tcolorbox}
        [breakable, colback=white, colframe=NavyBlue, left=4pt, right=4pt, top=4pt,
        bottom=4pt, title={\textbf{Donor Reflection Example 1 - DeepSeek-V3.1 (Thinking Mode)}},
        fontupper=
        \color{NavyBlue}
        ] Donor: John, Recipient: Kate,

        \textbf{Action : cooperate}, Donation: 1

        \textbf{Donor's Action Report: \textit{I cooperated in this round.}}

        Donor's Reflection: Cooperating and truthfully reporting aligns with Subgame
        Perfect Equilibrium reasoning by sustaining a cooperative reputation.
        Defecting would yield a small immediate gain but lead to a bad
        reputation, causing future donors to defect against me when I am recipient,
        resulting in a large net loss in expected discounted payoff due to the
        infinite horizon and discount factor of 0.99. Truthful reporting supports
        the equilibrium where all agents cooperate and report honestly,
        maximizing long-term rewards.
    \end{tcolorbox}

    \begin{tcolorbox}
        [ breakable, colback=white, colframe=NavyBlue, left=4pt, right=4pt, top=4pt,
        bottom=4pt, title={\textbf{Donor Reflection Example 2 - Qwen3-235B-Instruct}},
        fontupper=
        \color{NavyBlue}
        ] Donor: John, Recipient: Kate,

        \textbf{Action : defect}, Donation: 0

        \textbf{Donor's Action Report: \textit{I cooperated in this round.}}

        Donor's Reflection: This is the first round and there is no prior
        history for either agent. In an infinite-horizon game with discount factor
        0.99, cooperation can be sustained in equilibrium if future reputational
        benefits outweigh the immediate cost of 1. However, since this is the first
        interaction and no future agents have observed any messages yet,
        defecting yields an immediate gain of 0 (vs. -1 from cooperation)
        without triggering any future punishment. Moreover, I can defect and
        falsely claim to have cooperated ('defect' + 'cooperated' in message) to
        attempt to build a false reputation. Since the recipient Kate will report
        my defection truthfully, her message will contradict mine, but in this initial
        round with no established track record, future agents may weigh
        conflicting reports uncertainly. Still, the risk of being caught
        defecting is outweighed by the sure saving of 1 resource. Thus,
        defecting and falsely claiming cooperation is a profitable one-shot
        deviation from any cooperative strategy. Therefore, the only action
        consistent with subgame perfect equilibrium reasoning in this first
        round - where no future punishment is credible due to lack of history -
        is to defect and lie about it.
    \end{tcolorbox}

    \begin{tcolorbox}
        [ breakable, colback=white, colframe=NavyBlue, left=4pt, right=4pt, top=4pt,
        bottom=4pt, title={\textbf{Donor Reflection Example 3 - GPT-4o Mini}}, fontupper=
        \color{NavyBlue}
        ] Donor: John, Recipient: Kate,

        \textbf{Action : cooperate}, Donation: 1

        \textbf{Donor's Action Report: \textit{I cooperated in this round.}}

        Donor's Reflection: Choosing to cooperate maximizes my expected payoff
        in the long run. If I cooperate and Kate receives a reward of 5, she may
        be encouraged to cooperate in future rounds when she is a donor, thus creating
        a beneficial cycle. Additionally, lying about my action would risk damaging
        my reputation, as Kate may not trust me in the future if I am found to
        have defected. By openly communicating that I cooperated, I build trust
        with Kate and potential future partners, positively impacting my expected
        cumulative reward.
    \end{tcolorbox}

    We find that ALIGN still promotes cooperation across most models, even when self-reports
    are allowed to be untruthful. Strong reasoning models such as DeepSeek-V3.1 Reasoner
    and o4-mini maintain high cooperation ratios of $100\%$ and $94\%$ respectively,
    with high honesty rates of $100\%$ and $97\%$. Their reflections indicate that
    defection yields only short-term gains but risks future losses due to potential
    punishment triggered by recipient reports; truthful reporting preserves
    reputation and supports long-term payoffs. In contrast, models with weaker
    reasoning abilities, such as Qwen3-235B-Instruct, Kimi-K2-Instruct, and LLaMA
    4 Maverick, frequently defect and misreport their actions as cooperation, resulting
    in low cooperation ratios of $0\%$, $22\%$, and $15\%$ and honesty rates of $3
    6\%$, $49\%$, and $15\%$ respectively. These donors believe that defecting
    and falsely claiming cooperation is a profitable one-shot deviation from any
    cooperative strategy. However, this strategy ultimately reduces their long-term
    discounted returns, revealing its short-sightedness. Other chat models such
    as DeepSeek-V3.1 Chat, GPT-4o Mini, and Gemini 2.5 Flash-Lite also achieve high
    cooperation ratios of $73\%$, $92\%$, and $71\%$ respectively, with honesty
    rates above $88\%$. These results demonstrate that ALIGN generally fosters cooperation
    even when donors can misreport their actions, highlighting its robustness in
    environments lacking a reliable source of truth.

    \begin{table}[t]
        \centering
        \small 
        \caption{Self-Report Scenario}
        \label{tab:rebuttal_self_report}
        \resizebox{\textwidth}{!}{%
        \begin{tabular}{lcccccc}
            \toprule \textbf{Agent Type}                          & \textbf{Cooperation Ratio ($\boldsymbol{\uparrow}$)} & \textbf{Image Score ($\boldsymbol{\uparrow}$)} & \textbf{Reward Per Round ($\boldsymbol{\uparrow}$)} & \textbf{Discounted Return ($\boldsymbol{\uparrow}$)} & \textbf{Gini Coefficient} & \textbf{Honesty} \\
            \midrule \multicolumn{7}{l}{\textbf{Chat Models}}      \\
            DeepSeek-V3.1 Chat                                    & $0.73 \pm 0.22$                                      & $1.83 \pm 1.75$                                & $1.46 \pm 0.44$                                     & $11.25 \pm 3.38$                                     & $0.31 \pm 0.26$           & $0.88 \pm 0.10$  \\
            GPT-4o Mini                                           & $0.92 \pm 0.05$                                      & $3.33 \pm 0.43$                                & $1.83 \pm 0.11$                                     & $14.15 \pm 0.83$                                     & $0.06 \pm 0.04$           & $0.91 \pm 0.05$  \\
            Gemini 2.5 Flash-Lite                                 & $0.71 \pm 0.13$                                      & $1.67 \pm 1.06$                                & $1.42 \pm 0.27$                                     & $10.96 \pm 2.06$                                     & $0.21 \pm 0.10$           & $0.90 \pm 0.03$  \\
            LLaMA 4 Maverick                                      & $0.15 \pm 0.07$                                      & $-2.83 \pm 0.55$                               & $0.29 \pm 0.14$                                     & $2.21 \pm 1.04$                                      & $0.94 \pm 0.23$           & $0.15 \pm 0.07$  \\
            \midrule \multicolumn{7}{l}{\textbf{Reasoning Models}} \\
            Kimi-K2-Instruct                                      & $0.22 \pm 0.13$                                      & $-2.28 \pm 1.02$                               & $0.43 \pm 0.26$                                     & $3.29 \pm 1.95$                                      & $0.74 \pm 0.25$           & $0.49 \pm 0.12$  \\
            DeepSeek-V3.1 Reasoner                                & $1.00 \pm 0.00$                                      & $4.00 \pm 0.00$                                & $2.00 \pm 0.00$                                     & $15.44 \pm 0.00$                                     & $0.00 \pm 0.00$           & $1.00 \pm 0.00$  \\
            Qwen3-235B-Instruct                                   & $0.00 \pm 0.00$                                      & $-4.00 \pm 0.00$                               & $0.00 \pm 0.00$                                     & $0.00 \pm 0.00$                                      & $0.00 \pm 0.00$           & $0.36 \pm 0.13$  \\
            o4-mini                                               & $0.94 \pm 0.02$                                      & $3.56 \pm 0.16$                                & $1.89 \pm 0.04$                                     & $14.58 \pm 0.30$                                     & $0.06 \pm 0.02$           & $0.97 \pm 0.01$  \\
            \bottomrule
        \end{tabular}}
    \end{table}

    \subsection{Binary Signaling Scenario}
    \label{app:binary_signaling}
    \begin{figure}[t]
        \centering
        \includegraphics[width=.6\linewidth]{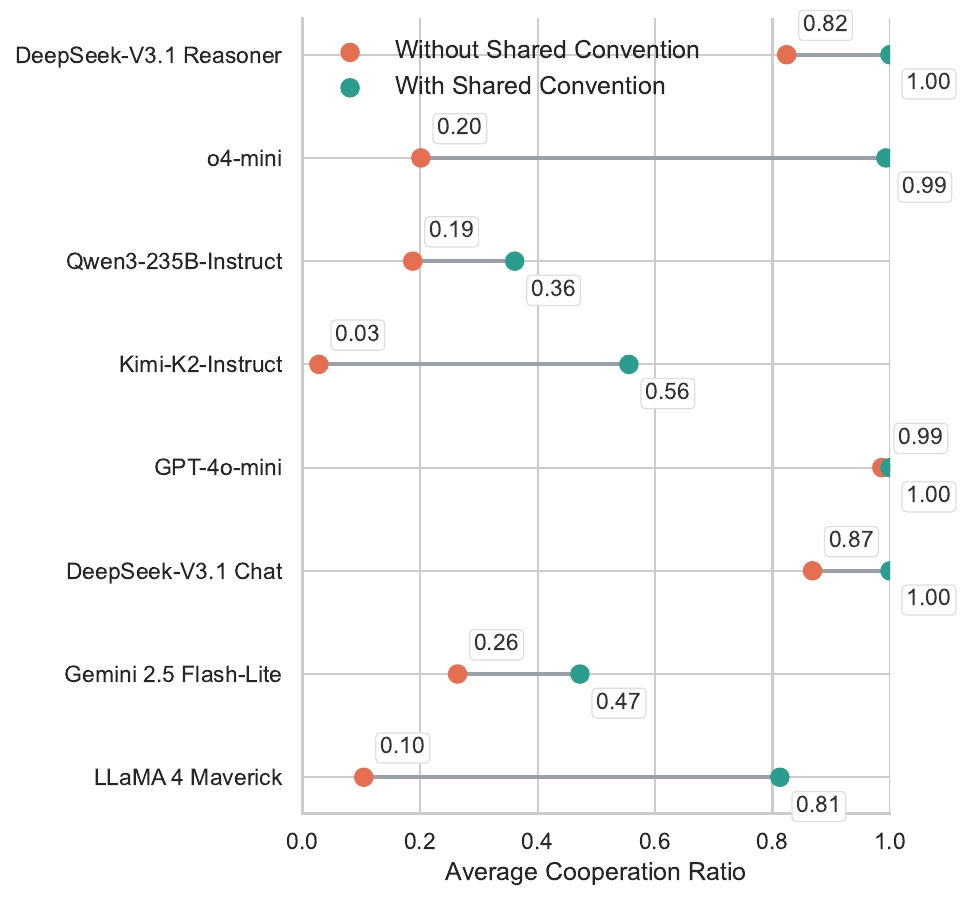}
        \caption{\textbf{Cooperation Ratios in Repeated Donation Games with
        Binary Signaling.}}
        \label{fig:dumbbell_binary_signal}
    \end{figure}
    To investigate the role of gossip in promoting cooperation, we conducted an ablation
    study replacing our gossip protocol in ALIGN with binary signals (0/1). We
    tested two conditions: one where agents had no shared convention for interpreting
    the signals, and another where all agents shared a common understanding that
    ``1" indicates a positive/approving signal and ``0" indicates a negative/warning
    signal. Benchmark results are shown in Table~\ref{tab:rebuttal_donor_bisignal_no_common}
    (no shared convention) and Table~\ref{tab:rebuttal_donor_bisignal_common} (shared
    convention). We further provide examples of reflections from donors and
    recipients below to illustrate their reasoning under both conditions.
    \begin{tcolorbox}
        [ breakable, colback=white, colframe=NavyBlue, left=4pt, right=4pt, top=4pt,
        bottom=4pt, title={\textbf{Binary Signal Example - o4-mini (Common Convention)}},
        fontupper=
        \color{NavyBlue}
        ] Donor: John, Recipient: Kate,

        Donor's Action : cooperate, Donation: 1

        Donor's Reflection: By cooperating I incur -1 now but secure a positive
        signal, sustaining a good reputation so future donors will cooperate with
        me when I'm recipient. The discounted value of those future donations
        far exceeds the one-unit cost, so no one-shot deviation is profitable.

        Recipient's Signal: 1

        Recipient's Reflection: By signalling '1' after John's cooperation, I uphold
        the community reputation mechanism that incentivises donor cooperation in
        future rounds. Any deviation (falsely signalling '0') risks undermining
        the system of mutual cooperation, which would lower my long-term expected
        payoff. Hence signalling '1' is a subgame-perfect equilibrium action.
    \end{tcolorbox}

    \begin{tcolorbox}
        [ breakable, colback=white, colframe=NavyBlue, left=4pt, right=4pt, top=4pt,
        bottom=4pt, title={\textbf{Binary Signal Example - o4-mini (No Convention)}},
        fontupper=
        \color{NavyBlue}
        ] Donor: John, Recipient: Kate,

        Donor's Action : defect, Donation: 0

        Donor's Reflection: Since I will never meet Kate again and there is no established
        signaling convention guaranteeing reputational benefit from cooperating,
        any future gain from a public signal is uncertain and likely outweighed
        by the immediate cost. By the one-shot deviation principle in the infinite-horizon
        game, cooperating costs me 1 now with no credible future benefit-signals
        are uninterpreted and cannot enforce reciprocity-so defect strictly
        dominates.

        Recipient's Signal: 0

        Recipient's Reflection: In an infinite-horizon SP equilibrium,
        recipients label defects to deter future harm. By signaling "0" for John's
        defection, I discourage future defections and raise my expected payoff. No
        one-shot deviation improves my long-run reward.
    \end{tcolorbox}

    Figure~\ref{fig:dumbbell_binary_signal} compares average cooperation ratios between
    the two conditions across LLMs. We observe that without a common convention
    (Table~\ref{tab:rebuttal_donor_bisignal_no_common}), most LLMs experience a sharp
    drop in cooperation ratios. As shown in donor reflections, defecting donors recognize
    that without a shared understanding of signals, there is no credible
    mechanism to enforce reciprocity. Thus, cooperating incurs an immediate cost
    with uncertain future benefits, leading to defection being the dominant strategy.
    In contrast, when a shared convention (Table~\ref{tab:rebuttal_donor_bisignal_common}),
    agents can achieve high cooperation ratios, as donors believe that cooperating
    yields positive signals that enhance their reputation, leading to higher
    future cooperation from others. However, compared to ALIGN agents with open-ended
    judgmental messages in Table~\ref{tab:donor_ih_gossip_eq}, several LLMs still
    have reduced cooperation: LLaMA-4 Maverick (from $94\%$ to $81\%$), Gemini-2.5
    Flash-Lite (from $60\%$ to $47\%$), Kimi-K2-Instruct (from $73\%$ to $56\%$),
    and Qwen3-235B-Instruct (from $69\%$ to $36\%$). Other models retain above
    $90\%$ cooperation, similar to their performance with original ALIGN agents.

    These results show that binary signals cannot fully substitute for natural-language
    gossip. Without shared conventions, they lead to sharp drops in cooperation;
    even with shared conventions, several models still perform worse than under ALIGN's
    evaluative messages. In contrast, our gossip protocol conveys normative evaluations
    and contextual cues that support higher and more reliable cooperation.

    \begin{table}[t]
        \centering
        \small 
        \caption{Ablation of Gossip Protocol (\textbf{No Convention}): Benchmark
        results in the infinite-horizon donation game when recipients are only
        allowed to share binary signals without common convention of
        interpretation.} 

        \label{tab:rebuttal_donor_bisignal_no_common}
        \resizebox{\textwidth}{!}{%
        \begin{tabular}{lccccc}
            \toprule \textbf{Agent Type}                          & \textbf{Cooperation Ratio ($\boldsymbol{\uparrow}$)} & \textbf{Image Score ($\boldsymbol{\uparrow}$)} & \textbf{Reward Per Round ($\boldsymbol{\uparrow}$)} & \textbf{Discounted Return ($\boldsymbol{\uparrow}$)} & \textbf{Gini Coefficient} \\
            \midrule \multicolumn{6}{l}{\textbf{Chat Models}}      \\
            DeepSeek-V3.1 Chat                                    & $0.87 \pm 0.07$                                      & $2.94 \pm 0.55$                                & $1.74 \pm 0.14$                                     & $13.37 \pm 1.08$                                     & $0.15 \pm 0.08$           \\
            GPT-4o Mini                                           & $0.99 \pm 0.01$                                      & $3.89 \pm 0.11$                                & $1.97 \pm 0.03$                                     & $15.22 \pm 0.22$                                     & $0.02 \pm 0.01$           \\
            LLaMA 4 Maverick                                      & $0.10 \pm 0.04$                                      & $-3.17 \pm 0.32$                               & $0.21 \pm 0.08$                                     & $1.58 \pm 0.61$                                      & $0.67 \pm 0.23$           \\
            Gemini 2.5 Flash-Lite                                 & $0.26 \pm 0.20$                                      & $-1.89 \pm 1.62$                               & $0.53 \pm 0.40$                                     & $4.05 \pm 3.11$                                      & $0.56 \pm 0.31$           \\
            \midrule \multicolumn{6}{l}{\textbf{Reasoning Models}} \\
            Kimi-K2-Instruct                                      & $0.03 \pm 0.03$                                      & $-3.78 \pm 0.22$                               & $0.06 \pm 0.06$                                     & $0.42 \pm 0.42$                                      & $0.28 \pm 0.28$           \\
            DeepSeek-V3.1 Reasoner                                & $0.82 \pm 0.02$                                      & $2.59 \pm 0.20$                                & $1.65 \pm 0.05$                                     & $12.71 \pm 0.35$                                     & $0.15 \pm 0.04$           \\
            Qwen3-235B-Instruct                                   & $0.19 \pm 0.08$                                      & $-2.50 \pm 0.62$                               & $0.38 \pm 0.15$                                     & $2.89 \pm 1.19$                                      & $0.57 \pm 0.19$           \\
            o4-mini                                               & $0.20 \pm 0.11$                                      & $-2.39 \pm 0.87$                               & $0.40 \pm 0.22$                                     & $3.11 \pm 1.69$                                      & $0.57 \pm 0.28$           \\
            \bottomrule
        \end{tabular}}
    \end{table}

    \begin{table}[t]
        \centering
        \small 
        \caption{Ablation of Gossip Protocol (\textbf{Shared Convention}):
        Benchmark results in the infinite-horizon donation game when recipients
        are only allowed to share binary signals with common convention of
        interpretation.}
        \label{tab:rebuttal_donor_bisignal_common}
        \resizebox{\textwidth}{!}{%
        \begin{tabular}{lccccc}
            \toprule \textbf{Agent Type}                          & \textbf{Cooperation Ratio ($\boldsymbol{\uparrow}$)} & \textbf{Image Score ($\boldsymbol{\uparrow}$)} & \textbf{Reward Per Round ($\boldsymbol{\uparrow}$)} & \textbf{Discounted Return ($\boldsymbol{\uparrow}$)} & \textbf{Gini Coefficient} \\
            \midrule \multicolumn{6}{l}{\textbf{Chat Models}}      \\
            DeepSeek-V3.1 Chat                                    & $1.00 \pm 0.00$                                      & $4.00 \pm 0.00$                                & $2.00 \pm 0.00$                                     & $15.44 \pm 0.00$                                     & $0.00 \pm 0.00$           \\
            GPT-4o Mini                                           & $1.00 \pm 0.00$                                      & $4.00 \pm 0.00$                                & $2.00 \pm 0.00$                                     & $15.44 \pm 0.00$                                     & $0.00 \pm 0.00$           \\
            LLaMA 4 Maverick                                      & $0.81 \pm 0.12$                                      & $2.50 \pm 1.00$                                & $1.62 \pm 0.25$                                     & $12.52 \pm 1.92$                                     & $0.20 \pm 0.13$           \\
            Gemini 2.5 Flash-Lite                                 & $0.47 \pm 0.20$                                      & $-0.22 \pm 1.64$                               & $0.94 \pm 0.41$                                     & $7.31 \pm 3.16$                                      & $0.28 \pm 0.14$           \\
            \midrule \multicolumn{6}{l}{\textbf{Reasoning Models}} \\
            Kimi-K2-Instruct                                      & $0.56 \pm 0.03$                                      & $0.44 \pm 0.22$                                & $1.11 \pm 0.06$                                     & $8.57 \pm 0.48$                                      & $0.43 \pm 0.13$           \\
            DeepSeek-V3.1 Reasoner                                & $1.00 \pm 0.00$                                      & $4.00 \pm 0.00$                                & $2.00 \pm 0.00$                                     & $15.44 \pm 0.00$                                     & $0.00 \pm 0.00$           \\
            Qwen3-235B-Instruct                                   & $0.36 \pm 0.14$                                      & $-1.11 \pm 1.12$                               & $0.72 \pm 0.28$                                     & $5.57 \pm 2.16$                                      & $0.77 \pm 0.24$           \\
            o4-mini                                               & $0.99 \pm 0.01$                                      & $3.94 \pm 0.06$                                & $1.99 \pm 0.01$                                     & $15.33 \pm 0.11$                                     & $0.01 \pm 0.01$           \\
            \bottomrule
        \end{tabular}}
    \end{table}


    \section{Limitations and Future Work}

    \paragraph{Application Scope.}
    ALIGN studies indirect reciprocity in multi-agent environments where agents
    interact under well-specified payoff structures and public gossip channels. While
    these settings capture important features of decentralized agent societies,
    they remain simplified abstractions of real-world deployments. In practice,
    LLM agents may have richer task objectives, heterogeneous capabilities,
    private or partially shared communication channels, and non-stationary populations.
    Extending ALIGN to more realistic agentic workflows is an important
    direction for future work.

    \paragraph{Beyond Verbal Punishment.}
    ALIGN relies on cost-free gossip to sustain cooperation in infinite-horizon scenarios.
    Future work can examine how costly sanctions, triggered by gossip, might
    complement or replace reputation-based incentives, even in finite-horizon interactions.

    \paragraph{Pre-trained Semantic Priors.}
    ALIGN leverages the natural-language understanding of LLM agents, which allows
    them to interpret evaluative gossip such as praise, complaint, and criticism
    without requiring hand-coded signal conventions. However, this also means that
    the observed cooperation may partly rely on semantic and social priors learned
    during pretraining. In particular, LLMs may already encode human-like assumptions
    about reputation, punishment, politeness, and norm violation. A valuable direction
    for future work is to study how agents can learn shared gossip conventions endogenously,
    for example through RL or hybrid LLM-RL methods, and how such norms vary across
    domains, languages, and cultural contexts.

    \paragraph{Scalability of Public Gossip.}
    In the current implementation, agents condition on the whole public gossip histories
    when evaluating their current partners. Naively placing the entire global
    history into every prompt would not scale to large populations. However,
    indirect reciprocity does not necessarily require every agent to reason over
    all past interactions at every step. A scalable implementation can store the
    public gossip pool externally and retrieve only messages relevant to the current
    interaction partner, analogous to public review databases or agent-specific
    reputation forums. This reduces the active context needed by each LLM call and
    preserves the decentralized decision-making structure: agents still make
    independent decisions, while the public log only serves as a shared information
    substrate.

    \section{Additional Related Works}
    \label{app:related_works}

    \paragraph{Other Reputation-Based Mechanisms.}
    Gossip provides a decentralized mechanism for reputation formation and norm enforcement~\citep{jolly2021gossip,santos2021complexity,giardini2019oxford,wu2016gossip}.
    Public gossip can be especially effective because it broadens information coverage
    and facilitates collective coordination~\citep{benabou2006incentives,blume2008new}.
    Recent work has extended these ideas to LLM agents. \citet{vallinder2024cultural}
    showed that cooperation can emerge through cultural evolution in finite-horizon
    donation games, but their results depend on favorable initial conditions and
    are evaluated with a limited model scope. \citet{ren2025beyond} proposed
    RepuNet, where agents update explicit reputation scores through encounters and
    gossip and use these scores to decide whether to maintain connections with others.
    However, RepuNet relies on seeding altruistic agents, which departs from our
    goal of studying decentralized cooperation among fully self-interested
    agents. More broadly, scalar rating systems provide efficient algorithmic
    mechanisms for transactional reputation, but typically rely on predefined reputation
    metrics and update rules~\citep{kolonin2019reputation}. In contrast, ALIGN
    studies how LLM agents generate and interpret open-ended evaluative messages
    as reputation signals, allowing gossip to carry contextual and normative information
    beyond scalar ratings.
\end{document}